\newtheorem{theorem}{Theorem}
\begin{document}

\title{RelaxNet: A structure-preserving neural network to approximate the Boltzmann collision operator}

\author{Tianbai Xiao\corref{cor}}
\ead{tianbaixiao@gmail.com}

\author{Martin Frank}

\address{Karlsruhe Institute of Technology, Karlsruhe, Germany}
\cortext[cor]{Corresponding author}

\begin{abstract}

The extremely high dimensionality and nonlinearity in the Boltzmann equation bring tremendous difficulties to the study of rarefied gas dynamics.
This paper addresses a neural network-based surrogate model that provides a structure-preserving approximation for the fivefold collision integral.
The idea and notion originate from the similarity in structure between the BGK-type relaxation model and residual neural network (ResNet) when a particle distribution function is treated as the input to the neural network function.
Therefore, we extend the ResNet architecture and construct what we call the relaxation neural network (RelaxNet).
Specifically, two feed-forward neural networks with physics-informed connections and activations are introduced as building blocks in RelaxNet, which provide bounded and physically realizable approximations of the equilibrium distribution and velocity-dependent relaxation time respectively.
The evaluation of the collision term is thus significantly accelerated due to the fact that the convolution in the fivefold integral is replaced by tensor multiplication in the neural network.
We fuse the mechanical advection operator and the RelaxNet-based collision operator into a unified model named the universal Boltzmann equation (UBE).
We prove that UBE preserves the key structural properties in a many-particle system, i.e., positivity, conservation, invariance, H-theorem, and correct fluid dynamic limit.
These properties promise that RelaxNet is superior to strategies that naively approximate the right-hand side of the Boltzmann equation using a machine learning model.
The implementation of RelaxNet and training strategy are demonstrated.
The construction of the RelaxNet-based UBE and its solution algorithm are presented in detail.
Several numerical experiments, where the ground-truth datasets for the supervised learning task are produced by the Shakhov model, velocity-dependent $\nu$-BGK model, and the full Boltzmann equation, are investigated.
The capability of the current approach for simulating non-equilibrium flow physics is validated through excellent in- and out-of-distribution performance.
The open-source codes to reproduce the numerical results are available under the MIT license \cite{xiao2021kinetic}.

\end{abstract}

\begin{keyword}
kinetic theory, computational fluid dynamics, scientific machine learning, artificial neural network
\end{keyword}

\maketitle

\begin{table}
	\centering
	\caption{Nomenclature.}
	\begin{tabular*}{16cm}{lll}
		\hline
		$t$, $\mathbf x$, $\mathbf v$ & time, space, and particle velocity variables \\
		$f$ & particle distribution function \\
		$\mathcal Q$, $\mathcal A$ & collision and advection operators of the Boltzmann equation \\
		$\nu$ & relaxation frequency \\
		$\boldsymbol\psi$ & collision invariants \\
		$H$ & function used in the H-theorem \\
		$\mathcal M$ & Maxwellian distribution function \\
		$\rho$, $\mathbf V$, $T$, $E$ & density, bulk velocity, temperature, and energy \\
		$\mathbf c$ & peculiar velocity \\
		$m$ & molecular mass \\
		$k$ & Boltzmann constant \\
		$\mathcal E$ & equilibrium state in the relaxation model \\
		$\sigma$ & activation function \\
		$\mathcal E$-net & subnet in RelaxNet to approximate equilibrium state \\
		$\tau$-net & subnet in RelaxNet to approximate relaxation time \\
		$D$ & dimension \\
		$\tau$ & relaxation time \\
		$\bar\nu$, $\bar\tau$ & mean relaxation frequency and time \\
		$\mu$ & viscosity \\
		$p$ & pressure \\
		$\mathcal M^+$ & correction term for equilibrium distribution \\
		$\boldsymbol\alpha$, $\boldsymbol\beta$ & outputs of $\mathcal E$-net and $\tau$-net \\
		ELU, ReLU & Exponential Linear Unit and Rectified Linear Unit functions \\
		$\theta$ & trainable parameters \\
		$\mathrm{NN}_\theta$ & RelaxNet function \\
		$\theta_1$, $\theta_2$ & The parts of $\theta$ belonging to $\mathcal E$-net and $\tau$-net \\
		$C$ & cost function \\
		$\lambda$,\ $\xi$ & regularization strengths \\
		$\varepsilon$ & minimum value of cost function \\
		$\hat{\boldsymbol\alpha}$ & Coefficients of Maxwellian in summation form of collision invariants \\
		$\tilde{\boldsymbol\alpha}$ & product of ${\boldsymbol\alpha}$ and $\hat{\boldsymbol\alpha}$ \\
		$\mathbf P$, $\mathbf q$ & pressure tensor and heat flux \\
		$\mathbf m$ & moment variables \\
		$\boldsymbol\phi$ & basis functions \\
		$N_m$ & polynomial degree of $\boldsymbol\phi$ \\
		$\boldsymbol \gamma$ & Lagrange multipliers of the entropy closure problem \\
		$\mathscr H$ & Heaviside step function \\
		$N_i$ & number of samples \\
		$f_\mathrm{ref}$, $\mathcal Q_\mathrm{ref}$ & referenced distribution function and collision term in the dataset \\
		$N_h$ & number of hidden layers \\
		CFL & Courant–Friedrichs–Lewy number \\
		\hline
	\end{tabular*}
	\label{table:nomenclature}
\end{table}

\newpage

\section{Introduction}

The advances in machine learning are increasingly incorporated in computational physics to enable versatile data-driven modeling and simulation.
Interestingly, not so long ago, many were convinced that there are significant differences between scientific computing and machine learning tasks.
For example, scientific computing generally processes smaller but more computationally intensive amounts of data.
These data from different locations are expected to be linked by physical principles, whereas classical machine learning mainly deals with discrete and localized data.
However, recent progress, e.g., physics-informed neural networks \cite{raissi2019physics}, suggests that there is ample room for building data-efficient and physics-enhanced machine learning approaches for scientific applications at the intersection of classical numerical methods.

Among possible applications, a key use of scientific machine learning (SciML) is to help establish reliable physical models.
This approach is especially crucial when first-principle models are missing, e.g.,
fitting potential-energy hypersurfaces whose topographical features are sufficiently close to those of real but unknown surfaces such that the simulation results from molecular dynamics (MD) or Monte-Carlo (MC) methods are experimentally meaningful \cite{raff2012neural,zhang2018deep}.
Another promising direction is to seek simplified models for high-dimensional and nonlinear terms.
An appropriate surrogate model can significantly reduce the dimensionality and complexity of differential and integral equations, thereby improving computational efficiency while maintaining satisfactory accuracy \cite{karniadakis2021physics,rackauckas2020universal}.

The SciML-integrated model is expected to be as universally accurate as possible to the ground-truth physics.
It should state fundamental physical laws, e.g., the conservation of mass in fluid dynamic equations.
It is desirable to preserve the structural properties under physical constraints, e.g., symmetry, invariance, and non-decreasing entropy.
Different from the brute-force solution, this requires us to design physics-oriented architectures, parameters, and connections in the machine learning model, or seek efficient regularization approaches \cite{greydanus2019hamiltonian,bekkers2019b,shuaibi2021rotation,schotthofer2022structure}.
Ideally, one would like to perform a small set of experiments under idealized situations and obtain a model that works under more general conditions \cite{han2020integrating}.
In other words, we expect the trained model to extrapolate well for both in- and out-of-distribution cases \cite{xu2020neural} and the generalization gap to be as small as possible \cite{shen2021towards}.
This further requires that the model should have clear physical meaning and interpretability rather than pure black-box nature.

In this paper, we focus on the kinetic theory of gases.
Lying at its core, the Boltzmann equation describes the evolution of a many-particle system in a statistical manner,
which serves as a cornerstone of many hydrodynamic and thermodynamic theories \cite{torrilhon2016modeling,eu1992kinetic}.
The Boltzmann equation is an integro-differential equation defined in a high-dimensional phase space.
The extremely high dimensionality and nonlinearity pose a notorious challenge for theoretical analysis and numerical simulation.
Starting from the introduction of the relaxation-type model \cite{bhatnagar1954model}, tremendous efforts have been made to seek a sufficiently accurate simplified model of the Boltzmann equation.
To further improve the accuracy in highly dissipative regions and preserve the correct fluid dynamic limit, without changing the form of the relaxation model, explorations have been conducted to modify the objective equilibrium function \cite{holway1966new,shakhov1968generalization} and the relaxation frequency \cite{mieussens2004numerical,haack2021consistent}.
Although these improved models can provide the correct Prandtl number, the accuracy in describing strongly non-equilibrium flows is less satisfactory \cite{shen2006rarefied}.

The burgeoning scientific machine learning offers an alternative for establishing simplified models of the Boltzmann equation.
The existence of the multifold integral term precludes direct use of automatic differentiation-based techniques \cite{raissi2019physics,han2018solving}.
Preliminary attempts have been made to build simplified surrogate models for the Boltzmann collision integral.
The underlying techniques of related work involve neural differential equations \cite{xiao2021using}, reduced order modeling \cite{alekseenko2022fast}, and spectral representations \cite{miller2022neural}.
Despite the good results obtained by these methods, very limited proof has been provided showing that the numerical method preserves the key structures, e.g., the H-theorem \cite{chapman1970mathematical}, of a many-particle system.
Therefore, the preservation of such physical constraints must rely on the quality of training and the reliability of the dataset.
It is notoriously known that both tasks are difficult to be perfect, and the generalization performance of machine learning models based on this paradigm can be questionable.

In this paper, we exploit the structural similarity of the BGK-type relaxation model and the residual neural network (ResNet) to construct the relaxation neural network (RelaxNet), which plays as a surrogate model for the Boltzmann collision integral with quantitative interpretability.
Specifically, two feed-forward neural networks with physics-informed connections and bounded activation functions are introduced as building blocks to provide physically realizable approximations of the parameterized equilibrium distribution and velocity-dependent relaxation time.
The evaluation of collision term is thus significantly accelerated since the computation of convolutions in the fivefold integral is replaced by the tensor multiplication in the neural network.
The advection operator (left-hand side) of the original Boltzmann equation and the neural network-based collision term are fused into a trainable framework, i.e., the so-called universal Boltzmann equation (UBE) \cite{rackauckas2020universal,xiao2021using}.
We prove that the RelaxNet-based UBE preserves key structural properties.
The positivity, invariance, correct fluid dynamic limit, and hyperbolicity of advection operator hold independent of trainable parameters.
The conservation and H-theorem are preserved up to the training error.
The implementation of the RelaxNet-based UBE, training strategy, and solution algorithm are presented in detail.

The paper is organized as follows.
In Section 2, we introduce some fundamental concepts in the kinetic theory of gases.
Section 3 presents the idea and architecture of RelaxNet as well as the training method.
Section 4 describes the methodology for generating datasets for training and testing.
Section 5 details the numerical solution algorithm for the RelaxNet-based universal Boltzmann equation.
Section 6 contains the numerical experiments for both spatially homogeneous and inhomogeneous cases to validate the current methodology. 
The last section is the conclusion.
The nomenclature of this paper can be found in Table \ref{table:nomenclature}.

\section{Kinetic Theory}\label{sec:theory}

The kinetic theory of gases describes the evolution of a many-particle system in terms of the particle distribution function $f(t,\mathbf x,\mathbf v)$, where $t\in \Gamma \subseteq \mathbb R^+$ is time variable, $\mathbf x \in \mathcal D \subseteq \mathbb R^3$ is space variable, and $\mathbf v \in \mathbb R^3$ is particle velocity..
For dilute monatomic gas in the absence of external force, the Boltzmann equation can be written as
\begin{equation}
    \partial_t f + \mathbf v \cdot \nabla_\mathbf x f = \mathcal Q(f,f) = \int_{\mathbb{R}^{3}} \int_{\mathbb S^{2}}\left[f\left(\mathbf{v}^{\prime}\right) f\left(\mathbf{v}_{*}^{\prime}\right)-f(\mathbf{v}) f\left(\mathbf{v}_{*}\right)\right] \mathcal{B}(\cos \theta, g) d \mathbf \Omega d \mathbf{v}_{*},
\label{eqn:boltzmann}
\end{equation}
where $\{\mathbf v, \mathbf v_*\}$ denotes the pre-collision velocities of two classes of colliding particles, and $\{\mathbf v', \mathbf v_*'\}$ is the corresponding post-collision velocities.
The collision kernel $\mathcal{B}(\cos \theta, g)$ measures the probability of collisions in different directions, where $\theta$ is the deflection angle and $g = |\mathbf g| = |\mathbf v - \mathbf v_*|$ is the magnitude of relative pre-collision velocity.
The solid angle $\mathbf \Omega$ denotes the unit vector along the relative post-collision velocity $\mathbf v' - \mathbf v_*'$, and the deflection angle satisfies the relation $\theta=\mathbf \Omega \cdot \mathbf g / g$.
Note that if we define the collision frequency as
\begin{equation}
    \nu(\mathbf v)=\int_{\mathbb{R}^{3}} \int_{\mathbb{S}^{2}} f\left(\mathbf v_{*} \right) \mathcal B\left(\cos \theta,g\right) d \mathbf \Omega d \mathbf v_{*},
    \label{eqn:collision frequency}
\end{equation}
the collision term in Eq.(\ref{eqn:boltzmann}) can be written as a combination of gain and loss, i.e.,
\begin{equation}
\begin{aligned}
    &\mathcal Q(f,f) = \mathcal Q^+(f,f) + \mathcal Q^-(f), \\
    &\mathcal Q^+ = \int_{\mathbb{R}^{3}} \int_{\mathbb S^{2}} f\left(\mathbf{v}^{\prime}\right) f\left(\mathbf{v}_{*}^{\prime}\right) \mathcal{B}(\cos \theta, g) d \mathbf \Omega d \mathbf{v}_{*}, \quad
    \mathcal Q^- = \nu(\mathbf v) f(\mathbf v).
\end{aligned}
\label{eqn:boltzmann gainloss}
\end{equation}

The Boltzmann equation possesses structural properties that reflect the physical constraints of a many-particle system.
Such structure is manifested in the properties of the collision operator $\mathcal Q$ and the advection operator $\mathcal A = \partial_t + \mathbf v \cdot \nabla_{\mathbf x}$, and it plays a key role in design of the analytical and numerical solution algorithms of the Boltzmann equation.
These structural properties can be highlighted as follows \cite{alldredge2019regularized}.

\begin{enumerate}
  \item[(1)] \textit{Invariant range}:
  There exists a set $B \subseteq [0, \infty )$ consistent with the physical bounds of $f$ such that $\mathrm{Range}(f(t, \cdot , \cdot )) \subseteq B$ if $\mathrm{Range}(f(0, \cdot , \cdot )) \subseteq B$. A minimum requirement for the invariant range is $f \geq 0$ since $f$ represents the density of particles.
  \item[(2)] \textit{Conservation}: There exists the collision invariants $\boldsymbol\psi=(1,\mathbf v,\mathbf v^2/2)^T$ such that
  \begin{equation}
    \langle \boldsymbol\psi \mathcal Q(g,g) \rangle = 0, \quad \forall g\in \mathrm{Dom}(\mathcal Q),
  \end{equation}
  where $\langle\cdot\rangle=\int_{\mathbb R^3} \cdot d\mathbf v$ denotes the integral over particle velocity space.
  This property leads to the conservation laws
  \begin{equation}
      \partial_t \langle \boldsymbol\psi f \rangle + \nabla_\mathbf x \cdot \langle \mathbf v \boldsymbol\psi f \rangle = 0,
      \label{eqn:conservation laws}
  \end{equation}
  where $\mathbf{W}(t, \mathbf{x})=(\rho,\rho\mathbf V,\rho E)^T=\langle \boldsymbol\psi f \rangle$ are named as conservative variables in fluid mechanics.
  These velocity-space integrals are also commonly referred as moments of particle distribution function $f$.
  \item[(3)] \textit{Hyperbolicity}: For each fixed $\mathbf v$, the advection operator $\mathcal A$ is hyperbolic over $(t, \mathbf x) \in \Gamma \times \mathcal D$.
  \item[(4)] \textit{H-theorem}: There exists a differentiable function $H(f)=-f\log f$ such that
  \begin{equation}
      \langle H'(g) \mathcal Q(g,g) \rangle \leq 0, \quad \forall g\in \mathrm{Dom}(\mathcal Q) \ \mathrm{s.t.} \ \mathrm{Range}(g) \subseteq \mathcal D,
  \end{equation}
  which further results in the dissipation law
  \begin{equation}
      \partial_t \langle H(f) \rangle + \nabla_\mathbf x \cdot \langle \mathbf v H(f) \rangle \leq 0.
  \end{equation}
  The H-theorem implies that $H$ is a Lyapunov function for the Boltzmann equation, and the logarithm of its minimizer $\mathcal M$ must be a linear combination of the collision invariants $\boldsymbol\psi$,
    which can be further proved to have the following form, i.e., the Maxwellian distribution function \cite{bouchut2000kinetic},
    \begin{equation}
    	\mathcal M := \rho\left(\frac{m}{2\pi k T}\right)^{3/2} \exp\left(-\frac{m}{2kT} \mathbf c^2\right),
    	\label{eqn:maxwellian}
    \end{equation}
    where $m$ is molecular mass, $k$ is the Boltzmann constant, $T$ is temperature, and $\mathbf c=\mathbf v-\mathbf V$ is the peculiar velocity.
  \item[(5)] \textit{Galilean invariance}: There exists the Galilean transformation defined by 
  \begin{equation}
      \mathcal G(g)(t,\mathbf x,\mathbf v)=g(t,\mathbf x-\boldsymbol\omega t,\mathbf v-\boldsymbol\omega),
      \label{eqn:galilean}
  \end{equation}
  where $\boldsymbol \omega \in \mathbb R^3$ is a translational velocity, that commute with the advection and collision operators, i.e.,
  \begin{equation}
    \begin{aligned}
        &\mathcal A(\mathcal G(g)) = \mathcal G(\mathcal A(g)), \quad \forall g \in \mathrm{Dom}(\mathcal A), \\
        &\mathcal Q(\mathcal G(g)) = \mathcal G(\mathcal Q(g)), \quad \forall g \in \mathrm{Dom}(\mathcal Q). \\
    \end{aligned}
  \end{equation}
  As a consequence, the transformed particle distribution function $g=\mathcal G(f)$ still satisfies Eq.(\ref{eqn:boltzmann}) if $f$ is a solution of it.
\end{enumerate}

Following the structure in Eq.(\ref{eqn:boltzmann gainloss}) and the H-theorem, several relaxation models have been developed to simplify the computation of the fivefold integral on the right-hand side of the Boltzmann equation.
These simplified models take the form
\begin{equation}
    \mathcal Q(f) = \nu(\mathcal E - f),
    \label{eqn:relaxation model}
\end{equation}
where $\mathcal E$ denotes the target equilibrium state and $\nu$ is the relaxation frequency.
In the Bhatnagar–Gross–Krook (BGK) model \cite{bhatnagar1954model}, the equilibrium distribution $\mathcal E$ is set as Maxwellian, and $\nu$ is independent of $\mathbf v$, which implies that particles with different velocities share the same probability of collision.
Although the BGK model can satisfy the structural properties shown in Section \ref{sec:theory},
the underlying assumption becomes particularly unreasonable when the magnitude of $\nu$ in rarefied gases is moderate, which makes the BGK model inaccurate in describing highly dissipative flows.
One direction to improve the BGK model is to introduce $\mathbf v$-dependent relaxation frequency in Eq.(\ref{eqn:relaxation model}) \cite{mieussens2004numerical,haack2021consistent}.
Following the principle that high-speed particles should have a greater probability of collision, these models attempt to recover the true collision frequency in Eq.(\ref{eqn:collision frequency}).
Another idea is to supplement correction terms in $\mathcal E$ based on high-order moments of $f$, e.g., stress tensor \cite{holway1966new} and heat flux \cite{shakhov1968generalization}, which builds a multi-level relaxation mechanism towards Maxwellian.
It is shown that both strategies above can provide the correct Prandtl number.
However, no model has yet provided as satisfactory the solution as the full Boltzmann equation in describing the evolution of gases far away from equilibrium.

\section{Relaxation Neural Network}

\subsection{Deep residual learning}

Our idea is to build a surrogate model of the nonlinear function described by the Boltzmann equation based on a neural network.
To preserve as many structural properties as shown in Section \ref{sec:theory}, we prefer a physics-oriented architecture rather than a naive feed-forward neural network.

We choose to construct the model on the basis of residual neural network (ResNet).
To clarify this, we briefly introduce here the basic principles of ResNets.
The concept of residual learning was proposed to mitigate the degradation problem in deep neural networks, i.e., as the network depth increases, the accuracy gets saturated and then degrades rapidly.
This phenomenon implies that it is non-trivial to learn a deeper model based on its shallower counterpart in which the added layers only perform identity mapping.
Therefore, instead of building each few stacked layers to fit a desired underlying mapping directly, one can explicitly let these layers fit a residual mapping.
It has been shown that it is easier to optimize the residual mapping than the original unreferenced mapping \cite{he2016deep}.

Figure \ref{fig:resnet} shows the schematic diagram of the building block in a ResNet.
We denote $\mathcal H(\mathbf u)$ as the underlying mapping to be
fit by a few stacked layers and $\mathbf u$ as the input to the first layer.
If a continuous function can be approximated asymptotically by multiple nonlinear layers, then we can equivalently conclude that they can asymptotically approximate the residual function, i.e., $\mathcal F(\mathbf u) = \mathcal H(\mathbf u) - \mathbf u$, provided that the input and output are of the same dimensions.
The original function thus becomes $\mathcal H(\mathbf u)=\mathcal F(\mathbf u)+\mathbf u$, and its formulation is realized by adding shortcut connections \cite{bishop1995neural} to feedforward or convolutional neural networks.
The shortcut can skip one or several hidden layers and performs identity mapping.
The nonlinear activation function inside the hidden layers is denoted as $\sigma_h$.
The addition operation can be regarded as the connection function which combines the contributions from $\mathcal F$ and identity mapping of $\mathbf u$.
The residual block is further activated with $\sigma$, and the final output can be written as
\begin{equation}
    \mathcal N(\mathbf u) = \sigma (\mathcal F(\mathbf u) + \mathbf u).
    \label{eqn:resnet}
\end{equation}

\begin{figure}[htb!]
    \centering
    \includegraphics[width=0.6\textwidth]{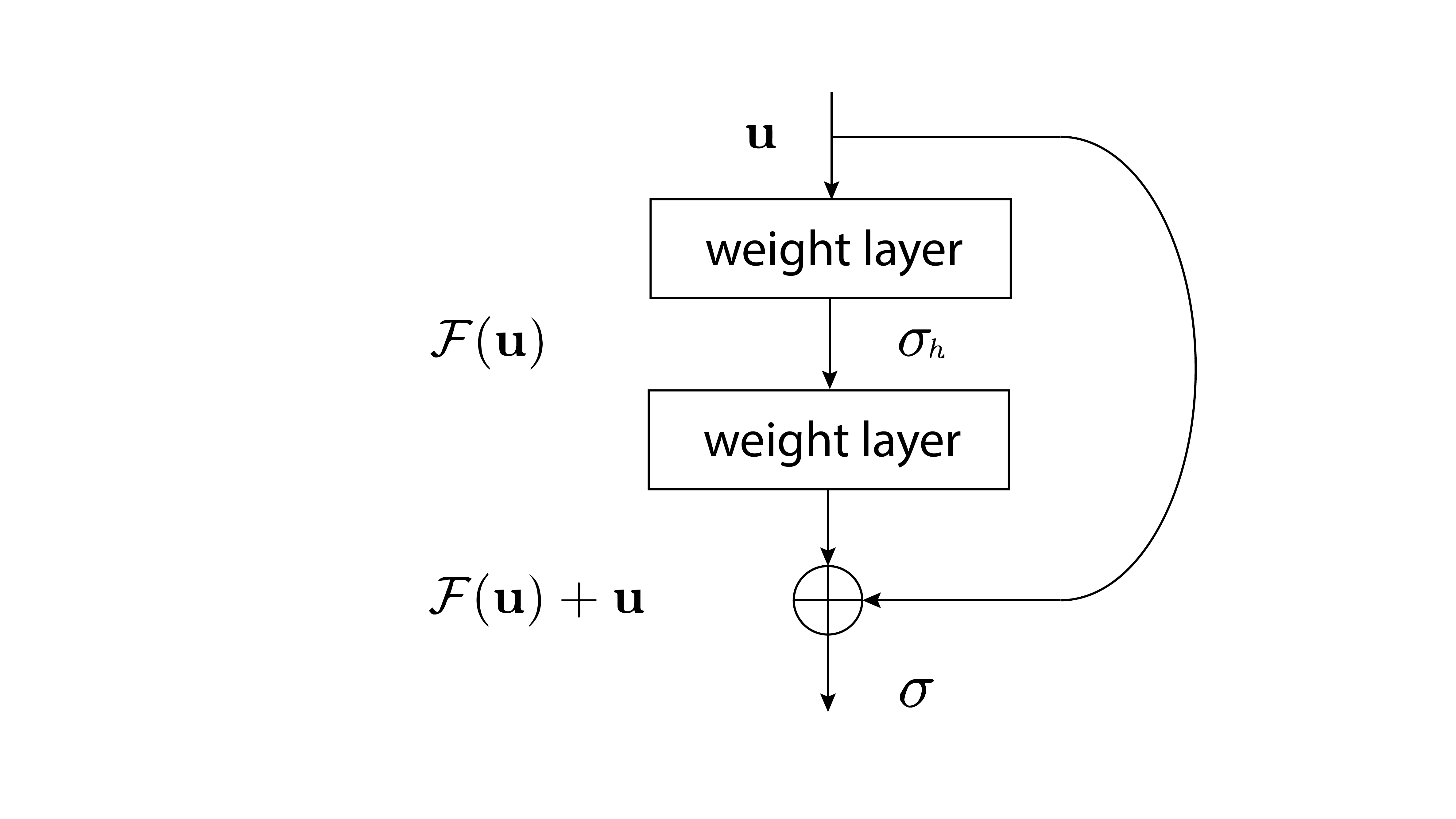}
    \caption{Schematic diagram of a building block in the residual neural network (ResNet).}
    \label{fig:resnet}
\end{figure}

It is noticeable that the relaxation model of the Boltzmann equation in Eq.(\ref{eqn:relaxation model}) has a similar structure to the ResNet.
Specifically, if the particle distribution function is considered as the input to the solver of relaxation term, then the equilibrium distribution function $\mathcal E = \mathcal E(f)$ plays an equivalent role to that of $\mathcal H$.
Next, unlike in neural networks where addition is employed as the connection function, subtraction is used in the relaxation model.
Finally, the relaxation frequency can be viewed as an activation function acting on the equilibrium function $\mathcal E$ and the shortcut connection from $f$.
The building block of the solver of relaxation term can thus be written as
\begin{equation}
    \mathcal Q(f) = \nu (\mathcal E(f) - f).
    \label{eqn:resnet relaxation model}
\end{equation}
The structural similarity between the relaxation model of Boltzmann equation in Eq.(\ref{eqn:resnet relaxation model}) and ResNet in Eq.(\ref{eqn:resnet}) can be easily noticed.
This provide us an opportunity to build a physics-oriented neural network based on the relaxation model and preserve the structural properties of the Boltzmann equation.

\subsection{Relaxation neural network}\label{sec:relaxnet}

Following the spirit of the relaxation model and ResNet, we build a novel neural network model which is named as relaxation neural network (RelaxNet).
The neural network is built on top of the discrete velocity formulation of the Boltzmann equation, i.e., the particle distribution function is discretized in the velocity space and the macroscopic variables as its moments can evaluated by numerical quadrature.
The schematic diagram of the model with forward pass is shown in Figure \ref{fig:relaxnet}.
As can be seen, RelaxNet consists of two feedforward subnets and the corresponding activation and connection functions.
We describe these components in detail below.

\begin{figure}[htb!]
    \centering
    \includegraphics[width=0.95\textwidth]{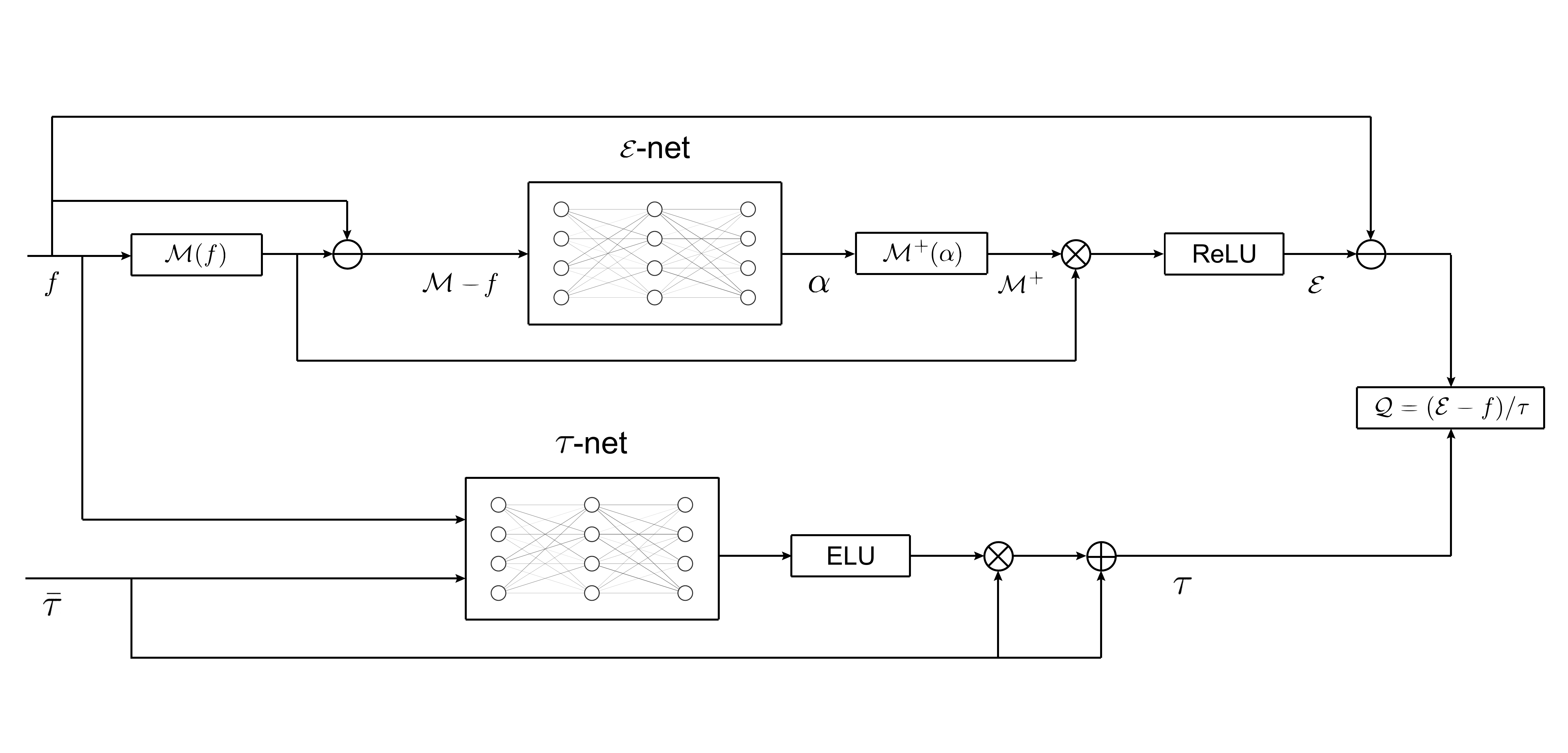}
    \caption{Schematic diagram of forward pass in the relaxation neural network (RelaxNet).}
    \label{fig:relaxnet}
\end{figure}

\begin{enumerate}
    \item[(1)] Neural networks
    \begin{itemize}
        \item{$\mathcal E$-net}: feedforward network $\mathcal N_\mathcal E : \mathbb R^{N_v} \rightarrow \mathbb R^{D+2}$, where $N_v$ is the number of quadrature points to discretize $f$ in the velocity space and $D\in\{1,2,3\}$ is the dimension of interest.
        The neural network takes the difference between Maxwellian and distribution function, i.e., $\mathcal M - f$, as input.
        The output of neural network is a vector $ \boldsymbol \alpha=(\alpha_1,\boldsymbol \alpha_2,\alpha_3)^T$, where $\boldsymbol \alpha_2$ is a vector of dimension $D$.
        The hyperbolic tangent (tanh) function is employed as the activation function for the hidden layers.
        We require the $\mathcal E$-net to have only weights without biases.
        \item{$\tau$-net}: feedforward network $\mathcal N_\tau : \mathbb R^{N_v+1} \rightarrow \mathbb R^{N_v}$.
        The input to the neural network $(\mathcal M-f, \bar \tau)^T$ is a vector of dimension $N_v+1$.
        Here $\bar\tau$ is the mean relaxation time calculated by
        \begin{equation}
            \bar \tau = \frac{\mu}{p},
        \end{equation}
        where $\mu$ is the viscosity that can be decided from a molecular model, and $p$ is the pressure.
        The relationship between mean relaxation time and frequency is $\bar\nu \bar\tau=1$.
        The output of neural network $\boldsymbol\beta$ is a vector of the same dimension as $f$.
        The hyperbolic tangent (tanh) function is used as the activation function for the hidden layers.
        We require the $\tau$-net to have only weights without biases.
    \end{itemize}
    \item[(2)] Structure functions
    \begin{itemize}
        \item{$\mathcal M(f)$}: Maxwellian function described in Eq.(\ref{eqn:maxwellian}), where the macroscopic variables can be evaluated as
        \begin{equation}
            \rho = \langle f \rangle, \quad 
            \mathbf V = \frac{\langle \mathbf v f \rangle}{\langle f \rangle}, \quad
            T = \frac{m}{3k\rho} \langle \mathbf c^2 f \rangle.
        \end{equation}
        \item{$\mathcal M^+(\boldsymbol\alpha)$}: Constructor of the correction term for equilibrium distribution via $\boldsymbol\alpha$, i.e.,
        \begin{equation}
            \mathcal M^+ = \exp \left(\alpha_1 + \boldsymbol\alpha_2 \cdot \mathbf v+\frac{1}{2}\alpha_3 \mathbf v^2\right),
            \label{eqn:maxwellian plus}
        \end{equation}
    \end{itemize}
    \item[(3)] Connection and activation functions
    \begin{itemize}
        \item{ELU}: Exponential Linear Unit function defined as
        \begin{equation}
            R(z)=\left\{\begin{array}{cc}
            z,  \quad & z>0, \\
            e^z-1, \quad & z<=0. \\
            \end{array}\right.
        \end{equation}
        \item{ReLU}: Rectified Linear Unit function defined as
        \begin{equation}
            R(z)=\left\{\begin{array}{cc}
            z,  \quad & z>0, \\
            0, \quad & z<=0. \\
            \end{array}\right.
        \end{equation}
        \item{$\oplus$}: Addition.
        \item{$\ominus$}: Subtraction.
        \item{$\otimes$}: Multiplication.
    \end{itemize}
\end{enumerate}

Based on the proposed RelaxNet, a neural network enhanced Boltzmann equation can be formulated as
\begin{equation}
    \partial_t f + \mathbf v \cdot \nabla_{\mathbf x} f = \mathrm{NN}_\theta (f,\bar\tau),
    \label{eqn:ube}
\end{equation}
where $\theta$ denotes trainable parameters in the model.
Such an equation is consistent of the idea of universal differential equation \cite{rackauckas2020universal}, where the mechanical advection operator $\mathcal A = \partial_t + \mathbf v \cdot \nabla_{\mathbf x}$ and the neural collision operator $\mathrm{NN}_\theta$ constitute a differentiable surrogate model.
Therefore, we can call Eq.(\ref{eqn:ube}) as the universal Boltzmann equation (UBE).

\subsection{Training strategy}\label{sec:loss}

UBE in Eq.(\ref{eqn:ube}) is a data-driven model and it forms a typical supervised training task.
Given a dataset consisting of reference solutions to the Boltzmann equation, an optimization problem can be set up which minimizes the difference between the current prediction of UBE model and ground-truth solution.
We define the cost function as
\begin{equation}
    C(\theta) = \sum_{n,i,j} \Arrowvert \mathrm{NN}_\theta - \mathcal Q_\mathrm{ref} \Arrowvert (t^n,\mathbf x_i,\mathbf v_j) + \lambda \sum_{n,i} \Arrowvert \langle \boldsymbol\psi \mathrm{NN}_\theta \rangle \Arrowvert (t^n,\mathbf x_i) + \xi \sum_l^L \Arrowvert \theta^l \Arrowvert,
    \label{eqn:cost function}
\end{equation}
where $\Arrowvert\cdot\Arrowvert$ denotes the Euclidean distance, and $\{n,i,j\}$ are the indices of the solution collocation points at $\{t^n, \mathbf x_i, \mathbf v_j\}$.
The generation of datasets used for the above function will be detailed in Section \ref{sec:data}.

The latter two terms in Eq.(\ref{eqn:cost function}) serve as regularization terms to mitigate overfitting.
The second term in Eq.(\ref{eqn:cost function}) comes naturally from the physical constraint (2) shown in Section \ref{sec:theory},
i.e., the collision term should conserve mass, momentum and energy in the Boltzmann equation.
This physics-informed regularization \cite{nabian2018physics} leads to more accurate preservation of conservation laws, and improves the physical interpretability of the model.
The last term in Eq.(\ref{eqn:cost function}) sums over the squared weight parameters of the network, where $\theta^l$ denotes the weight parameters of $l$-th layers and $L$ is the total number of layers.
The parameters $\{\lambda,\xi\}$ are the regularization strengths.
We require
\begin{equation}
    \lambda N_t N_x > 1, \quad \xi N_t N_x N_v > 1,
    \label{eqn:regularization strength}
\end{equation}
where $\{N_t,N_x,N_v\}$ are the number of time steps, physical cells, and velocity points respectively in the dataset.

The cost function in Eq.(\ref{eqn:cost function}) can then be minimized by different optimization algorithms.
In this paper, we adopt the Adam \cite{kingma2014adam} method which is an improvement of the stochastic gradient descent method with adaptive moment estimation.
The gradient of Eq.(\ref{eqn:cost function}) with respect to $\theta$ is computed with the help of reverse-mode automatic differentiation (AD).
Consider a smooth function $y = \mathscr F(x)$, the reverse-mode AD computes the dual (conjugate-transpose) matrix of Jacobian $\mathcal J = \nabla \mathscr F$ at $x = x_0$ with the chain rule
\begin{equation}
    \left(\mathcal{J}(\mathscr F)\left(x_{0}\right)\right)^{*}=\left(\mathcal{J}\left(G_{1}\right)\left(x_{0}\right)\right)^{*} \times \cdots \times\left(\mathcal{J}\left(G_{k}\right)\left(x_{k-1}\right)\right)^{*},
\end{equation}
with $x_{i}:=G_{i}\left(x_{i-1}\right) \text { for } i=1, \ldots, k-1$.
The implementation of the reverse-mode AD can be found in the open-source library Zygote \cite{innes2019differentiable}.

\subsection{Structure of RelaxNet}

The universal Boltzmann equation based on RelaxNet preserve the structural properties illustrated in Section \ref{sec:theory}.
We prove these properties in detail below.

\begin{theorem}\label{theorem:1}
Let \(f\) be the solution of the RelaxNet-based universal Boltzmann equation given in Eq.(\ref{eqn:ube}), then the range of $f$ satisfy
\begin{equation}
    \mathrm{Range}(f(t,\mathbf x,\mathbf v)) \subseteq B=\left\{ f: 0 \leq f <\infty \right\},
\end{equation}
given $\mathrm{Range}(f(0,\mathbf x,\mathbf v)) \subseteq B$.
\end{theorem}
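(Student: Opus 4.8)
The plan is to exploit the relaxation (ResNet-like) structure of the RelaxNet collision operator and integrate the UBE of Eq.(\ref{eqn:ube}) along characteristics. First I would rewrite $\mathrm{NN}_\theta(f,\bar\tau)$ in the relaxation form $\nu(\mathcal E-f)$ of Eq.(\ref{eqn:resnet relaxation model}), and record two facts that hold for \emph{every} value of the trainable parameters $\theta$: (i) the relaxation frequency $\nu$, built from the $\tau$-net output $\boldsymbol\beta$ through the ELU/ReLU activation together with the multiplication by the positive mean frequency $\bar\nu=1/\bar\tau$, satisfies $\nu(\mathbf v)\ge 0$ pointwise; and (ii) the target equilibrium $\mathcal E$, assembled from the Maxwellian $\mathcal M$ of Eq.(\ref{eqn:maxwellian}) and the correction $\mathcal M^+=\exp(\alpha_1+\boldsymbol\alpha_2\cdot\mathbf v+\tfrac12\alpha_3\mathbf v^2)$ of Eq.(\ref{eqn:maxwellian plus}), is nonnegative, since $\mathcal M\ge 0$ by construction, $\mathcal M^+>0$ as an exponential, and the connection functions ($\oplus$, $\otimes$) combine nonnegative inputs into a nonnegative output. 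I expect verifying (i)--(ii) carefully by tracing Figure \ref{fig:relaxnet} to be the main, though elementary, obstacle, since this is exactly where the physics-informed choice of activations and connections is used.

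Next, fix $t>0$ and $(\mathbf x,\mathbf v)$, and follow the characteristic $\mathbf x(s)=\mathbf x-(t-s)\mathbf v$, along which $F(s):=f(s,\mathbf x(s),\mathbf v)$ solves the linear scalar ODE
\[
F'(s)=\nu(s)\bigl(\mathcal E(s)-F(s)\bigr),\qquad 0\le s\le t,
\]
where $\nu(s):=\nu(s,\mathbf x(s),\mathbf v)\ge 0$ and $\mathcal E(s):=\mathcal E(s,\mathbf x(s),\mathbf v)\ge 0$ are treated as given (locally integrable) functions of $s$. Multiplying by the integrating factor $\exp\!\big(\int_0^s\nu\big)$ and integrating yields
\[
F(t)=F(0)\,e^{-\int_0^t\nu(r)\,dr}+\int_0^t\nu(s)\,\mathcal E(s)\,e^{-\int_s^t\nu(r)\,dr}\,ds .
\]
By hypothesis $F(0)=f(0,\mathbf x-t\mathbf v,\mathbf v)\ge 0$, and by (i)--(ii) both the factor $e^{-\int_0^t\nu}$ and the integrand $\nu\mathcal E\,e^{-\int_s^t\nu}$ are nonnegative; hence every term on the right is nonnegative and $f(t,\mathbf x,\mathbf v)=F(t)\ge 0$.

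For the upper bound I would use $\int_0^t\nu(s)e^{-\int_s^t\nu(r)\,dr}\,ds=1-e^{-\int_0^t\nu(r)\,dr}\le 1$, so that $F(t)\le F(0)+\sup_{0\le s\le t}\mathcal E(s)$; since a nonnegative $f$ has finite moments $\rho,\mathbf V,T$ (so $\mathcal M$ is bounded) and the bounded activations of the $\mathcal E$-net keep $\boldsymbol\alpha$ — hence $\mathcal M^+$ on the bounded velocity grid — finite, $\mathcal E$ is bounded and $F(t)<\infty$. Therefore $\mathrm{Range}(f(t,\cdot,\cdot))\subseteq B$, as claimed. Equivalently, one may note that $F(t)$ is a convex combination of $F(0)$ and the values $\{\mathcal E(s)\}_{s\in[0,t]}$, which makes both the lower bound $F(t)\ge 0$ and the finiteness transparent at once.
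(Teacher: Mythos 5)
Your proposal is correct and follows essentially the same route as the paper's proof: integrate the UBE along characteristics to obtain the Duhamel (exponential) representation, then use the physics-informed activations (ELU for $\nu\ge 0$, the exponential/ReLU structure for $\mathcal E\ge 0$, tanh-bounded subnet outputs for finiteness) to conclude nonnegativity and boundedness independent of $\theta$. If anything, your version is slightly more careful than the paper's, keeping $\nu$ time-dependent in the integrating factor and making the upper bound explicit via the convex-combination identity, but the underlying argument is the same.
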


\begin{proof}
We define shifted $f$ along characteristics as
\begin{equation}
    f^{\#}(t,\mathbf x,\mathbf v)=f(t,\mathbf x+\mathbf v t,\mathbf v).
\end{equation}
Then UBE in Eq.(\ref{eqn:ube}) can be written as
\begin{equation}
    \frac{d}{dt} f^{\#} + \nu f^{\#} = \nu \mathcal E^{\#},
    \label{eqn:bgk ode}
\end{equation}
where $d/dt$ denotes the full derivative.
Given the initial value
\begin{equation}
    f(0,\mathbf x,\mathbf v)=f_0(\mathbf x,\mathbf v),
    \label{eqn:initial value}
\end{equation}
then the solution of the initial value problem consisting of Eq.(\ref{eqn:bgk ode}) and (\ref{eqn:initial value}) can be written as \cite{babovsky1998boltzmann}
\begin{equation}
    f^{\#}(t,\mathbf x,\mathbf v) = f_0(\mathbf x,\mathbf v) e^{-\nu t} + \int_0^t e^{-\nu (t-s)} \nu \mathcal E^{\#}(s,\mathbf x,\mathbf v) ds.
\end{equation}
Proving Theorem \ref{theorem:1} is equivalent to proving that $\mathrm{Range}(f^{\#}) \subseteq B$.
As shown in Section \ref{sec:relaxnet}, $\mathcal E$ and $\nu$ can be expressed as
\begin{equation}
\begin{aligned}
    &\mathcal E=\mathrm{ReLU}(\mathcal M \mathcal M^+), \quad \mathcal M^+=\exp (\boldsymbol\alpha_{\theta_1} \cdot \boldsymbol\psi), \\
    &\nu = \frac{1}{\tau}, \quad \tau = \bar\tau (1 + \mathrm{ELU} (\boldsymbol\beta_{\theta_2})),
\end{aligned}
\label{eqn:etau calculation}
\end{equation}
where $\theta=(\theta_1,\theta_2)^T$ denotes all the trainable parameters.
With the help of the physics-informed activation functions, i.e., ELU and ReLU, it is clear that $\mathcal E$ and $\nu$ in Eq.(\ref{eqn:etau calculation}) are non-negative independent of $\theta$.
Besides, since the hidden layers inside $\mathcal E$-net and $\tau$-net are activated by the tanh function, the following relation holds,
\begin{equation}
    \mathrm{Range}(\mathcal E) \subseteq B, \quad \mathrm{Range}(\nu) \subseteq B
\end{equation}
given any finite layers and parameters.
So far, Theorem \ref{theorem:1} is proved.
\end{proof}

\begin{theorem}\label{theorem:2}
Let $\mathrm{NN}_\theta$ be the optimized RelaxNet, then the moments of $\mathrm{NN}_\theta$ with respect to the collision variants $\boldsymbol\psi$ satisfy
\begin{equation}
    \Arrowvert \langle \boldsymbol\psi \mathrm{NN}_\theta (g,\bar\tau) \rangle \Arrowvert \leq \varepsilon , \quad \forall (g,\bar\tau) \in \mathrm{Dom}(\mathrm{NN}_\theta),
\end{equation}
where $\varepsilon$ is the minimum value of the cost function.
\end{theorem}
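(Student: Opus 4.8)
The plan is to read the result off directly from the structure of the cost function $C(\theta)$ in Eq.~(\ref{eqn:cost function}). First I would let $\theta$ denote the parameters of the optimized RelaxNet, so that by the definition of $\varepsilon$ one has $C(\theta)=\varepsilon$. Since $\Arrowvert\cdot\Arrowvert$ is the Euclidean norm, every quantity summed in $C$ is non-negative, so the three summations in Eq.~(\ref{eqn:cost function})---the data-misfit term, the conservation-regularization term, and the weight-decay term---are each individually non-negative, and hence each of them is bounded above by their sum $\varepsilon$.

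Applying this to the middle term gives
\begin{equation}
    \lambda \sum_{n,i} \Arrowvert \langle \boldsymbol\psi\, \mathrm{NN}_{\theta} \rangle \Arrowvert (t^n,\mathbf x_i) \;\le\; \varepsilon .
\end{equation}
Dividing by $\lambda$ and using that every summand is non-negative, each collocation point contributes at most $\varepsilon/\lambda$; averaging over the $N_t N_x$ collocation points in space and time then gives $\tfrac{1}{N_t N_x}\sum_{n,i}\Arrowvert\langle\boldsymbol\psi\,\mathrm{NN}_{\theta}\rangle\Arrowvert \le \varepsilon/(\lambda N_t N_x)$, which is strictly smaller than $\varepsilon$ by the requirement $\lambda N_t N_x>1$ from Eq.~(\ref{eqn:regularization strength}). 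Hence the conservation defect of $\mathrm{NN}_\theta$ is controlled by $\varepsilon$ at each training sample $(g_i^n,\bar\tau_i^n)$.

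The remaining step is to pass from the finite training set to an arbitrary $(g,\bar\tau)\in\mathrm{Dom}(\mathrm{NN}_\theta)$. For this I would use that $g\mapsto\langle\boldsymbol\psi\,\mathrm{NN}_\theta(g,\bar\tau)\rangle$ is a finite composition of affine maps, bounded activations (tanh, ELU, ReLU), and the fixed moment quadrature, hence Lipschitz on bounded subsets of the domain, and combine this with the assumption---implicit in the training setup and in the generalization discussion of the introduction---that the dataset samples $\mathrm{Dom}(\mathrm{NN}_\theta)$ finely enough for the empirical bound to propagate. I expect this last step to be the main obstacle: the cost function literally constrains only the empirical conservation defect at the training collocation points, so upgrading it to a uniform bound over the continuous input set is a statement about generalization rather than an algebraic identity. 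The rigorous content of the theorem is therefore the bound on the data distribution, with the extension to all admissible inputs resting on the regularity of $\mathrm{NN}_\theta$ and the representativeness of the dataset.
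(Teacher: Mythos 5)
Your argument is essentially the paper's own proof: both extract the conservation-regularization term from the optimized cost $C(\theta)=\varepsilon$ by nonnegativity of the three summands and then invoke the condition $\lambda N_t N_x>1$ of Eq.(\ref{eqn:regularization strength}) to turn the summed bound $\lambda\sum_{n,i}\Arrowvert\langle\boldsymbol\psi\,\mathrm{NN}_\theta\rangle\Arrowvert\le\varepsilon$ into a bound of order $\varepsilon$ on the conservation defect. The obstacle you flag at the end is real but is not resolved by the paper either: its proof concludes only that $\Arrowvert\langle\boldsymbol\psi\,\mathrm{NN}_\theta\rangle\Arrowvert\le\varepsilon$ \emph{at the solution points in the dataset}, so the uniform claim over all $(g,\bar\tau)\in\mathrm{Dom}(\mathrm{NN}_\theta)$ in the theorem statement rests, exactly as you say, on a generalization assumption rather than on the algebra of the cost function.
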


\begin{proof}
The optimized cost function in Eq.(\ref{eqn:cost function}) satisfy
\begin{equation}
    C(\theta) = \sum_{n,i,j} \Arrowvert \mathrm{NN}_\theta - \mathcal Q \Arrowvert (t^n,\mathbf x_i,\mathbf v_j) + \lambda \sum_{n,i} \Arrowvert \langle \boldsymbol\psi \mathrm{NN}_\theta \rangle \Arrowvert (t^n,\mathbf x_i) + \xi \sum_l^L \Arrowvert \theta^l \Arrowvert = \varepsilon,
\end{equation}
which results in
\begin{equation}
    \lambda \sum_{n,i} \Arrowvert \langle \boldsymbol\psi \mathrm{NN}_\theta \rangle \Arrowvert (t^n,\mathbf x_i) \leq \varepsilon.
\end{equation}
Due to the requirement in Eq.(\ref{eqn:regularization strength}), we have
\begin{equation}
    \Arrowvert \langle \boldsymbol\psi \mathrm{NN}_\theta \rangle \Arrowvert \leq \varepsilon,
\end{equation}
at any solution point in the dataset.
\end{proof}

\begin{theorem}\label{theorem:3}
For each fixed $\mathbf v$, the advection operator $\mathcal A$ in the RelaxNet-based universal Boltzmann equation is hyperbolic over $(t,\mathbf x)\in \Gamma \times \mathcal D$.
\end{theorem}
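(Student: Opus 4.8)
The plan is to observe that Theorem~\ref{theorem:3} concerns only the advection operator $\mathcal A = \partial_t + \mathbf v \cdot \nabla_{\mathbf x}$, which is \emph{identical} to that of the original Boltzmann equation in Eq.(\ref{eqn:boltzmann}); the RelaxNet collision term $\mathrm{NN}_\theta$ appears only on the right-hand side and does not enter the principal (highest-order) part of the transport equation. Hence hyperbolicity is inherited directly from the structure of the linear advection operator and is independent of the trainable parameters $\theta$. First I would fix $\mathbf v \in \mathbb R^3$ and recall the definition of hyperbolicity for a first-order system: the Cauchy problem $\partial_t f + \sum_{d=1}^{D} v_d\, \partial_{x_d} f = (\text{lower order})$ is hyperbolic in the $t$-direction if, for every real covector $\boldsymbol\eta = (\eta_1,\dots,\eta_D) \in \mathbb R^D$, the symbol obtained by freezing coefficients, namely the scalar $\sum_{d=1}^{D} v_d \eta_d$, is real and the associated characteristic equation $\tau + \sum_d v_d \eta_d = 0$ has only real roots $\tau$.

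Second I would carry out the (essentially trivial) verification: since $\mathbf v$ is a fixed real vector, the symbol $\mathcal A(\mathbf i \tau, \mathbf i \boldsymbol\eta) = \mathbf i\bigl(\tau + \mathbf v \cdot \boldsymbol\eta\bigr)$ vanishes precisely at $\tau = -\mathbf v \cdot \boldsymbol\eta \in \mathbb R$, a single real characteristic speed in each spatial direction. For the scalar transport equation this means the $1\times 1$ coefficient ``matrices'' $A_d = v_d$ are real (hence trivially diagonalizable with real eigenvalue $v_d$), so the system is symmetric hyperbolic. I would note that adding the source term $\mathrm{NN}_\theta(f,\bar\tau)$ — which by Theorem~\ref{theorem:1} and Eq.(\ref{eqn:etau calculation}) is a bounded, parameter-dependent but order-zero perturbation in $f$ — does not alter the principal symbol and therefore preserves hyperbolicity: well-posedness of the full UBE follows from the standard theory of semilinear hyperbolic equations once the source is (locally) Lipschitz, which it is because it is a composition of smooth activations ($\tanh$, $\mathrm{ELU}$, $\mathrm{ReLU}$ away from the kink, $\exp$) and linear maps, together with the smooth moment maps defining $\mathcal M$.

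The ``main obstacle'' is really only expository rather than mathematical: one must be careful to state in what sense hyperbolicity is claimed, because $f$ itself is infinite-dimensional in $\mathbf v$ (it is a function on $\mathbb R^3$ before velocity discretization). I would resolve this exactly as the paper's own framing suggests — ``for each fixed $\mathbf v$'' — so that the equation is genuinely a scalar transport equation in $(t,\mathbf x)$ with constant coefficient $\mathbf v$, and the velocity variable plays the role of a parameter. After velocity discretization into $N_v$ quadrature nodes $\{\mathbf v_j\}$, the system becomes a decoupled family of $N_v$ scalar transport equations in the principal part (the coupling enters only through $\mathrm{NN}_\theta$ on the right-hand side), so the full semidiscrete system has block-diagonal principal symbol $\mathrm{diag}(v_{1,d},\dots,v_{N_v,d})$ in each direction $d$, which is real and diagonal — hence the system is hyperbolic. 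I would close by remarking that, unlike the conservation and H-theorem properties, this one holds \emph{exactly} and with no dependence on the training error $\varepsilon$, since it is a property of the fixed advection operator alone.
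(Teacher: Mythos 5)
Your proposal is correct and takes essentially the same route as the paper: the paper's proof is a one-liner observing that RelaxNet replaces only the right-hand side, so the advection operator $\mathcal A$ is unchanged and hyperbolicity follows from Property (3) of Section \ref{sec:theory}. Your explicit verification of the principal symbol for the frozen-$\mathbf v$ scalar transport equation (and the remark that the source term is order zero and does not affect it) simply spells out what the paper delegates to that cited property, so there is no substantive difference in approach.
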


\begin{proof}
As RelaxNet surrogates only the right-hand side of the Boltzmann equation, this theorem naturally holds according to Property (3) in Section \ref{sec:theory}.
\end{proof}

\begin{theorem}\label{theorem:4}
The function $H(f)=-f\log f$ satisty
\begin{equation}
   \partial_t \langle H(f) \rangle + \nabla_\mathbf x \cdot \langle \mathbf v H(f) \rangle \leq \sum_i \chi_i \varepsilon,
\end{equation}
where $\boldsymbol\chi=\sum_i \chi_i$ is a vector of the same dimension of $\boldsymbol \alpha$.
The H-theorem holds asymptotically as $\varepsilon \rightarrow 0$.
\end{theorem}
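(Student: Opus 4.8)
The plan is to run the classical relaxation-model derivation of the H-theorem and then absorb the extra terms that appear because RelaxNet enforces conservation only approximately (Theorem \ref{theorem:2}). First I would contract the universal Boltzmann equation Eq.(\ref{eqn:ube}) with $H'(f)$ and integrate over $\mathbf v$. Since $\mathbf v$ does not depend on $(t,\mathbf x)$, the chain rule gives $H'(f)\partial_t f=\partial_t H(f)$ and $H'(f)\,\mathbf v\cdot\nabla_{\mathbf x}f=\nabla_{\mathbf x}\cdot(\mathbf v\,H(f))$, hence
\begin{equation}
  \partial_t\langle H(f)\rangle+\nabla_{\mathbf x}\cdot\langle\mathbf v\,H(f)\rangle=\langle H'(f)\,\mathrm{NN}_\theta(f,\bar\tau)\rangle=\langle H'(f)\,\nu(\mathcal E-f)\rangle,
\end{equation}
so the task reduces to bounding the velocity-space integral on the right.

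Next I would decompose that integral by writing $\log f=(\log f-\log\mathcal E)+\log\mathcal E$, which produces a \emph{dissipative} term proportional to $-\langle\nu\,(\log f-\log\mathcal E)(f-\mathcal E)\rangle$ and a \emph{remainder} built from $\log\mathcal E$ and the constant in $H'$. The dissipative term carries a fixed sign: by Eq.(\ref{eqn:etau calculation}) and Theorem \ref{theorem:1} the RelaxNet frequency satisfies $\nu\ge0$ for all admissible parameters, and monotonicity of the logarithm gives $(\log f-\log\mathcal E)(f-\mathcal E)\ge0$ pointwise, so this term only helps the inequality and can be dropped from the upper bound. For the remainder I would use the structural form of the equilibrium: on the set where the $\mathrm{ReLU}$ in Eq.(\ref{eqn:etau calculation}) is inactive one has $\mathcal E=\mathcal M\,\mathcal M^+=\exp(\tilde{\boldsymbol\alpha}\cdot\boldsymbol\psi)$ from Eq.(\ref{eqn:maxwellian plus}), so $\log\mathcal E=\tilde{\boldsymbol\alpha}\cdot\boldsymbol\psi$ is a linear combination of the collision invariants $\boldsymbol\psi=(1,\mathbf v,\mathbf v^2/2)^T$; consequently the remainder collapses to a weighted sum of the moments $\langle\psi_i\,\mathrm{NN}_\theta\rangle$ with coefficients $\tilde\alpha_i$ (plus a unit weight from the constant term of $H'$). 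Theorem \ref{theorem:2} bounds each of these moments by $\varepsilon$ in norm, so a componentwise estimate (or Cauchy--Schwarz) bounds the remainder by $\sum_i\chi_i\,\varepsilon$ with $\chi_i$ controlled by $|\tilde\alpha_i|$ and the unit term. Combining the two pieces gives $\partial_t\langle H(f)\rangle+\nabla_{\mathbf x}\cdot\langle\mathbf v\,H(f)\rangle\le\sum_i\chi_i\,\varepsilon$, and $\varepsilon\to0$ recovers the exact dissipation law of Property (4).

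I expect the main obstacle to be the velocity region where the $\mathrm{ReLU}$ truncates $\mathcal E$ to zero, because there $\log\mathcal E$ is undefined and the clean reduction of the remainder to collision-invariant moments fails. On that set I would instead bound the contribution directly, using $\nu(\mathcal E-f)=-\nu f\le0$ together with the uniform bounds $\nu\ge0$ and $\mathrm{Range}(f)\subseteq B$ from Theorem \ref{theorem:1} to control the offending integral. Two secondary points also need care: Theorem \ref{theorem:2} is established at the dataset collocation points, so the bound is rigorous there and extends to arbitrary $(t,\mathbf x)$ only under the usual generalization hypothesis, which I would state explicitly; and one should verify that $\tilde{\boldsymbol\alpha}$, hence each $\chi_i$, stays finite along the solution, which follows from the boundedness of the $\mathcal E$-net output enforced by its $\tanh$ activations.
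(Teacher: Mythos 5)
Your proposal follows essentially the same route as the paper's proof: multiply the UBE by $H'(f)$, split the collision contribution into the sign-definite relative term $\langle \nu(\mathcal E-f)(\log f-\log\mathcal E)\rangle$ plus a remainder, use $\log\mathcal E=\tilde{\boldsymbol\alpha}\cdot\boldsymbol\psi$ to reduce that remainder (together with the constant from $H'$) to moments $\langle\boldsymbol\psi\,\mathrm{NN}_\theta\rangle$, and bound these by $\varepsilon$ via Theorem \ref{theorem:2}, arriving at the bound $\left(1+\sum_i\tilde\alpha_i\right)\varepsilon$. The extra points you raise (the ReLU-truncated region where $\log\mathcal E$ is undefined, validity of Theorem \ref{theorem:2} only at collocation points, finiteness of $\tilde{\boldsymbol\alpha}$) are legitimate refinements that the paper's proof silently ignores, but they do not change the argument's structure.
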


\begin{proof}
Substituting $H'(f)=-(1+\log f)$ into the RelaxNet-based UBE in Eq.(\ref{eqn:ube}) and considering Eq.(\ref{eqn:etau calculation}) yield
\begin{equation}
\begin{aligned}
    &\langle (\partial_t f + \mathbf v \cdot \nabla_\mathbf x f)(1+\log f) \rangle = \partial_t \langle H(f) \rangle + \nabla_\mathbf x \cdot \langle \mathbf v H(f) \rangle = \langle \nu (\mathcal E - f)(1+\log f) \rangle.
\end{aligned}
\label{eqn:entropy law}
\end{equation}
Notice that the Maxwellian distribution in Eq.(\ref{eqn:maxwellian}) can be written as
\begin{equation}
    \mathcal M = \exp(\hat{\boldsymbol\alpha} \cdot \boldsymbol\psi),
\end{equation}
where 
\begin{equation}
\begin{aligned}
    &\hat{\boldsymbol\alpha}=(\hat\alpha_1,\hat{\boldsymbol\alpha}_2,\hat\alpha_3)^T, \\
    &\hat\alpha_1=\log \left( \rho \left(\frac{m}{2\pi k T}\right)^{D/2} - \frac{m \mathbf V^2}{2kT} \right), \quad
    \hat{\boldsymbol\alpha}_2=\frac{m\mathbf V}{kT},\quad \hat\alpha_3=-\frac{m}{kT}.
\end{aligned}
\label{eqn:maxwellian alpha}
\end{equation}
Given the structure function $\mathcal M^+$ defined in Eq.(\ref{eqn:maxwellian plus}), the equilibrium state can be written as
\begin{equation}
    \mathcal E = \mathcal M \mathcal M^+ = \exp(\tilde{\boldsymbol\alpha} \cdot \boldsymbol\psi)
\end{equation}
where $\tilde\alpha_i=\alpha_i \hat\alpha_i$.
From Theorem \ref{theorem:2}, we obtain
\begin{equation}
    \langle \nu (\mathcal E - f)\log \mathcal E \rangle = \left\langle \nu (\mathcal E - f) \left(\tilde\alpha_1 + \tilde{\boldsymbol\alpha}_2 \cdot \mathbf v + \frac{1}{2} \tilde\alpha_3 \mathbf v^2 \right) \right\rangle \leq \sum_i \tilde\alpha_i \varepsilon.
    \label{eqn:entropy dissipation}
\end{equation}
The right-hand side of Eq.(\ref{eqn:entropy law}) can be written as
\begin{equation}
    \langle \nu (\mathcal E - f)(1+\log f) \rangle = \langle \nu (\mathcal E - f) \rangle + \langle \nu (\mathcal E - f)\log f \rangle \leq \varepsilon + \langle \nu (\mathcal E - f)\log f \rangle.
\end{equation}
We rewrite the last term of the above equation as
\begin{equation}
    \langle \nu (\mathcal E - f)\log f \rangle = \langle \nu (\mathcal E - f)(\log f - \log \mathcal E) \rangle +  \langle \nu (\mathcal E - f)\log \mathcal E \rangle.
    \label{eqn:entropy inequality}
\end{equation}
Since the logarithm is a monotonically increasing function, the first in the above equation is non-positive.
Considering Eq.(\ref{eqn:entropy law}), (\ref{eqn:entropy dissipation}), and (\ref{eqn:entropy inequality}), we end up with
\begin{equation}
    \partial_t \langle H(f) \rangle + \nabla_\mathbf x \cdot \langle \mathbf v H(f) \rangle \leq \left( 1 + \sum_i \tilde\alpha_i \right) \varepsilon.
\end{equation}
So far, Theorem \ref{theorem:4} is proved.
\end{proof}

\begin{theorem}\label{theorem:5}
The RelaxNet-based universal Boltzmann equation is Galilean invariant.
\end{theorem}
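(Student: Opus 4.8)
The plan is to establish the two commutation identities that constitute Galilean invariance in Section~\ref{sec:theory}, i.e. $\mathcal A(\mathcal G(g))=\mathcal G(\mathcal A(g))$ and $\mathrm{NN}_\theta(\mathcal G(g),\cdot)=\mathcal G(\mathrm{NN}_\theta(g,\cdot))$, and then to apply $\mathcal G$ to the whole of Eq.(\ref{eqn:ube}) to conclude that $\mathcal G(f)$ solves the UBE whenever $f$ does. The advection identity is immediate and is handled exactly as in the proof of Theorem~\ref{theorem:3}: RelaxNet replaces only the right-hand side, so $\mathcal A=\partial_t+\mathbf v\cdot\nabla_{\mathbf x}$ is untouched and Property~(5) of Section~\ref{sec:theory} applies verbatim. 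All the work therefore goes into showing that the RelaxNet collision term $\mathrm{NN}_\theta(f,\bar\tau)=\nu(\mathcal E-f)$ commutes with the boost defined in Eq.(\ref{eqn:galilean}).

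For this I would follow the data flow of RelaxNet and check covariance ingredient by ingredient. A change of variables $\mathbf v\mapsto\mathbf v-\boldsymbol\omega$ in the moment integrals shows that, evaluated at the translated point $\mathbf x-\boldsymbol\omega t$, the density and temperature are unchanged while the bulk velocity picks up the boost, $\rho\mapsto\rho$, $\mathbf V\mapsto\mathbf V+\boldsymbol\omega$, $T\mapsto T$; in particular the peculiar velocity $\mathbf c=\mathbf v-\mathbf V$ is itself invariant once the shift is accounted for. Substituting these into Eq.(\ref{eqn:maxwellian}) gives $\mathcal M(\mathcal G(f))=\mathcal G(\mathcal M(f))$, hence the common subnet input $\mathcal M-f$ is covariant, and the mean relaxation time $\bar\tau=\mu/p$, which depends on $f$ only through $\rho$ and $T$, is invariant. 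It remains to push this covariance through the two feed-forward subnets and the structure and activation functions of Eq.(\ref{eqn:etau calculation}) onto $\mathcal E=\mathcal M\,\mathcal M^+$ and $\nu=1/(\bar\tau(1+\mathrm{ELU}(\boldsymbol\beta_{\theta_2})))$; once $\mathcal E$ and $\nu$ are shown covariant, the identity $\nu(\mathcal G f)\bigl(\mathcal E(\mathcal G f)-\mathcal G f\bigr)=\mathcal G\bigl(\nu(f)(\mathcal E(f)-f)\bigr)$ is just a rearrangement, and combining it with the advection identity closes the proof.

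I expect the main obstacle to be exactly this last propagation step, since a feed-forward network is not automatically equivariant under a rigid translation of its velocity argument. The route I would take is to exploit the explicit forms in Eqs.(\ref{eqn:maxwellian plus}) and (\ref{eqn:etau calculation}): $\nu$ is a pointwise-in-$(t,\mathbf x)$ function of the covariant profile $\mathcal M-f$ and the invariant scalar $\bar\tau$, so it inherits covariance directly; and $\mathcal E$ is an exponential of the collision-invariant polynomial $\boldsymbol\psi$ with coefficients $\tilde\alpha_i=\alpha_i\hat\alpha_i$, where $\hat{\boldsymbol\alpha}$ in Eq.(\ref{eqn:maxwellian alpha}) already carries precisely the affine transformation under the boost that keeps $\exp(\hat{\boldsymbol\alpha}\cdot\boldsymbol\psi)=\mathcal M$ covariant. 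The delicate wording will be in asserting that the $\mathcal E$-net and $\tau$-net outputs are unaffected by the boost: one must either read the subnets as operators on full velocity profiles, so that the shift is a relabelling of the $\mathbf v$-variable, or, in the discrete-velocity formulation, regard the velocity quadrature grid as carried along by $\mathcal G$, so that the discrete input vector — and hence $\boldsymbol\alpha$ and $\boldsymbol\beta$ — is literally unchanged. With that in hand, $\mathcal E$ transforms exactly like $\mathcal M$, $\nu$ is invariant up to the spatial translation, $\mathrm{NN}_\theta$ commutes with $\mathcal G$, and therefore $\mathcal A(\mathcal G f)=\mathrm{NN}_\theta(\mathcal G f,\bar\tau)$, i.e. $\mathcal G(f)$ again solves Eq.(\ref{eqn:ube}).
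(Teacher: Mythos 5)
Your proposal is correct and takes essentially the same route as the paper: the advection operator is untouched by RelaxNet, so its commutation with $\mathcal G$ carries over verbatim, and the collision term commutes because the inputs to RelaxNet ($f$ viewed as a scalar velocity profile and the invariant scalar $\bar\tau$) transform covariantly under the boost. The paper compresses your step-by-step tracking of $\mathcal M-f$, $\mathcal E$, and $\nu$ through the subnets (including the velocity-relabelling reading you flag as the delicate point) into the single assertion that the RelaxNet inputs are scalar and Galilean invariant, so your added detail only makes explicit what the paper's proof leaves implicit.
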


\begin{proof}
We assume $f$ is a solution to UBE in Eq.(\ref{eqn:ube}) and consider the Galilean transformation defined in Eq.(\ref{eqn:galilean}), i.e.,
\begin{equation}
    g(t,\mathbf x,\mathbf v) = \mathcal G(f)=f(t,\mathbf x-\boldsymbol\omega t,\mathbf v-\boldsymbol\omega).
\end{equation}
As the advection operator in UBE is the same as the original Boltzmann equation, the commute naturally holds, i.e.,
\begin{equation}
    \mathcal A(\mathcal G(f)) = \mathcal G(\mathcal A(f)).
\end{equation}
Besides, the inputs to RelaxNet, i.e., $\{f,\bar\tau\}$, are scalar and Galilean invariant, and thus we have
\begin{equation}
    \mathcal Q(\mathcal G(f),\bar\tau) = \mathcal G(\mathcal Q(f,\bar\tau)).
\end{equation}
Therefore, $g$ is also a soslution to Eq.(\ref{eqn:ube}).
\end{proof}

\begin{theorem}\label{theorem:6}
The RelaxNet-based universal Boltzmann equation preserves the correct continuum limit.
\end{theorem}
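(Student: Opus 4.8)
The plan is to perform a Chapman--Enskog expansion of the RelaxNet-based UBE in the Knudsen number and verify that, order by order, it collapses onto the compressible Euler system at leading order and onto the Navier--Stokes--Fourier equations at first order, with the viscosity prescribed by the definition $\bar\tau=\mu/p$. First I would write the rescaled equation $\partial_t f+\mathbf v\cdot\nabla_{\mathbf x} f=\tfrac{1}{\epsilon}\nu(\mathcal E-f)$, obtained by substituting $\bar\tau\to\epsilon\bar\tau$ with $\epsilon$ the Knudsen number, and insert the ansatz $f=f^{(0)}+\epsilon f^{(1)}+\epsilon^{2}f^{(2)}+\cdots$ together with the multiscale time derivative. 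Collecting the $O(\epsilon^{-1})$ terms gives $\nu(\mathcal E^{(0)}-f^{(0)})=0$; since $\nu$ is finite and strictly positive for finite network parameters (cf.\ the proof of Theorem~\ref{theorem:1}), this forces $f^{(0)}=\mathcal E^{(0)}=\mathcal E(f^{(0)})$.

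The decisive structural step is to show that $f^{(0)}=\mathcal M(f^{(0)})$, the local Maxwellian, is a consistent solution of this fixed-point relation. Here I would invoke the design constraint that both $\mathcal E$-net and $\tau$-net carry only weights and no biases together with $\tanh(0)=0$: if $f^{(0)}$ equals its own Maxwellian then the common input $\mathcal M-f^{(0)}$ to the subnets vanishes, hence $\boldsymbol\alpha_{\theta_1}=0$ so that $\mathcal M^{+}=\exp(0)=1$ and $\mathcal E^{(0)}=\mathcal M(f^{(0)})=f^{(0)}$, while $\boldsymbol\beta_{\theta_2}=0$ so that $\tau^{(0)}=\bar\tau$ and $\nu^{(0)}=1/\bar\tau=p/\mu$. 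Taking the $\boldsymbol\psi$-moments of the rescaled equation and using Theorem~\ref{theorem:2} to bound the collisional contribution $\langle\boldsymbol\psi\,\mathrm{NN}_\theta\rangle$ by $O(\varepsilon)$, the closure $f^{(0)}=\mathcal M$ then yields the compressible Euler equations for $\mathbf W=\langle\boldsymbol\psi f\rangle$ modulo the training error.

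At first order I would solve $\nu^{(0)}(\mathcal E^{(1)}-f^{(1)})=(\partial_{t_0}+\mathbf v\cdot\nabla_{\mathbf x})f^{(0)}$ for the non-equilibrium part $f^{(1)}$, impose the solvability condition $\langle\boldsymbol\psi f^{(1)}\rangle=0$ to obtain the slow dynamics of $\mathbf W$, and substitute $f=\mathcal M+\epsilon f^{(1)}$ back into the moment equations; since $\nu^{(0)}=p/\mu$, this reproduces the Navier--Stokes stress tensor $\mathbf P$ and heat flux $\mathbf q$ with exactly the prescribed viscosity $\mu$, i.e.\ the correct continuum limit. The main obstacle is controlling the two RelaxNet-specific ingredients---the correction factor $\mathcal M^{+}$ and the velocity-dependent relaxation $\tau=\bar\tau(1+\mathrm{ELU}(\boldsymbol\beta_{\theta_2}))$---at subleading orders: one must argue that, because the subnets are bias-free with $\tanh$ activations, $\boldsymbol\alpha_{\theta_1}$ and $\boldsymbol\beta_{\theta_2}$ are $O(\epsilon)$ whenever $\mathcal M-f=O(\epsilon)$, so that their effect enters only at $O(\epsilon^{2})$ and does not perturb the first-order transport coefficients. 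Combined with the $O(\varepsilon)$ defect from Theorem~\ref{theorem:2}, this establishes that the hydrodynamic limit is recovered asymptotically as $\varepsilon\to 0$, in the same spirit as Theorem~\ref{theorem:4}.
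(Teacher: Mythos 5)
Your leading-order argument is, at its core, the same as the paper's: the paper also rewrites the UBE as $f=\mathcal E-\tau\,\mathrm{NN}_\theta(f,\bar\tau)$, lets $\nu\to\infty$ (i.e.\ $\tau\to 0$) to force $f\to\mathcal E$, and then uses exactly the structural ingredient you identify --- the bias-free, $\tanh$-activated $\mathcal E$-net whose input $\mathcal M-f$ vanishes at equilibrium, so $\boldsymbol\alpha\to\mathbf 0$, $\mathcal M^+\to 1$, $\mathcal E\to\mathcal M$ --- together with Theorem~\ref{theorem:2} (and $\langle\boldsymbol\psi(\mathcal M-f)\rangle=\mathbf 0$, which drives the conservation defect $\varepsilon\to 0$) to recover the exact Euler equations with $\mathbf P=p\mathbf I$, $\mathbf q=\mathbf 0$. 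Your Chapman--Enskog dressing of this step is a more formal presentation of the same idea, and note that, like the paper, it only establishes that $\mathcal M$ is a \emph{consistent} fixed point of $f^{(0)}=\mathcal E(f^{(0)})$ rather than the unique one; the paper's "correct continuum limit" means precisely this Euler limit, so up to here your proposal matches the intended proof.

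Where you go beyond the paper --- the claim that the first-order expansion reproduces the Navier--Stokes--Fourier system --- there is a genuine gap. You argue that because $\boldsymbol\alpha_{\theta_1}$ and $\boldsymbol\beta_{\theta_2}$ are $O(\epsilon)$ whenever $\mathcal M-f=O(\epsilon)$, their effect "enters only at $O(\epsilon^2)$." That does not follow: an $O(\epsilon)$ perturbation of $\mathcal E$ (through $\mathcal M^+=1+O(\epsilon)$) and of $\tau$ directly modifies $f^{(1)}$, i.e.\ the first-order Chapman--Enskog solution, and hence the transport coefficients at Navier--Stokes order. Indeed this is exactly the mechanism by which the Shakhov and $\nu$-BGK corrections (both $O(\epsilon)$ modifications of the relaxation structure) change the Prandtl number, and it is the raison d'\^etre of RelaxNet; a bias-free network gives no control over the $O(\epsilon)$ coefficient of these terms, which depends on the trained weights. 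Moreover, even dropping the network corrections, a single-relaxation-time model with $\bar\tau=\mu/p$ yields $\mathrm{Pr}=1$, so "correct NSF" would not hold anyway without a finer argument. Since the theorem as stated (and proved) in the paper concerns only the Euler limit, the right fix is simply to drop the first-order claim; as written, that part of your argument would fail.
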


\begin{proof}
Based on Theorem \ref{theorem:2}, we obtain the transport equations for conservative variables
\begin{equation}
    \partial_t \langle \boldsymbol\psi f \rangle + \nabla_\mathbf x \cdot \langle \mathbf v \boldsymbol\psi f \rangle \leq \varepsilon,
    \label{eqn:ube macro}
\end{equation}
or specifically,
\begin{equation}
    \frac{\partial}{\partial t}\left(\begin{array}{c}
    \rho \\
    \rho \mathbf{V} \\
    \rho E
    \end{array}\right)+\nabla_{\mathbf{x}} \cdot\left(\begin{array}{c}
    \rho \mathbf{V} \\
    \rho \mathbf{V} \mathbf{V} \\
    \rho E \mathbf{V}
    \end{array}\right) \leq -\nabla_\mathbf x \cdot
    \left(\begin{array}{c}
    0 \\
    \mathbf P \\
    \mathbf P \cdot \mathbf V + \mathbf q
    \end{array}\right) + \left(\begin{array}{c}
    \varepsilon \\
    \varepsilon \\
    \varepsilon
    \end{array}\right),
\end{equation}
where
\begin{equation}
    \mathbf P = \langle \mathbf c \mathbf c f \rangle,\quad \mathbf q = \frac{1}{2} \langle \mathbf c \mathbf c^2 f \rangle,
\end{equation}
denote the pressure tensor and heat flux respectively.
UBE in Eq.(\ref{eqn:ube}) can be rewritten as
\begin{equation}
    f = \mathcal E - \tau \mathrm{NN}_\theta(f,\bar\tau),
\end{equation}
where $\tau=1/\nu$ is the relaxation time.
In the continuum limit, infinitely many intermolecular interactions happen in a unit of time, i.e., $\nu\rightarrow\infty$.
From Eq.(\ref{eqn:etau calculation}), we have $\tau\rightarrow 0$ and thus $f\rightarrow \mathcal E$.
We next prove that $\mathcal E$ converges to $\mathcal M$ in the continuum limit.
Note that the difference between Maxwellian and distribution function, i.e., $\mathcal M - f$, are used as input to the zero-bias $\mathcal E$-net in RelaxNet.
As a result, the $\mathcal E$-net produces $\boldsymbol\alpha \rightarrow \mathbf 0$ and $\mathcal M^+ \rightarrow 1$,
and thus we have
\begin{equation}
    \mathcal E = \mathcal M \mathcal M^+ \sim \mathcal M \quad (\mathrm{as} \ f \rightarrow \mathcal M).
\end{equation}
In addition, given $\langle \boldsymbol\psi (\mathcal M - f)\rangle=\mathbf 0$, the residuals in Eq.(\ref{eqn:ube macro}) converge to zero,
\begin{equation}
    \varepsilon \rightarrow 0 \quad (\mathrm{as} \ f \rightarrow \mathcal M),
\end{equation}
and we recover the exact Euler equations independent of $\theta$, i.e.,
\begin{equation}
    \frac{\partial}{\partial t}\left(\begin{array}{c}
    \rho \\
    \rho \mathbf{V} \\
    \rho E
    \end{array}\right)+\nabla_{\mathbf{x}} \cdot\left(\begin{array}{c}
    \rho \mathbf{V} \\
    \rho \mathbf{V} \mathbf{V} \\
    \rho E \mathbf{V}
    \end{array}\right) = -\nabla_\mathbf x \cdot
    \left(\begin{array}{c}
    0 \\
    p\mathbf I \\
    p\mathbf V
    \end{array}\right),
\end{equation}
given the following relationship as $f=\mathcal M$,
\begin{equation}
    \mathbf P = p\mathbf I,\quad \mathbf q = \mathbf 0.
\end{equation}
\end{proof}

\section{Data Sampling}\label{sec:data}

As presented in Section \ref{sec:loss}, a dataset with ground-truth distribution functions $f_\mathrm{ref}$ and collision terms $\mathcal Q_\mathrm{ref}$ is needed in Eq.(\ref{eqn:cost function}) to perform the supervised learning task.
We consider the space of $f$ as a sample space under the constraint,
\begin{equation}
    f(t,\mathbf x,\mathbf v)\in B = \left\{ f: 0\leq f < \infty \right\},
\end{equation}
where $B$ can be referred as the physically realizable set.
A data-distribution $p_f$ can be incorporated to generate $f$ from $B$.

A common strategy for data-driven modeling is to first perform deterministic simulations and extract solutions in the intermediate steps as sample data.
One disadvantage of this approach is that $p_f$ is implicitly determined by the chosen test cases and can thus be heavily biased.
It may not be possible to cover enough different particle distributions and flow regimes.
The high expense of CFD simulations makes it challenging to establish an all-round database.

In this paper, we provide an alternative sampling strategy to generate particle distribution functions and corresponding collision terms.
The theoretical basis of this approach is the closure hierarchies of moment system of the Boltzmann equation \cite{levermore1996moment}.
In the following we briefly explain the principle and implementation of the approach.

\subsection{Generation of distribution function}\label{sec:sample pdf}

A closure strategy of the Boltzmann equation aims to reconstruct the particle distribution function $f$ from moments
\begin{equation}
    \mathbf m = \langle \boldsymbol\phi f \rangle,
\end{equation}
under the constraint
\begin{equation}
    f(t,\mathbf x,\mathbf v)\in B_M = \left\{ f: f \geq 0 ,\ \Arrowvert \langle \boldsymbol\phi f \rangle \Arrowvert < \infty \right\},
\end{equation}
where $\boldsymbol\phi(\mathbf v)\in \mathbb R^{N_m}$ is a set of basis functions.
Here we choose the basis in such a way that the first moments in $\mathbf m$ are consistent with the conservative variables $\mathbf W=(\rho,\rho \mathbf V,\rho E)^T$.
Therefore, we can write $\boldsymbol\phi$ as
\begin{equation}
    \boldsymbol\phi(\mathbf v)=(\boldsymbol \psi, \tilde{\boldsymbol\phi}(\mathbf v))^T
\end{equation}
where $\tilde{\boldsymbol\phi}(\mathbf v)$ are chosen as monomials and mixed polynomials of $\mathbf v$ up to degree $N_m$.

The entropy closure seeks for a unique map $\mathbf m \mapsto f$ by forming an optimization problem.
The objective function of the optimization problem is built with the help of a convex entropy function.
For classical Maxwell-Statistics statistics, the minimal entropy closure problem writes
\begin{equation}
    \min _{g \in B_m} \langle g \log (g)-g \rangle, \quad \text { s.t. } \mathbf{m}=\langle \boldsymbol\phi g \rangle.
    \label{eqn:entropy closure}
\end{equation}
The solution of the above optimization problem can be expressed as
\begin{equation}
    f_u(\mathbf v)=\exp(\boldsymbol\gamma_u \cdot \boldsymbol\phi),
    \label{eqn:reconstruct pdf}
\end{equation}
where $\boldsymbol\gamma_u \in \mathbb R^{N_m}$ is a vector of Lagrange multipliers of the dual formulation of the optimization problem, which reads
\begin{equation}
    \boldsymbol\gamma_u = \underset{\boldsymbol\gamma \in \mathbb{R}^{N_m}}{\operatorname{argmax}}\{\boldsymbol\gamma \cdot \mathbf{m}-\langle\exp (\boldsymbol\gamma \cdot \boldsymbol\phi)\rangle\}.
\end{equation}

The idea is to sample particle distribution functions from Eq.(\ref{eqn:reconstruct pdf}).
Note that the condition number of the minimal entropy closure at a moment $\mathbf u$ can be computed via the positive semi-definite Hessian of the dual problem, i.e.,
\begin{equation}
    \mathbf H_u = \langle \boldsymbol\phi \boldsymbol\phi \exp(\boldsymbol\gamma_u \cdot \boldsymbol\phi) \rangle.
\end{equation}
where $\boldsymbol\phi \boldsymbol\phi$ denotes the union of $\boldsymbol\phi$.
As analyzed in \cite{curto1991recursiveness,junk2000maximum}, generally the lower the condition number, the less the reconstructed particle distribution differs from the Maxwellian.
Therefore, by controlling the condition number of $\mathbf H_u$, we can generate distribution functions that are close and far from equilibrium.
We let the first three elements of $\boldsymbol\gamma$ be consistent with $\hat{\boldsymbol\alpha}$ in Eq.(\ref{eqn:maxwellian alpha}), and sample $\gamma_n$ for $n>3$ with a prescribed probability density.

It is worthwhile to mention that the condition number cannot be arbitrarily large due to the constraint of realizability.
The realizable set of the minimal entropy problem in Eq.(\ref{eqn:entropy closure}) writes
\begin{equation}
    \mathcal R=\{ \mathbf m: \mathbf m = \langle \boldsymbol\phi g \rangle, g\in B_m \}.
\end{equation}
It is known that the minimal entropy problem near the boundary of $\mathcal R$, denoted as $\partial \mathcal R$, may become ill-conditioned \cite{xiao2022predicting}.
Therefore, we keep the condition number within a certain range, and employ an upwind reconstruction to enable the capability to generate highly non-equilibrium distribution functions that may exist in highly dissipative flows.
The distribution functions produced from Eq.(\ref{eqn:reconstruct pdf}) are divided randomly into two parts, and the referenced $f$ is reconstructed as
\begin{equation}
    f_\mathrm{ref} = f_L \mathscr H(\mathbf n \cdot \mathbf v) + f_R (1- \mathscr H(\mathbf n \cdot \mathbf v)),
    \label{eqn:upwind reconstruction}
\end{equation}
where $\{f_L,f_R\}$ are the distribution functions from these two parts,
$\mathbf n$ is a randomly generated unit vector, and $\mathscr H(x)$ is the heaviside step function.
Figure \ref{fig:reconstruction} shows typical near-equilibrium and non-equilibrium particle distribution functions reconstructed from Eq.(\ref{eqn:upwind reconstruction}) using the degree $N_m=4$.
As the polynomial degree increases, the non-equilibrium effect is expected to be stronger.
The detailed sampling strategy is summarized in Algorithm \ref{alg:sample}.
Note that the sampling strategy here can work together with classical simulation-based data samplers, where the data produced by a specific simulation can be employed as reinforcement data on top of the pre-generated data.
We will explain this further in Section \ref{sec:algorithm}.

\begin{figure}[htb!]
	\centering
	\subfigure[Near-equilibrium reconstructed distribution function]{
		\includegraphics[width=0.4\textwidth]{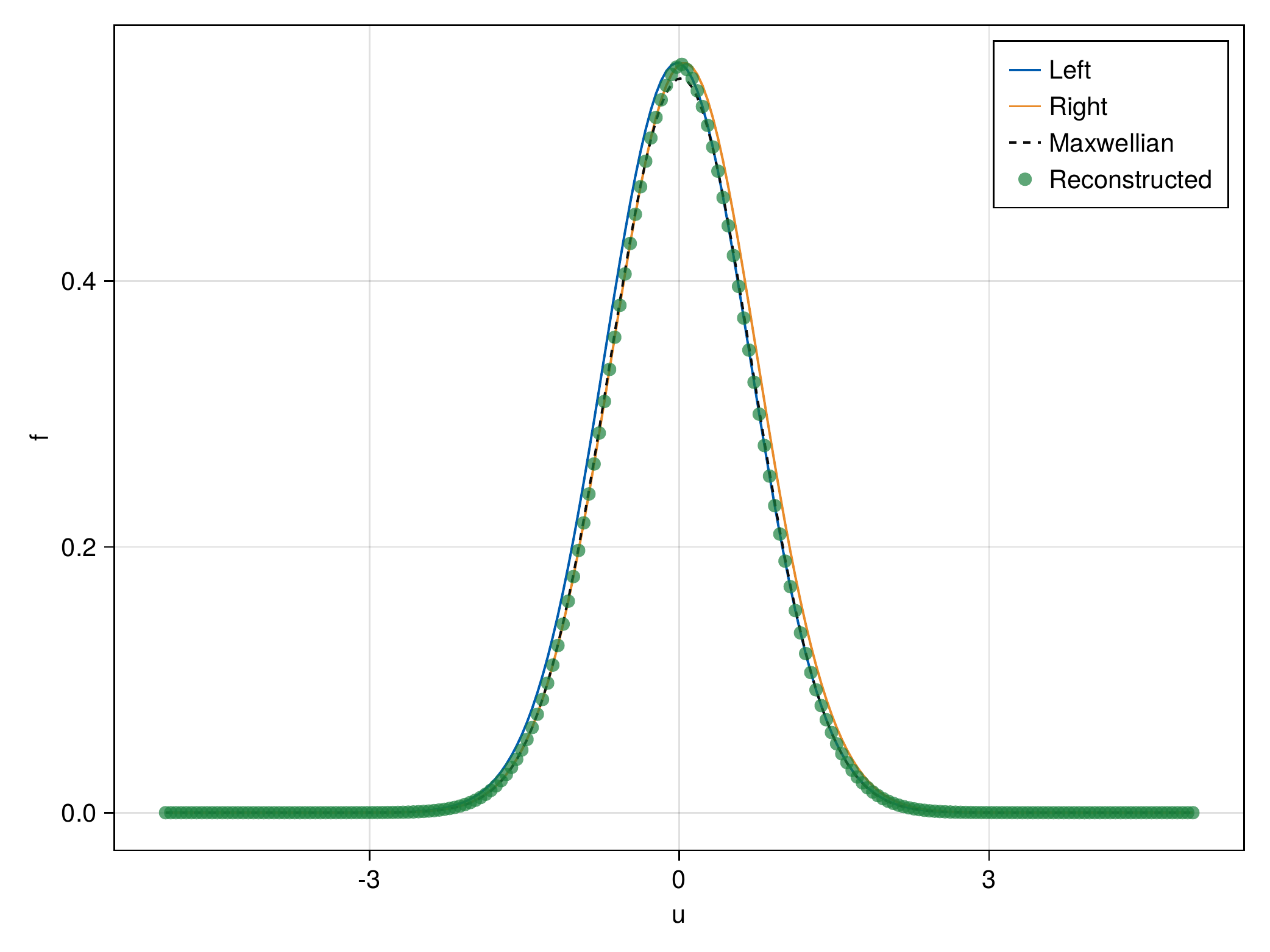}
	}
	\subfigure[Non-equilibrium reconstruction distribution function]{
		\includegraphics[width=0.4\textwidth]{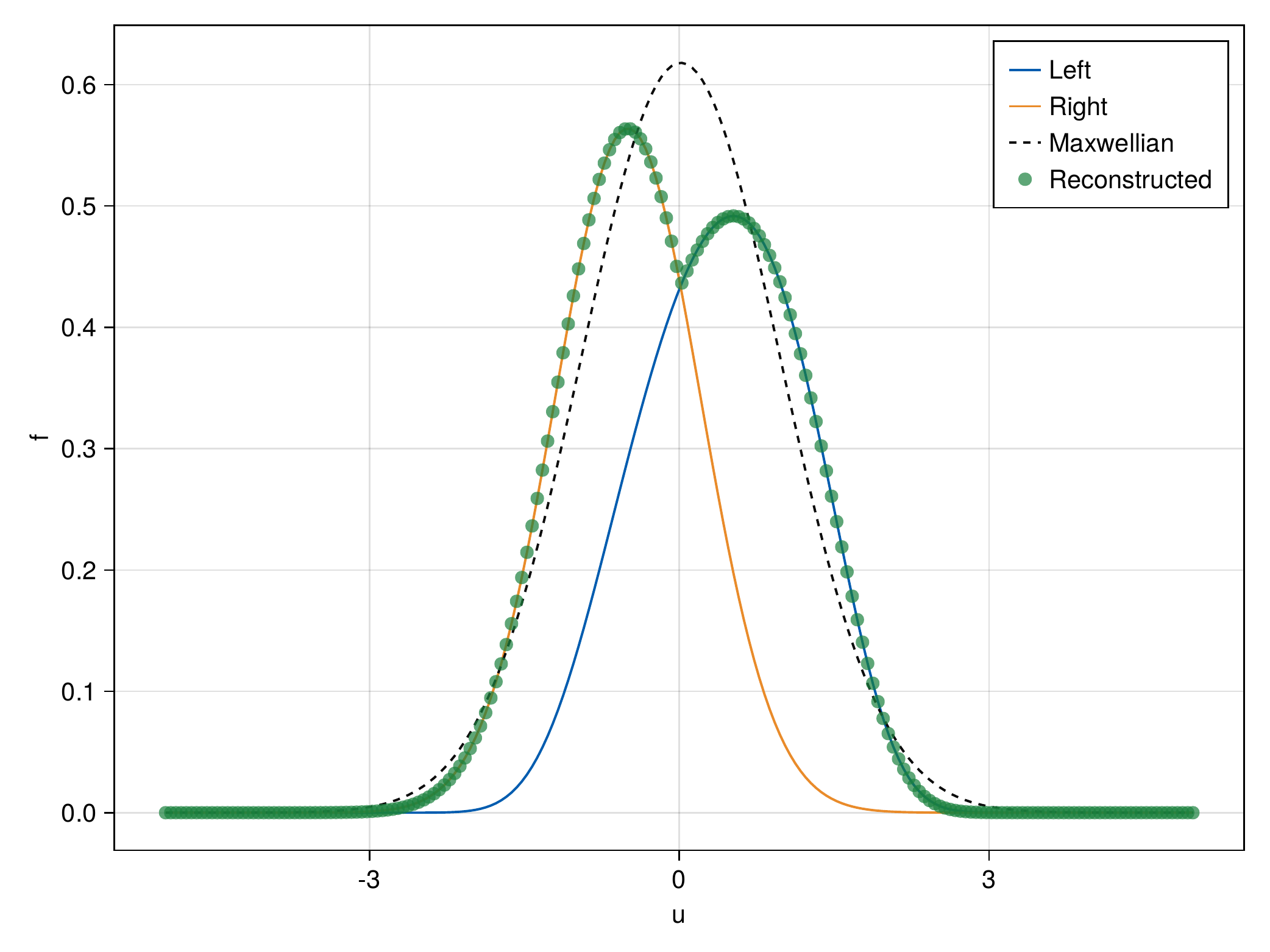}
	}
	\caption{Typical one-dimensional near-equilibrium and non-equilibrium particle distribution functions generated by the sampling strategy in Section \ref{sec:sample pdf} with the polynomial degree $N_m=4$. The values of velocity $u$ and distribution function $f$ are normalized by $(2RT_0)^{1/2}$ and $\rho_0/(2RT_0)^{1/2}$, respectively.}
    \label{fig:reconstruction}
\end{figure}

\begin{algorithm}\label{alg:sample}
\DontPrintSemicolon
\caption{Sampling of reference particle distribution functions }
\SetAlgoLined
\SetKwInOut{Input}{Input}
\Input{Number of dataset entries $N_i$, maximum moment order $N_m$, and
spatial dimension $D$ \newline
Range of density, velocity, and temperature to formulate $\hat{\boldsymbol\alpha}$ \newline
Standard deviation $\sigma$ to sample $\boldsymbol\gamma$\newline
Condition number threshold $c$
}
\KwResult{Dataset of reference distribution functions $f$}
\For {$i = 1,\dots,2N_i$}{
    Sample $\rho\in[\rho_{min},\rho_{max}]$\;
    Sample $\mathbf V\in[\mathbf V_{min},\mathbf V_{max}]$\;
    Sample $T\in[T_{min},T_{max}]$\;
    Compute mean for $(\gamma_{u,1},\boldsymbol \gamma_{u,2}, \gamma_{u,3})^T=\hat{\boldsymbol \alpha}$\;
    \Do {$\sigma_{\mathbf H_u(\boldsymbol\gamma_u)}<c$}{
        Sample $\gamma_{u,n}$, $n=4,\dots,N_m $ \;
        Compute $\mathbf H_u(\boldsymbol\gamma_u)$\;
    }
    Compute $f_{\text{ent},i} = \exp(\boldsymbol\gamma_u \cdot \boldsymbol\phi)$ \;
 }
 Divide $f_\mathrm{ent}$ into $f_L$ and $f_R$ equally \;
 \For {$i = 1,\dots,N_i$}{
    Sample unit vector $\mathbf n$ \;
    Reconstruct $f_{\mathrm{ref},i}=f_{L,i} \mathscr H(\mathbf n \cdot \mathbf v) + f_{R,i} (1 - \mathscr H(\mathbf n \cdot \mathbf v))$ \;
 }
\end{algorithm}

\subsection{Computation of collision term}\label{sec:collision data}

After the particle distribution functions are collected, the collision terms need to be computed and used in the cost function Eq.(\ref{eqn:cost function}).
In this paper, we consider three different collision terms for the Boltzmann and its model equations, which are briefly introduced below.

\begin{enumerate}
  \item[(1)] \textit{Full Boltzmann collision integral with fast spectral method}: The full Boltzmann term can be transformed into Carleman-type \cite{carleman1944integrale},
  \begin{equation}
    \begin{aligned}
    \mathcal Q(f,f)
    &=\int_{\mathbb{R}^{3}} \int_{\mathbb{S}^{2}} \Theta g\left[f\left(\mathbf{v}+\frac{g \Omega-\mathbf{g}}{2}\right) f\left(\mathbf{v}_{*}-\frac{g \Omega-\mathbf{g}}{2}\right)-f(\mathbf{v}) f\left(\mathbf{v}_{*}\right)\right] d \Omega d \mathbf{v}_{*} \\
    &=2 \int_{\mathbb{R}^{3}} \int_{\mathbb{R}^{3}} \Theta \delta\left(2 \mathbf{y} \cdot \mathbf{g}+\mathbf{y}^{2}\right)\left[f\left(\mathbf{v}+\frac{\mathbf{y}}{2}\right) f\left(\mathbf{v}_{1}-\frac{\mathbf{y}}{2}\right)-f(\mathbf{v}) f\left(\mathbf{v}_{*}\right)\right] d \mathbf{y} d \mathbf{v}_{*} \\
    &=4 \int_{\mathbb{R}^{3}} \int_{\mathbb{R}^{3}} \Theta \delta(\mathbf{y} \cdot \mathbf{z})[f(\mathbf{v}+\mathbf{y}) f(\mathbf{v}+\mathbf{z})-f(\mathbf{v}) f(\mathbf{v}+\mathbf{y}+\mathbf{z})] d \mathbf{y} d \mathbf{z},
    \end{aligned}
    \label{eqn:carleman}
    \end{equation}
    which can be solved with a discrete Fourier transform-based spectral method with a computational cost of $O(N^3 \log N)$ \cite{mouhot2006fast}.
  Specifically, the particle distribution function and collision term discretized with $N_v$ quadrature points are expanded into the Fourier series,
    \begin{equation}
    \begin{aligned}
        &f(t,\mathbf x,\mathbf{v})=\sum_{k=-N_v / 2}^{N_v / 2-1} {f}_{k}(t,\mathbf x) \exp \left(i \xi_{k} \cdot \mathbf{v}\right),\\
        &{f}_{k}=\frac{1}{L^{D}} \langle f(t,\mathbf x,\mathbf{v}) \exp \left(-i \xi_{k} \cdot \mathbf{v}\right) \rangle,\\
        &\mathcal Q(t,\mathbf x,\mathbf{v})=\sum_{k=-N_v / 2}^{N_v / 2-1} \mathcal{Q}_{k}(t,\mathbf x) \exp \left(i \xi_{k} \cdot \mathbf{v}\right),\\
        &\mathcal{Q}_{k}=\sum_{l, m=-N_v / 2,(l+m=k)}^{N_v / 2-1} {f}_{l} {f}_{m} [\beta(l, m)-\beta(m, m)],
    \end{aligned}
    \end{equation}
  where $L$ is the span of velocity space in each dimension, and $\beta$ is the kernel mode,
  and the convolutions defined in Eq.(\ref{eqn:carleman}) can be solved in the frequency domain.
  The details of this method can be found in \cite{mouhot2006fast}, and its numerical implementation is available in \cite{xiao2021kinetic}.
  \item[(2)] \textit{Shakhov relaxation model}: The Shakhov relaxation model builds a heat flux-based correction term on top of the BGK model to ensure that the Chapman-Enskog expansion of the model has the correct Prandtl number \cite{shakhov1968generalization}.
  The equilibrium state in the Shakhov model writes
  \begin{equation}
      \mathcal E = \mathcal M \left[1+(1-\mathrm{Pr}) \mathbf{c} \cdot \mathbf{q}\left(\frac{\mathbf c^2}{R T}-5\right) /(5 p R T)\right],
  \end{equation}
    where Pr is the Prandtl number, $\mathbf c$ is the peculiar velocity, $\mathbf q$ is the heat flux, and $R=k/m$ is the gas constant.
    It has been shown that the Shakhov model is more accurate in predicting highly non-equilibrium flows in normal shock wave compared to the BGK model \cite{xu2011improved}.
  With the inclusion of heat flux, the Shakhov model loses some of the structural properties shown in Section \ref{sec:theory}.
  For example, non-negative solution of particle distribution function can appear, and the H-theorem is no longer strictly valid in the Shakhov model.
  Using the Shakhov model to construct the dataset is to validate the ability of RelaxNet to correctly approximate $\mathcal E$.
  \item[(3)] \textit{$\nu$-BGK model}: Another strategy to improve the BGK model is to introduce velocity-dependent relaxation frequency. Here we choose the $\nu$-BGK model \cite{mieussens2004numerical}, where the relaxation term writes
  \begin{equation}
      \mathcal Q = \nu(\mathbf v)(\mathcal M - f).
  \end{equation}
  Different curves of $\nu$ as a function of $\mathbf v$ can be fitted to provide the correct Prandtl number of monatomic gas.
  Here we adopt the relaxation frequency defined in \cite{yuan2022capturing}, which is an improvement over the frequencies provided in \cite{mieussens2004numerical} and can provide more accurate solution in non-equilibrium flows.
  In this model, the relaxation frequency is defined as
  \begin{equation}
  \begin{aligned}
      &\nu(|\mathbf v|) = A\frac{p}{\mu}\left[ \nu_{eq}^0(\xi) + 2\nu_{eq}^0(\xi) \right],\\
      &\nu_{eq}^0(\xi) = \frac{3}{2} \left[ \exp(-\xi^2) + \frac{\sqrt \pi}{2} \left( \frac{1}{\xi} + 2\xi \right) \mathrm{erf}(\xi)\right],
    \end{aligned}
    \label{eqn:nubgk frequency}
  \end{equation}
  where $\xi=|\mathbf c|/\sqrt{2kT/m}$, and $A$ is a parameter depending on the viscosity index.
  Using the $\nu$-BGK model to construct the dataset is to validate the ability of RelaxNet to correctly approximate velocity-dependent $\nu$.
\end{enumerate}

\section{Solution Algorithm}\label{sec:algorithm}

\subsection{Update algorithm}

The finite volume method is employed to build the solution algorithm of the RelaxNet-based universal Boltzmann equation.
We denote the mean value of particle distribution function in a control volume as
\begin{equation}
    f(t^n,\mathbf x_i,\mathbf v_j)=f_{i,j}^n=\frac{1}{\Omega_{i}(\mathbf x)\Omega_{j}(\mathbf v)} \int_{\Omega_{i}} \int_{\Omega_{j}} f(t^n,\mathbf x,\mathbf v) d\mathbf xd\mathbf v,
\end{equation}
where $\Omega_{i}$ and $\Omega_{j}$ are cell areas in the discrete physical and velocity space.
The update algorithm of distribution function in each control volume can be written as
\begin{equation}
    f_{i,j}^{n+1}=f_{i,j}^n+\frac{1}{\Omega_{i}}\int_{t^n}^{t^{n+1}} \sum_{r=1}^{n_f} F_r \Delta S_r dt+ \int_{t^n}^{t^{n+1}} \mathrm{NN}_\theta(f_{i},\bar\tau_i)_j dt,
    \label{eqn:update}
\end{equation}
where $F_r$ is the time-dependent flux function of distribution function at $r$-th cell interface, $\Delta  S_r$ is the interface area, and $n_f$ is the number of interfaces.
Details of the update algorithm can be found in \cite{xiao2021stochastic}.

\subsection{Numerical flux}

The numerical flux in Eq.(\ref{eqn:update}) is evaluated from the reconstructed particle distribution function at the cell interface.
At the cell interface $\mathbf x_{i+1/2}$, the distribution function is constructed in an upwind fashion,
\begin{equation}
    f_{i+1/2,j}=f_{i+1/2,j}^L \mathscr H\left(\mathbf n_{i+1/2} \cdot \mathbf v_j\right) + f_{i+1/2,j}^R (1-\mathscr H\left(\mathbf n_{i+1/2} \cdot \mathbf v_j\right)),
    \label{eqn:interface distribution}
\end{equation}
where $\mathbf n_{i+1/2}$ is the unit normal vector of the interface, and $\mathscr H(x)$ is the heaviside step function.
The left and right value of distribution function $f_{i+1/2,j}^{L,R}$ can be obtained through reconstruction and interpolation.
For example, a second-order interpolation results in
\begin{equation}
\begin{aligned}
    &f_{i+1/2,j}^L = f_{i,j} + \nabla_\mathbf x f_{i,j} \cdot (\mathbf x_{i+1/2}-\mathbf x_i), \\
    &f_{i+1/2,j}^R = f_{i+1,j} + \nabla_\mathbf x f_{i+1,j} \cdot (\mathbf x_{i+1/2}-\mathbf x_{i+1}),
\end{aligned}
\end{equation}
where $\nabla_\mathbf x f$ is the reconstructed gradient with slope limiters.
The principle of higher-order interpolation can be found in \cite{xiao2021flux}.
The numerical flux of particle distribution function can then be evaluated as
\begin{equation}
    F_{i+1/2,j}= f_{i+1/2,j} \mathbf n_{i+1/2} \cdot \mathbf v_j.
\end{equation}

\subsection{Collision term}

The RelaxNet-based collision operator in each cell reads
\begin{equation}
    \mathcal Q_{i,j} = \mathrm{NN}_\theta(f_{i},\bar\tau_i)_j.
\end{equation}
According to the kinetic theory of gases, the mean relaxation time is defined as $\bar\tau = \mu/{p}$ where $p$ is pressure.
In this paper, we adopt the variational hard-sphere (VHS) molecule model to compute the viscosity coefficient, i.e.,
\begin{equation}
    \mu = \mu_\mathrm{ref} \left(\frac{T}{T_\mathrm{ref}}\right)^\omega,
    \label{eqn:hs viscosity}
\end{equation}
where $\mu_\mathrm{ref}$ and $T_\mathrm{ref}$ are the viscosity and temperature in the reference state, and $\omega$ is the viscosity index.

Once the parameters in RelaxNet get optimized, the collision term can be integrated with the help of time-integral algorithms.
While the forward Euler method is the most straightforward integrator for solving Eq.(\ref{eqn:update}),
it is feasible to adopt other higher-order methods, e.g., Tsitouras' 5/4 Runge-Kutta method \cite{tsitouras2011modified}.
When the time step is much larger than mean relaxation time, it is preferable to employ implicit \cite{shampine1982implementation} and implicit-explicit (IMEX) methods \cite{ascher1995implicit} to improve numerical stability of the scheme.

Note that the update procedure in Eq.(\ref{eqn:update}) produces case-specific distribution functions that may be beyond the dataset used for training and testing.
The newly generated distribution functions can be supplemented to the dataset on-the-fly by computing the collision term $\mathcal Q_{i,j}=\mathcal Q(f_{i,j})$ with the methods shown in Section \ref{sec:collision data}.
We use the rise of residuals as a criterion to generate supplementary data.
A schematic diagram of the solution algorithm of the RelaxNet-based UBE is presented in Figure \ref{fig:solver}.

\begin{figure}[htb!]
    \centering
    \includegraphics[width=0.99\textwidth]{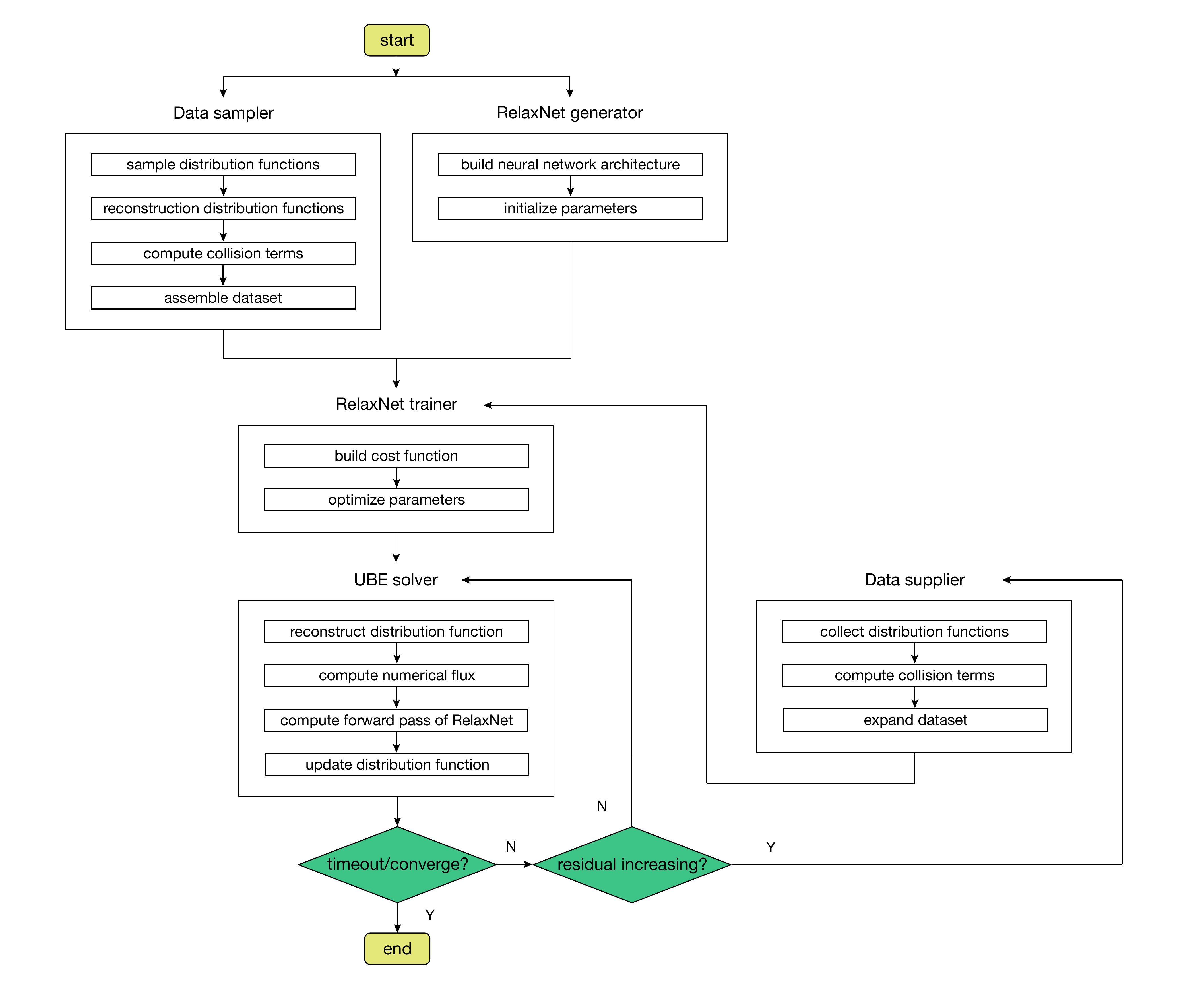}
    \caption{Schematic diagram of solution algorithm of the RelaxNet-based universal Boltzmann equation.}
    \label{fig:solver}
\end{figure}

\section{Numerical Experiments}

In this section, we present numerical experiments to validate the RelaxNet-based universal Boltzmann equation and its solution algorithm.
The dimensionless variables are introduced in the simulation, i.e.,
\begin{equation*}
\begin{aligned}
    & \tilde t = \frac{t}{L_0/(2RT_0)^{1/2}}, \ \tilde{\mathbf x} = \frac{\mathbf x}{L_0},\ \tilde \rho = \frac{\rho}{\rho_0}, \ \tilde{\mathbf v} = \frac{\mathbf v}{(2RT_0)^{1/2}}, \ \tilde{\mathbf V} = \frac{\mathbf V}{(2RT_0)^{1/2}}, \ \tilde T = \frac{T}{T_0}, \\
    & \tilde{E}=\frac{E}{2RT_0}, \ \tilde{\mathbf P} = \frac{\mathbf P}{2 \rho_0 RT_0}, \ \tilde{\mathbf q} = \frac{\mathbf q}{\rho_0 (2RT_0)^{3/2}}, \ \tilde \mu = \frac{\mu}{\rho_0 L_0 (2RT_0)^{1/2}}, \ \tilde f = \frac{f}{\rho_0 / (2RT_0)^{D/2}}.
\end{aligned}
\end{equation*}
The subscript zero represents the reference state. For brevity, the tilde notation for dimensionless variables will be removed henceforth.
In all simulations we employ monatomic hard-sphere gas, where the viscosity coefficient can be determined by Eq.(\ref{eqn:hs viscosity}).

\subsection{Data generation}

To further explain the principle of the data sampling strategy described in Section \ref{sec:data} and illustrate its superiority, we first compare the distribution of the data generated by Algorithm \ref{alg:sample} with the data collected directly from numerical simulation results of the Boltzmann equation.
The computational setup for Algorithm \ref{alg:sample} is presented in Table \ref{tab:sample}, where $\{\rho_s,U_s,T_s\}$ is the macroscopic variables used for sampling distribution functions, $\boldsymbol\gamma$ is the Lagrange multipliers in Eq.(\ref{eqn:entropy closure}), and $\mathcal U(a,b)$, $\mathcal N(\mu,\sigma)$, and $\mathcal T(\mu,\sigma,a,b)$ denote uniform, normal, and truncated normal distributions, respectively.
\begin{table}[htbp]
	\caption{Computational setup for sampling particle distribution functions in the data generation problem.}
	\centering
	\begin{tabular}{llllllll} 
		\hline
		$N_i$ & $N_m$ & $D$ & $\rho_s$ & $U_s$ & $T_s$ & $\gamma_4$ & $\gamma_5$ \\
		$3000$ & 4 & 1 & $\mathcal U(0,2)$ & $\mathcal U(-1.5,1.5)$ & $\mathcal U(0,4)$ & $\mathcal N(0,0.005)$ & $\mathcal T(0,0.005,-\infty,0)$ \\
		\hline
	\end{tabular} 
	\label{tab:sample}
\end{table}

Two problems are considered below as illustrative examples to produce simulation-based datasets.
\begin{enumerate}
    \item[(1)] Wave propagation problem. The initial particle distribution function is set to Maxwellian everywhere in correspondence with the following flow field,
    \begin{equation*}
        \left(\begin{array}{c}
        \rho \\
        U \\
        p \\
        \end{array}\right)_{{t=0}}=\left(\begin{array}{c}
        A(1+a \sin(2\pi x)) \\
        b \\
        c \\
        \end{array}\right).
    \end{equation*}
    The detailed computational setup is presented in Table \ref{tab:wave}.
    
    \item[(2)] Sod shock tube problem. The initial particle distribution function is set to Maxwellian everywhere in correspondence with the one-dimensional Riemann problem,
    \begin{equation*}
        \left(\begin{array}{c}
        \rho \\
        U \\
        p \\
        \end{array}\right)_{{t=0,x_L}}=\left(\begin{array}{c}
        \rho_L \\
        0 \\
        p_L \\
        \end{array}\right),\ 
        \left(\begin{array}{c}
        \rho \\
        U \\
        p \\
        \end{array}\right)_{{t=0,x_R}}=\left(\begin{array}{c}
        \rho_R \\
        0 \\
        p_R \\
        \end{array}\right).
    \end{equation*}
    The detailed computational setup is presented in Table \ref{tab:sod}.
\end{enumerate}

\begin{table}[htbp]
	\caption{Computational setup for wave propagation problem.} 
	\centering
	\begin{tabular}{llllllll} 
		\hline
		$t$ & $x$ & $N_x$ & $u$ & $N_u$ & Quadrature & Kn & $A_s$ \\ 
		$(0,5]$ & $[0,1]$ & $200$ & $[-8,8]$ & 80 & Rectangular & $10^{-3}$ & $\mathcal U(0.5,1.5)$ \\ 
		\rule{0pt}{0ex}\\
		$a_s$ & $b_s$ & $c_s$ & Collision & Integrator & Boundary & CFL \\ 
		$\mathcal U(0,0.2)$ & $\mathcal U(-1,1)$ & $\mathcal U(0.5,1.5)$ & Shakhov & RK4 & Periodic & 0.5 \\ 
		\hline
	\end{tabular} 
	\label{tab:wave}
\end{table}

\begin{table}[htbp]
	\caption{Computational setup for Sod shock tube problem.} 
	\centering
	\begin{tabular}{llllllll} 
		\hline
		$t$ & $x$ & $N_x$ & $u$ & $N_u$ & Quadrature & Kn & $\rho_{Ls}$ \\
		$(0,0.2]$ & $[0,1]$ & $200$ & $[-8,8]$ & 80 & Rectangular & $10^{-4}$ & $\mathcal U(0,2)$ \\
		\rule{0pt}{0ex}\\
		$\rho_{Rs}$ & $T_{Ls}$ & $T_{Rs}$ & Collision & Integrator & Boundary & CFL \\ 
		$\mathcal U(0,0.25)$ & $\mathcal U(1,3)$ & $\mathcal U(0,1.25)$ & Shakhov & IMEX & Dirichlet & 0.5 \\
		\hline
	\end{tabular} 
	\label{tab:sod}
\end{table}

We compare the datasets generated from the algorithmic sampler described in Section \ref{sec:data} and the above two numerical examples with the same sample size $N_i=3000$.
Two cases are considered.
In the first case, the input parameters of two examples are fixed as $\{A,a,b,c\}=\{1,0.1,1,1\}$ and $\{\rho_L,\rho_R,T_L,T_R\}=\{1,0.125,2,0.625\}$ to analyze the dataset produced by a single numerical simulation.
In the second case, we sample the parameters ten times according to the uniform distribution of the intervals given in Table \ref{tab:wave} and \ref{tab:sod} to characterize the datasets collected from multiple numerical simulations with various inputs.

Figure \ref{fig:sample single} and \ref{fig:sample multiple} present the phase diagrams between macroscopic variables generated by the algorithmic sampler using Algorithm \ref{alg:sample} and the simulation results.
As shown in Figure \ref{fig:sample single}, the samples from a single numerical simulation can suffer from a strong bias.
Such a phenomenon is caused by the physical constraints in a particular flow problem.
In the wave propagation problem, the density and temperature is linearly coupled due to the isobaric pressure.
In the Sod shock tube problem, the density, velocity and temperature are correlated around wave structures, i.e., the rarefaction wave, contact discontinuity, and shock wave.
The discontinuous flow field in the Sod problem makes the samples show a even sparser distribution in the phase diagram.
As particle distribution functions generated from multiple sampling of parameters and numerical simulations are added to the dataset, as presented in Figure \ref{fig:sample multiple}, a wider range of flow variables can be obtained.
From this trend, it may be feasible to generate reliable datasets through numerical simulations if the sampling ranges of parameters are further extended, but this obviously requires the use of appropriate physical problems and adequate sample size.
To accomplish the goal, the use of trial-and-error processes is inevitable.

In contrast, the algorithmic sampler generates a dataset of particle distribution function with a much wider range of densities, velocities and temperatures.
The obtained samples are homogeneously distributed without a bias towards a preferred direction.
Therefore, the predefined domain of $\{\rho,U,T\}$ is fully covered with a relatively small sample size.
Note that this also implies a broader and unbiased distribution of particle distribution functions, since there is a one-to-one correspondence between distribution functions and macroscopic quantities.
Compared with the simulation-based sampling, the broader dataset provided by the algorithmic sampler reduces the risk of overfitting and helps improve the generalization of the neural network model.
Table \ref{tab:cost1} provides the computation time and allocations for the three methods (including the garbage collection with Julia programming language \cite{abelson1996structure}).
Since the presented sampling strategy does not require a full simulation of the Boltzmann equation with multiple cases or initial conditions, the computational cost is significantly reduced by three to five orders of magnitude.

\begin{table}[htbp]
	\caption{Computational cost of the data generation problem.} 
	\centering
	\begin{tabular}{llllllll} 
		\hline
		 & Time & Allocations & Memory & \\
		\hline
		Wave & $197.12$ s & $1.49\times 10^9$ & $485.94$ GB & \\
		Sod & $41.46$ s & $4.18\times 10^6$ & $13.15$ GB & \\ 
		Sampler & $0.03$ s & $7.27\times 10^3$ & $13.33$ MB & \\
		\hline
	\end{tabular} 
	\label{tab:cost1}
\end{table}

\subsection{Homogeneous relaxation}

In this numerical experiment, we study the homogeneous relaxation problem of non-equilibrium particle distribution functions.
To fully validate the ability of RelaxNet to approximate the solution of the Boltzmann and its model equations, three collision models, i.e., the Shakhov model, the velocity-dependent $\nu$-BGK model, and the full Boltzmann integral model described in Section \ref{sec:collision data}, are employed to provide ground-truth datasets for training the neural network.

\subsubsection{Shakhov model}\label{sec:relaxation shakhov}

First we consider the non-equilibrium distribution function in the initial state,
\begin{equation}
    f(t=0,u) = \frac{1}{2\sqrt{\pi}} \left( \exp\left(-\left(u-1.5\right)^2\right) + \frac{7}{10}\exp\left(-(u+1.5)^2 \right) \right).
    \label{eqn:f0 1d}
\end{equation}
We construct RelaxNet in such a way that there are two hidden layers of the same dimension as the input layer in both
$\mathcal E$-net and $\tau$-net.
A dataset is generated by Algorithm \ref{alg:sample} to train RelaxNet, which is then used to solve the universal Boltzmann equation.
The setups of the neural network, training and computation are provided in Table \ref{tab:shakhov}, where $N_h=2$ is the number of hidden layers and Tsit5 denotes Tsitouras’ 5/4 Runge-Kutta method \cite{tsitouras2011modified}.

\begin{table}[htbp]
	\caption{Computational setup for the homogeneous relaxation problem based on the Shakhov model.}
	\centering
	\begin{tabular}{lllllllll} 
	    \hline
	    $D$ & $u$ & $N_u$ & $N_i$ & $\rho_s$ & $U_s$ \\
	    1 & $[-8,8]$ & 80 & 10000 & $\mathcal U(0.1,10)$ & $\mathcal U(-1,1)$ \\
		\rule{0pt}{0ex}\\
		$V_s$ & $W_s$ & $T_s$ & $\gamma_4$ & $\gamma_5$ & $\mathrm{Kn}_s$ \\
		$0$ & $0$ & $\mathcal U(0.1,5)$ & $\mathcal N(0,0.005)$ & $\mathcal T(0,0.005,-\infty,0)$ & $\mathcal U(0.001,1)$ \\
		\rule{0pt}{0ex}\\
		$N_h$ & $N_m$ & Reference & Optimizer & $t$ & Kn \\
	    2 & 4 & Boltzmann & Adam & $(0,8]$ & 1 \\
		\rule{0pt}{0ex}\\
		Integrator & Quadrature & CFL \\
		Tsit5 & Rectangular & 0.3 \\
		\hline
	\end{tabular} 
	\label{tab:shakhov}
\end{table}

The values of cost function in Eq.(\ref{eqn:cost function}) for both training and test datasets during the training process are plotted in Figure \ref{fig:shakhov train}.
After the parameters of RelaxNet are optimized, the initial value problem of the UBE model is solved.
Note that the bimodal distribution in Eq.(\ref{eqn:f0 1d}) can fall outside the training set, so this example actually acts as a validation set.
Figure \ref{fig:shakhov solution} presents the solutions of particle distribution functions at different time instants computed by the BGK, Shakhov, and UBE model.
It can be seen that the numerical solution provided by UBE is closer to the Shakhov reference solution than the BGK model.
This is attributed to a more accurate approximation of the collision term of the Boltzmann equation, which is shown in Figure \ref{fig:shakhov collision}.
Figure \ref{fig:shakhov macro} presents the evolution of macroscopic density and energy with time.
It is clear that the conservation of mass is exactly satisfied, which provides a corroboration of Theorem \ref{theorem:2}.
The error in energy slightly exceeds the range of the Shakhov reference solution.
Due to the interpretability of RelaxNet, we can open the black box of forward pass in the UBE model and explain the issue.
Figure \ref{fig:shakhov equilibrium} presents the equilibrium state $\mathcal E$ approximated by RelaxNet at different time instants, which intuitively explains the mechanism of UBE.
Note that the Shakhov model does not satisfy $H$-theorem and the logarithm of equilibrium distribution corrected by heat flux cannot be written as a sum of collision invariants.
As a result, the approximation of RelaxNet deviates a bit from the reference Shakhov solution.

\subsubsection{$\nu$-BGK model}\label{sec:relaxation nubgk}

Next we consider the same setup as in Section \ref{sec:relaxation shakhov}, except that the velocity-dependent $\nu$-BGK model shown in Section \ref{sec:collision data} is employed to produce reference solution.
Figure \ref{fig:nubgk solution} presents the solutions of particle distribution functions at different time instants computed by the BGK, the $\nu$-BGK, and the UBE model.
It is clear that there is a more significant difference between the BGK and $\nu$-BGK solutions, while the solution provided by the UBE model is in excellent agreement with the reference $\nu$-BGK solution.
As we explained $\mathcal E$ in Section \ref{sec:relaxation shakhov}, RelaxNet allows us to quantitatively interpret the relaxation frequency $\nu$ in the UBE model.
Figure \ref{fig:nubgk collision} and \ref{fig:nubgk frequency} show the collision terms and relaxation frequencies used in RelaxNet, and compare them with the BGK and $\nu$-BGK models.
It is clear that the dependence between relaxation frequencies and particle velocities are precisely recovered, bringing a more accurate approximation of the collision term.
Figure \ref{fig:nubgk macro} presents the evolution of macroscopic density and energy with time.
It can be seen that although the parameters in RelaxNet are optimized based on the solution of $\nu$-BGK model in a supervised learning manner, the physics-informed regularization in the cost function in Eq.(\ref{eqn:cost function}) allows UBE to better preserve the conservation property, which provides a corroboration of Theorem \ref{theorem:2}.

\subsubsection{Full Boltzmann model}

We then turn to the full Boltzmann equation.
In this case, the initial particle distribution function is set as
\begin{equation*}
    f(t=0,u,v,w) = \frac{1}{\sqrt{\pi}} \left( \exp\left(-\left(u-1\right)^2\right) + \frac{7}{10}\exp\left(-(u+1)^2 \right) \right) \exp\left(-v^2\right) \exp\left(-w^2\right)  .
\end{equation*}
RelaxNet is constructed in the same way as in Section \ref{sec:relaxation shakhov} and \ref{sec:relaxation nubgk}.
Following Algorithm \ref{alg:sample}, the fast spectral method introduced in Section \ref{sec:collision data} is employed to generate the dataset used for training RelaxNet.
The computational setup is presented in Table \ref{tab:boltzmann}.

\begin{table}[htbp]
	\caption{Computational setup for the homogeneous relaxation problem based on the full Boltzmann model.}
	\centering
	\begin{tabular}{lllllllll} 
	    \hline
	    $D$ & $\mathbf v$ & $N_u$ & $N_v$ & $N_w$ & $N_i$ & \\
	    3 & $[-8,8]^3$ & 80 & 28 & 28 & 10000 & \\
		\rule{0pt}{0ex}\\
		$\rho_s$ & $U_s$ & $V_s$ & $W_s$ & $T_s$ & $\gamma_4$ \\
		$\mathcal U(0.1,10)$ & $\mathcal U(-1,1)$ & $0$ & $0$ & $\mathcal U(0.1,5)$ & $\mathcal N(0,0.005)$ \\
		\rule{0pt}{0ex}\\
		$\gamma_5$ & $\mathrm{Kn}_s$ & $N_h$ & $N_m$ & Reference & Optimizer \\
	    $\mathcal T(0,0.005,-\infty,0)$ & $\mathcal U(0.001,1)$ & 2 & 4 & Boltzmann & Adam \\
		\rule{0pt}{0ex}\\
		$t$ & Kn & Integrator & Quadrature & CFL \\
		$(0,8]$ & 1 & Tsit5 & Rectangular & 0.3 \\
		\hline
	\end{tabular} 
	\label{tab:boltzmann}
\end{table}

Figure \ref{fig:boltzmann solution} provides the evolution of particle distribution functions with time.
The variable used for display is the reduced distribution function on $u$ axis, i.e.,
\begin{equation*}
    h(t,u) = \int_{\mathbb R^2} f(t,u,v,w) dvdw.
\end{equation*}
As shown, UBE based on RelaxNet provides a solution consistent with the full Boltzmann equation, while the solution of the BGK model shows significant deviations.
Figure \ref{fig:boltzmann collision} presents the collision terms computed by the three approaches at different time instants.
Figure \ref{fig:boltzmann enu} shows the equilibrium distributions and relaxation frequencies approximated by RelaxNet.
It can be seen that the difference between the equilibrium distribution in RelaxNet and the Maxwellian distribution is slight, and thus the correction of the BGK solution is mainly achieved by the velocity-dependent relaxation frequency.
Different from predefined frequencies in the $\nu$-BGK model, e.g., in Eq.(\ref{eqn:nubgk frequency}), RelaxNet provides a variable relaxation frequency that depends on the solution of local particle distribution function, and thus provides an approximation that better fits the real collision frequency in Eq.(\ref{eqn:collision frequency}).
Figure \ref{fig:boltzmann entropy} shows the evolution of entropy density defined as $\eta = \langle f\log(f) \rangle$, 
from which the principle of increase of entropy and the H-theorem demonstrated in Section \ref{sec:theory} adn Theorem \ref{theorem:4} are validated.

Table \ref{tab:cost relaxation} presents the computational costs of the three methods to simulate the time-series solution of the homogeneous relaxation problem.
Since a large number of convolution operations for solving the Boltzmann collision integral are replaced by tensor multiplication, the computational cost of RelaxNet is significantly reduced compared.
Under the current numerical setup, the simulation time is reduced to $1/50$ of the fast spectral method for the full Boltzmann equation, and the memory usage is reduced to $1/17$.
This numerical experiment verifies the ability of the RelaxNet-based UBE to solve the Boltzmann equation accurately and efficiently.

\begin{table}[htbp]
	\caption{Computational cost of computing time-series solutions in the homogeneous relaxation problem.} 
	\centering
	\begin{tabular}{llllllll} 
		\hline
		 & Time & Allocations & Memory & \\
		\hline
		Boltzmann & $2.167$ s & $90322$ & $6.43$ GB & \\
		BGK & $1.076$ ms & $2939$ & $3.38$ MB & \\ 
		UBE & $43.472$ ms & $14344$ & $375.66$ MB & \\
		\hline
	\end{tabular} 
	\label{tab:cost relaxation}
\end{table}

\subsection{Normal shock structure}

We then study flow transport in inhomogeneous gases.
Due to the existence of non-Maxwellian particle distribution functions \cite{mott1951solution}, the normal shock wave problem is an ideal case to validate the RelaxNet-based UBE in solving highly non-equilibrium flows.
The initial flow field is initialized according to the Rankine-Hugoniot jump conditions, i.e.,
\begin{equation*}
    \left(\begin{array}{c}
    \rho \\
    U \\
    V \\
    W \\
    T \\
    \end{array}\right)_{{t=0,x_L}}=\left(\begin{array}{c}
    \rho_- \\
    U_- \\
    0 \\
    0 \\
    T_- \\
    \end{array}\right),\ 
    \left(\begin{array}{c}
    \rho \\
    U \\
    V \\
    W \\
    T \\
    \end{array}\right)_{{t=0,x_R}}=\left(\begin{array}{c}
    \rho_+ \\
    U_+ \\
    0 \\
    0 \\
    T_+ \\
    \end{array}\right),
\end{equation*}
where the upstream and downstream density, velocity, and temperature are denoted by $\{\rho_-, U_-, T_-\}$ and $\{\rho_+, U_+, T_+\}$, respectively. 
The initial particle distribution functions are set to Maxwellian according to the local flow variables,
\begin{equation*}
\begin{aligned}
    &\frac{\rho_+}{\rho_-}=\frac{(\varpi+1)\rm{Ma}^2}{(\varpi-1)\rm{Ma}^2+2},\\
    &\frac{U_+}{U_-}=\frac{(\varpi-1)\rm{Ma}^2+2}{(\varpi+1)\rm{Ma}^2},\\
    &\frac{T_+}{T_-}=\frac{ ((\varpi-1)\rm{Ma}^2+2) (2\varpi\rm{Ma}^2-\varpi+1) }{(\varpi+1)^2 \rm{Ma}^2},
\end{aligned}
\end{equation*}
where $\rm Ma$ is the Mach number, and $\varpi$ is the ratio of specific heat.
The upstream flow variables are used as references to dimensionaless the system.
The detailed computational setup is provided in Table \ref{tab:shock}.
Note that the dataset used to optimize the parameters of RelaxNet consists of two parts.
In addition to the data generated by Algorithm \ref{sec:algorithm} with predefined parameters, as the numerical simulation progresses, the case-specific data can be supplemented on the fly, as shown in Figure \ref{fig:solver}.

\begin{table}[htbp]
	\caption{Computational setup for the normal shock structure problem.}
	\centering
	\begin{tabular}{lllllllll} 
	    \hline
	    $D$ & $\mathbf v$ & $N_u$ & $N_v$ & $N_w$ & $N_i$ & \\
	    3 & $[-10,10]^3$ & 80 & 28 & 28 & 10000 & \\
		\rule{0pt}{0ex}\\
		$\rho_s$ & $U_s$ & $V_s$ & $W_s$ & $T_s$ & $\gamma_4$ \\
		$\mathcal U(0.5,5)$ & $\mathcal U(-3,3)$ & $0$ & $0$ & $\mathcal U(0.5,5)$ & $\mathcal N(0,0.005)$ \\
		\rule{0pt}{0ex}\\
		$\gamma_5$ & $\mathrm{Kn}_s$ & $N_h$ & $N_m$ & Reference & Optimizer \\
	    $\mathcal T(0,0.005,-\infty,0)$ & $\mathcal U(0.01,5)$ & 2 & 4 & Boltzmann & Adam \\
		\rule{0pt}{0ex}\\
	    $t$ & $x$ & $N_x$ & Kn & Ma & $\varpi$ \\
		$(0,250]$ & $[-35,35]$ & 100 & 1 & $[2,3]$ & $5/3$ \\
		\rule{0pt}{0ex}\\
		Integrator & Boundary & Quadrature & CFL \\
		Tsit5 & Dirichlet & Rectangular & 0.5 \\
		\hline
	\end{tabular} 
	\label{tab:shock}
\end{table}

Figure \ref{fig:shock macro ma2} shows the profiles of density, velocity, temperature, viscous stress, heat flux, and entropy density in the shock structure at $\rm Ma=2$.
The definition of viscous stress is the difference between the first element in the pressure tensor and the isotropic pressure, i.e.,
\begin{equation*}
    \Delta = \mathbf P_{xx} - p.
\end{equation*}
When the Mach number is small, the particle distribution functions inside the shock wave deviate moderately away from Maxwellian, and thus the Boltzmann, BGK, and UBE solutions are roughly equivalent.
The discrepancy of the viscous stress and heat flux provided by the Boltzmann and BGK models can be observed due to the fact that high-order moments are more sensitive to the nuances of distribution function than density and bulk velocity.
Figure \ref{fig:shock macro ma3} presents the macroscopic flow field at $\rm Ma=3$.
It is obvious that the difference between the Boltzmann and BGK solutions become more significant with the increasing Mach number.
The unit Prandtl number in the BGK model leads to an incorrect heat transfer rate and an early rise in the temperature profile.
Thanks to the solution-dependent equilibrium and relaxation frequency, the current UBE is able to provide solutions equivalent to the reference Boltzmann results in all cases.

Figure \ref{fig:shock pdf ma2} and \ref{fig:shock pdf ma3} provide the reduced distribution functions and collision terms in the $x-u$ phase space at $\rm Ma=2$ and $\rm Ma=3$.
An one-to-one correspondence can be clearly observed between the evolution of macroscopic quantities and particle distribution functions.
When $\rm Ma=3$, significant difference can be obtained between the BGK and UBE collision terms, resulting in different distributions of particle distribution functions and macroscopic flow variables.
Table \ref{tab:cost shock} shows the computational cost of computing the right-hand side operator once in the Boltzmann, BGK, and UBE models.
The benchmark verifies that the RelaxNet-based UBE increases the computational efficiency by a factor of 108 and reduces the memory load to $1/43$ compared to the fast spectral method of the Boltzmann equation.
\begin{table}[htbp]
	\caption{Computational cost of computing the right-hand side of the kinetic model in the normal shock structure problem.} 
	\centering
	\begin{tabular}{llllllll} 
		\hline
		 & Time & Allocations & Memory & \\
		\hline
		Boltzmann & $32.114$ ms & $1446$ & $102.98$ MB & \\
		BGK & $5.893$ $\rm \mu s$ & $36$ & $12.92$ KB & \\ 
		UBE & $295.947$ $\rm \mu s$ & $93$ & $2.38$ MB & \\
		\hline
	\end{tabular} 
	\label{tab:cost shock}
\end{table}

\subsection{Lid-driven cavity}

In the last numerical experiment, we study the lid-driven cavity to validate the ability of current approach to solve non-equilibrium flows under multi-dimensional geometry. 
The initial flow field inside the square cavity is set as
\begin{equation*}
    \left(\begin{array}{c}
    \rho \\
    U \\
    V \\
    W \\
    T \\
    \end{array}\right)_{{t=0}}=\left(\begin{array}{c}
    1 \\
    0 \\
    0 \\
    0 \\
    1 \\
    \end{array}\right),
\end{equation*}
and the particle distribution function is set to Maxwellian everywhere.
The flow domain is enclosed by four isothermal solid walls.
The upper wall moves in the tangent direction with $\mathbf V_w^u = (0.15, 0, 0)^T$, with the rest three walls kept still with $\mathbf V_w^{d,l,r} = (0, 0, 0)^T$.
Maxwell’s diffusive boundary is considered at all wall boundaries.
The detailed computational setup is provided in Table \ref{tab:cavity}.

\begin{table}[htbp]
	\caption{Computational setup for the lid-driven cavity problem.}
	\centering
	\begin{tabular}{lllllllll} 
	    \hline
	    $D$ & $\mathbf v$ & $N_u$ & $N_v$ & $N_w$ & $N_i$ & \\
	    3 & $[-5,5]^3$ & 28 & 28 & 28 & 10000 & \\
		\rule{0pt}{0ex}\\
		$\rho_s$ & $U_s$ & $V_s$ & $W_s$ & $T_s$ & $\gamma_4$ \\
		$\mathcal U(0.5,5)$ & $\mathcal U(-1,1)$ & $\mathcal U(-1,1)$ & $0$ & $\mathcal U(0.5,5)$ & $\mathcal N(0,0.005)$ \\
		\rule{0pt}{0ex}\\
		$\gamma_5$ & $\mathrm{Kn}_s$ & $N_h$ & $N_m$ & Reference & Optimizer \\
	    $\mathcal T(0,0.005,-\infty,0)$ & $\mathcal U(0.01,1)$ & 2 & 4 & Boltzmann & Adam \\
		\rule{0pt}{0ex}\\
	    $\mathbf x$ & $N_x$ & $N_y$ & $\mathbf V_w^{u}$ & $\mathbf V_w^{d,l,r}$ & $T_w$ \\
		$[0,1]^2$ & 45 & 45 & $(0.15,0,0)^T$ & $(0,0,0)^T$ & 1 \\
		\rule{0pt}{0ex}\\
		$\varpi$ & Kn & Integrator & Boundary & Quadrature & CFL \\
		$5/3$ & 0.075 & Tsit5 & Maxwell & Rectangular & 0.5 \\
		\hline
	\end{tabular} 
	\label{tab:cavity}
\end{table}

Figure \ref{fig:cavity contour} shows the contours of density with streamlines and temperature with heat flux vectors inside the cavity.
Consistent with the report in \cite{john2010investigation}, the  inverse-Fourier heat flux driven by the viscous stress in the non-equilibrium regime is accurately identified.
Figure \ref{fig:cavity line} presents the velocity profiles along the horizontal and vertical central lines of the cavity.
The validity of UBE and the corresponding numerical method is fully validated by a quantitative comparison with the reference DSMC solution.
Figure \ref{fig:cavity pdf} and \ref{fig:cavity collision} further provide the particle distribution functions and collision terms along the two central lines.
For low-speed microflows, the non-equilibrium effect is weak and the difference between the Boltzmann and BGK solutions is mild.
The correction provided by the RelaxNet-based UBE for the BGK model can be observed, which ensures the correct heat transfer rate and Prandtl number.

\section*{Conclusion}

Scientific machine learning is increasingly showing its power and versatility in modeling and simulation in computational physics.
In this paper, a novel relaxation neural network (RelaxNet) is proposed as a surrogate model of the collision operator in the Boltzmann equation.
Based on the architecture of residual neural networks, RelaxNet builds a parameterized relaxation model with solution-dependent equilibrium state and frequency.
The quantitative interpretability of RelaxNet makes it the first machine learning model that strictly preserves all the key structural properties of the Boltzmann equation, including invariance, conservation, H-theorem, and the correct continuum limit.
Since the convolution in the original Boltzmann equation is simplified by the tensor multiplication in RelaxNet, the dimensionality and computational complexity of the collision term are greatly reduced.
Based on RelaxNet, a universal Boltzmann equation (UBE) model is developed, which fuses mechanical and neural models into a single differentiable framework that is compatible with source-to-source automatic differentiation.
Based on the entropy closure of the Boltzmann moment system, an algorithmic sampler is designed to generate datasets for model training and testing, and the superiority of this strategy over simulation-based samplers is demonstrated through numerical experiments.
The solution algorithm for solving the RelaxNet-based UBE model is described in detail.
Both spatially homogeneous and inhomogeneous test cases are presented to validate UBE and its solution algorithm. 
The current approach provides an accurate and efficient tool for the study of
non-equilibrium flow physics.
Like the network chain that contains multiple ResNet blocks, it is promising to extend RelaxNet to build multi-level relaxation models \cite{d2002multiple}, which remains to be investigated in future work.

\section*{CRediT authorship contribution statement}

\textbf{Tianbai Xiao:} : Conceptualization, Formal analysis, Investigation, Methodology, Project administration, Software, Visualization, Writing – original draft, Writing – review \& editing. \textbf{Martin Frank}: Formal analysis, Project administration, Resources, Supervision, Writing – review \& editing.

\section*{Declaration of competing interest}

The authors declare that they have no known competing financial interests or personal relationships that could have appeared to influence the work reported in this paper.

\section*{Acknowledgement}

The current research is funded by the German Research Foundation in the frame of the priority programm SPP 2298 ”Theoretical Foundations of Deep Learning” with the project number 441826958.

\clearpage
\newpage

\bibliographystyle{unsrt}
\bibliography{main}

\begin{figure}[htb!]
	\centering
	\subfigure[$\rho-U$]{
		\includegraphics[width=0.4\textwidth]{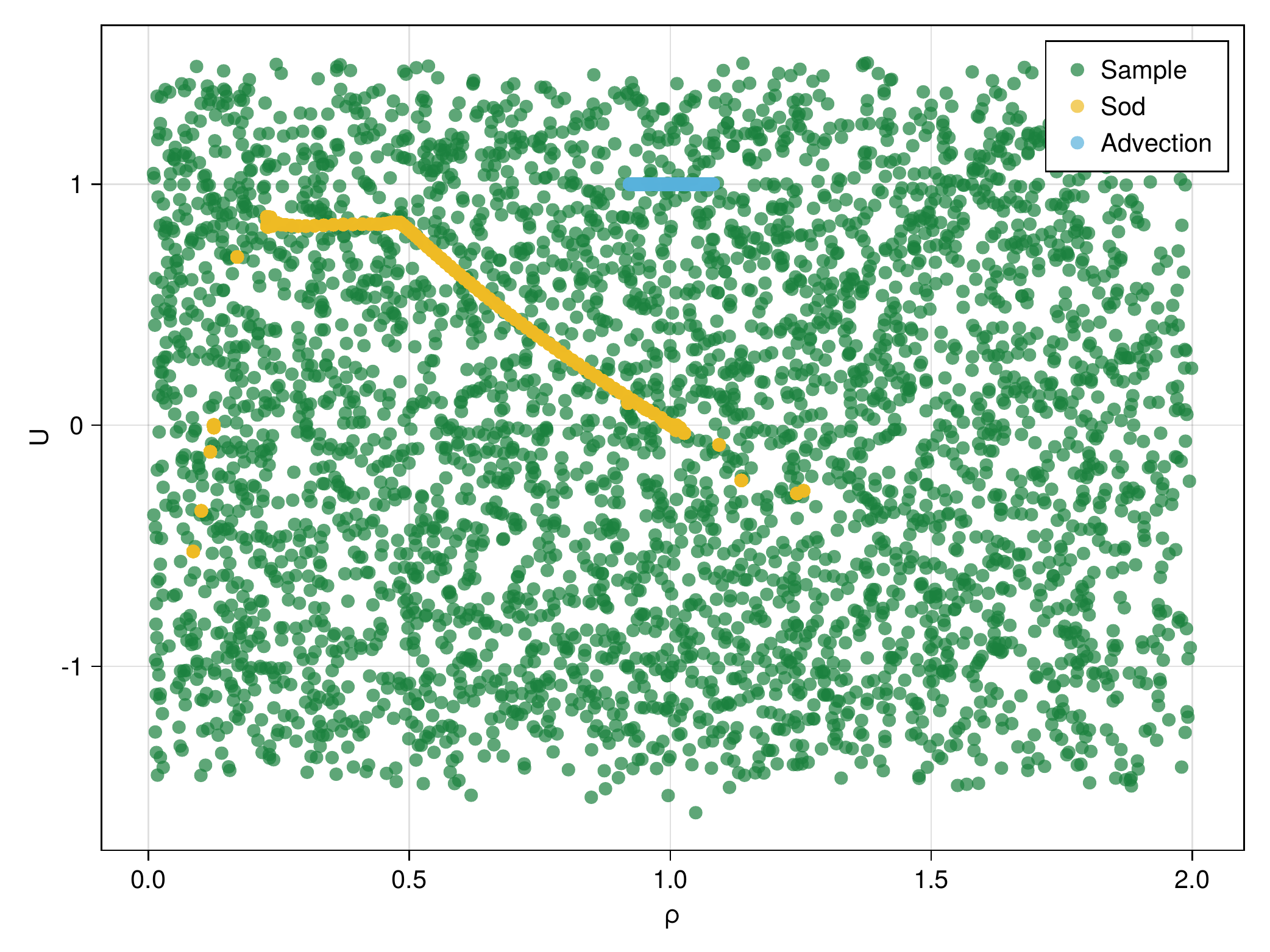}
	}
	\subfigure[$\rho-T$]{
		\includegraphics[width=0.4\textwidth]{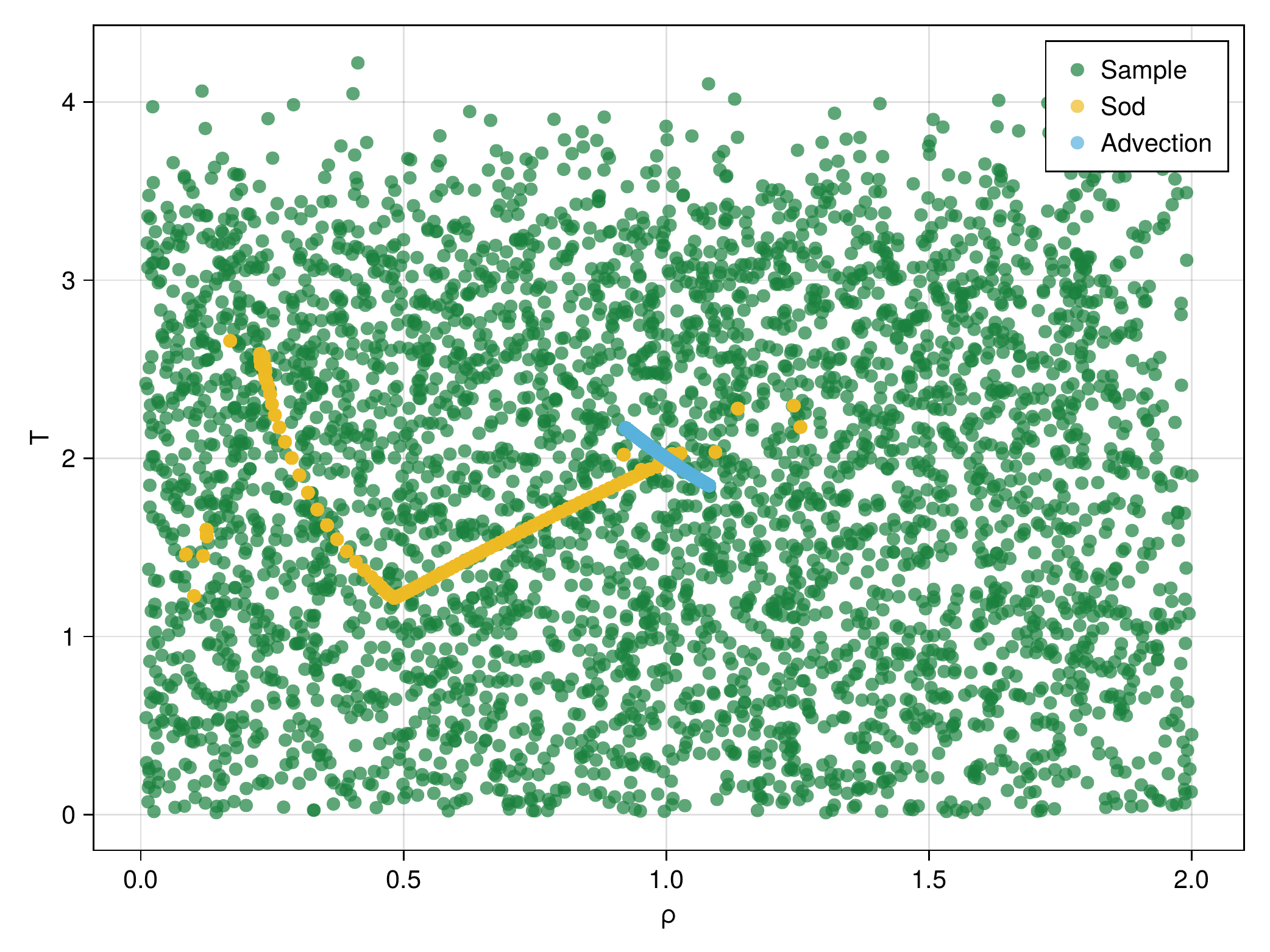}
	}
	\caption{phase diagrams of the data generated by the algorithmic sampler and a single simulation.}
    \label{fig:sample single}
\end{figure}

\begin{figure}[htb!]
	\centering
	\subfigure[$\rho-U$]{
		\includegraphics[width=0.4\textwidth]{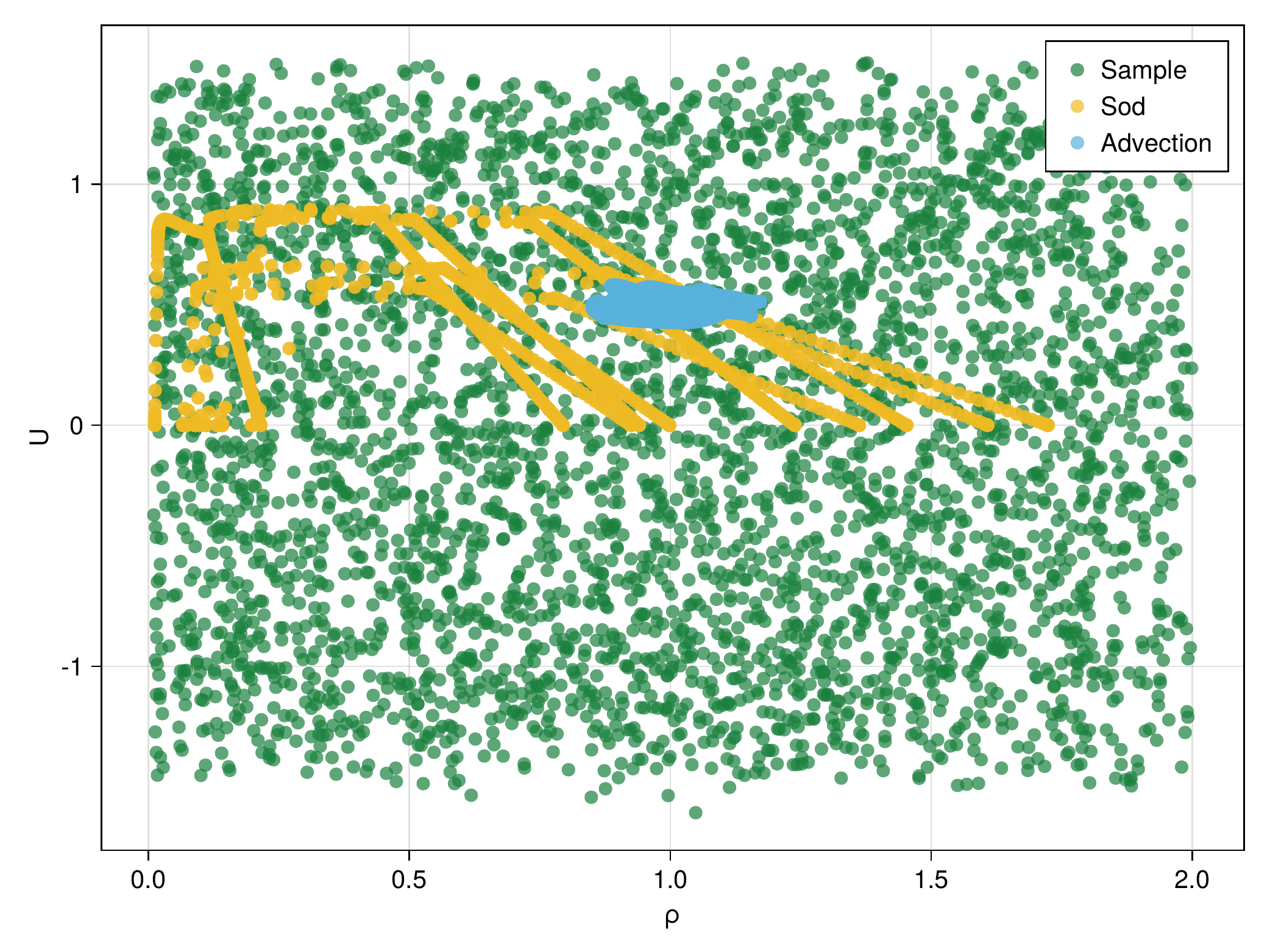}
	}
	\subfigure[$\rho-T$]{
		\includegraphics[width=0.4\textwidth]{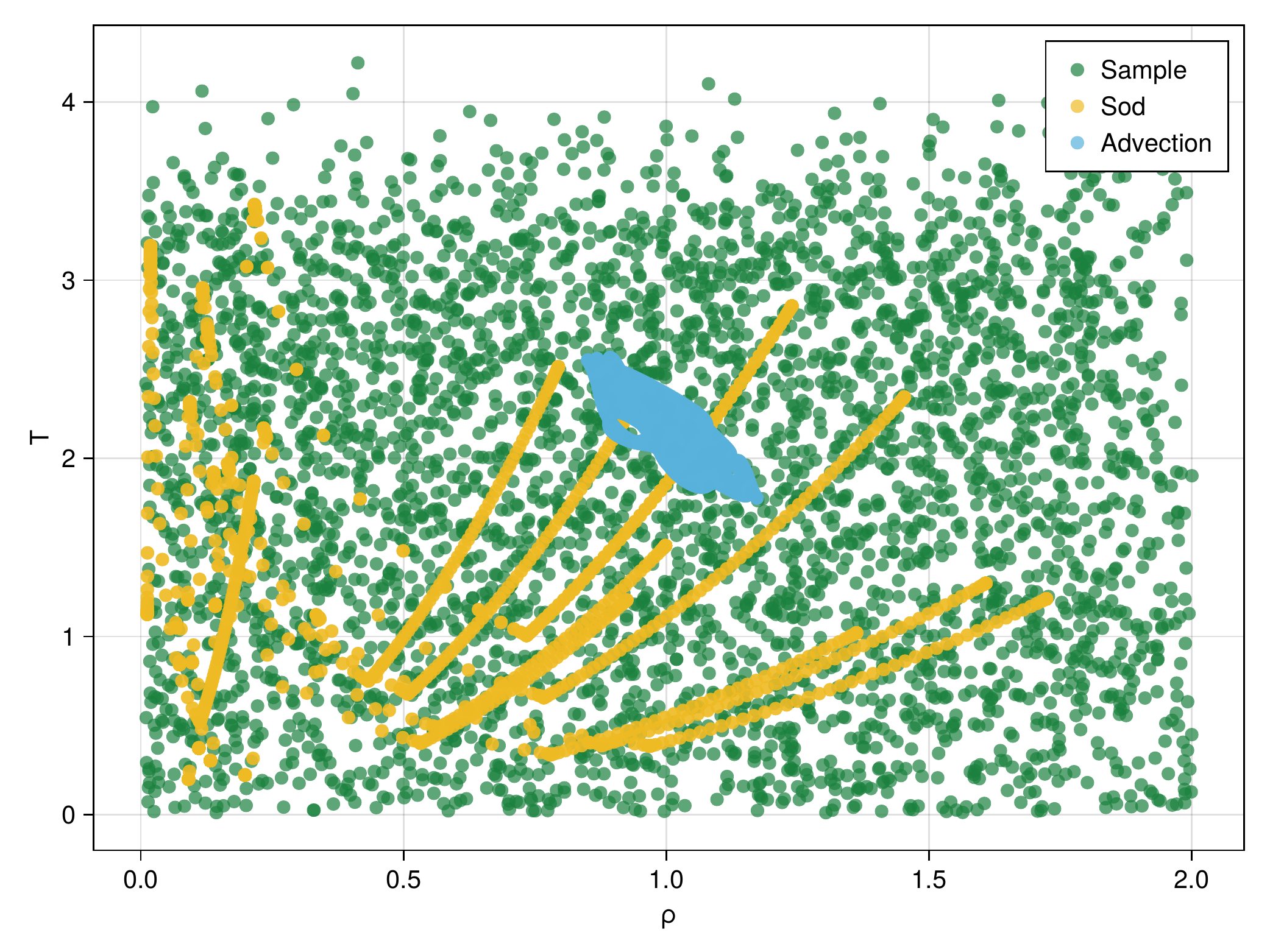}
	}
	\caption{phase diagrams of the data generated by the algorithmic sampler and multiple simulations.}
    \label{fig:sample multiple}
\end{figure}

\begin{figure}[htb!]
	\centering
	\includegraphics[width=0.4\textwidth]{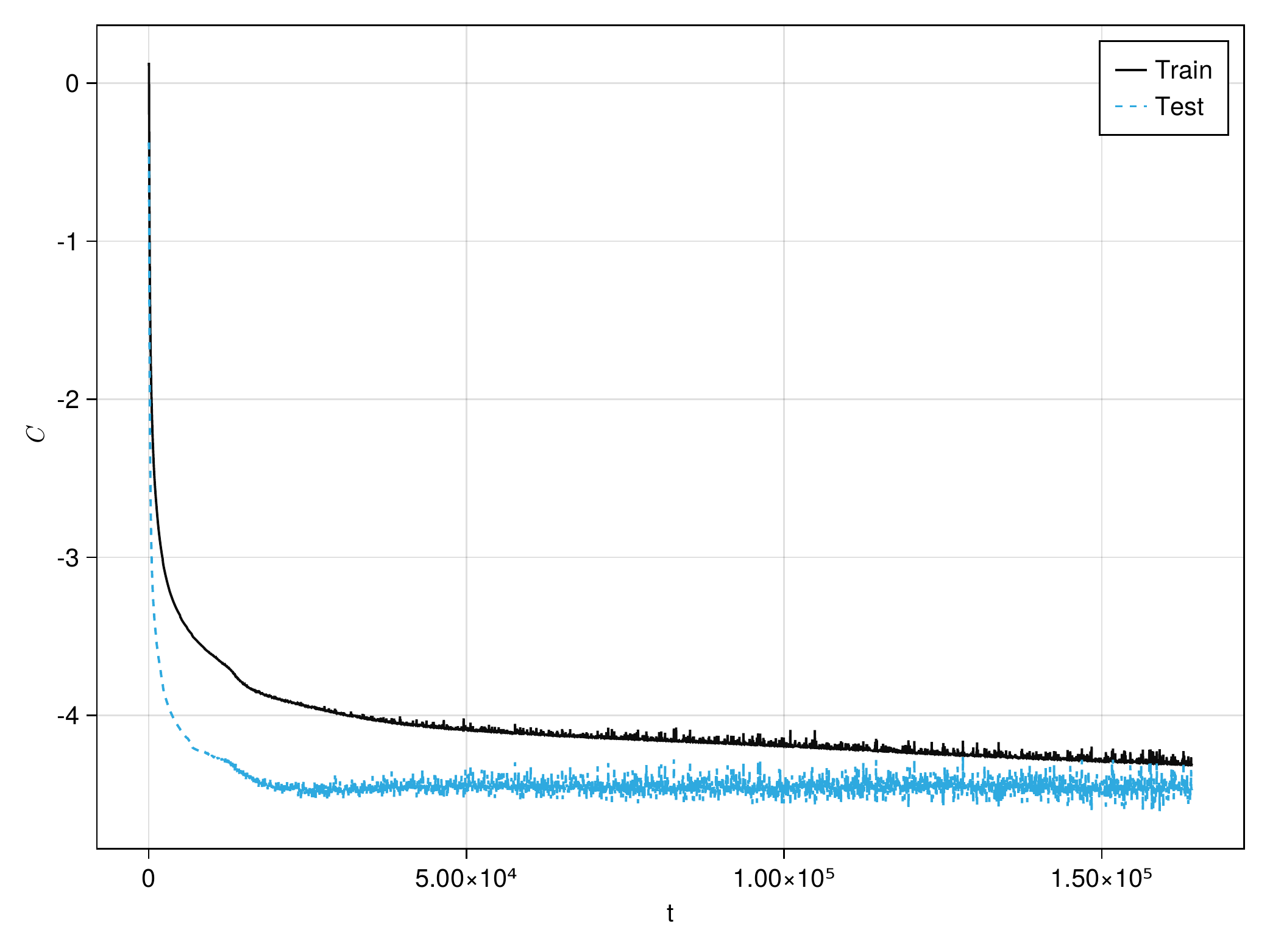}
	\caption{Values of cost function versus optimization time in the training and test datasets for the homogeneous relaxation problem.}
    \label{fig:shakhov train}
\end{figure}

\begin{figure}[htb!]
	\centering
	\subfigure[$t=0.5$]{
		\includegraphics[width=0.31\textwidth]{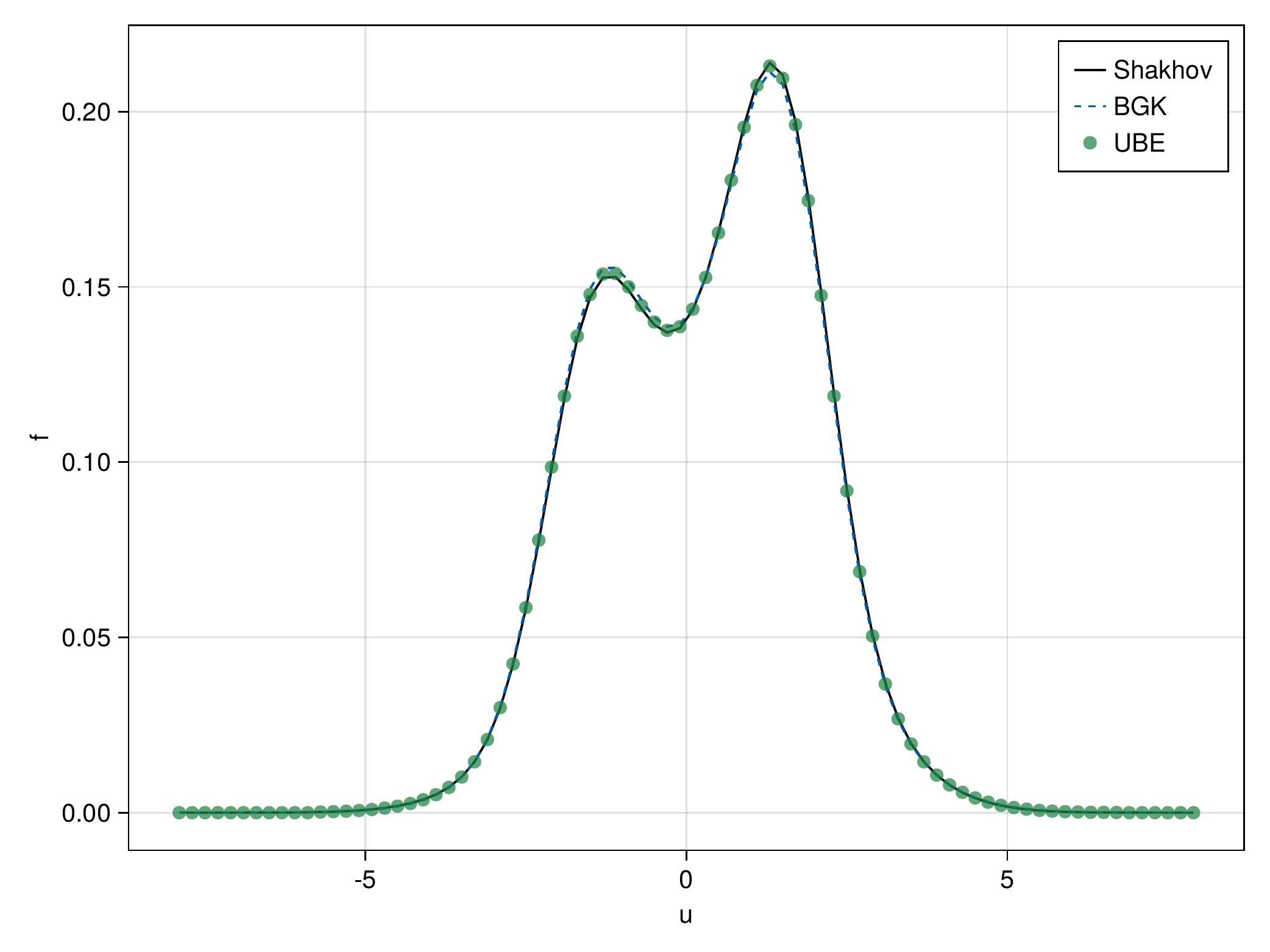}
	}
	\subfigure[$t=1$]{
		\includegraphics[width=0.31\textwidth]{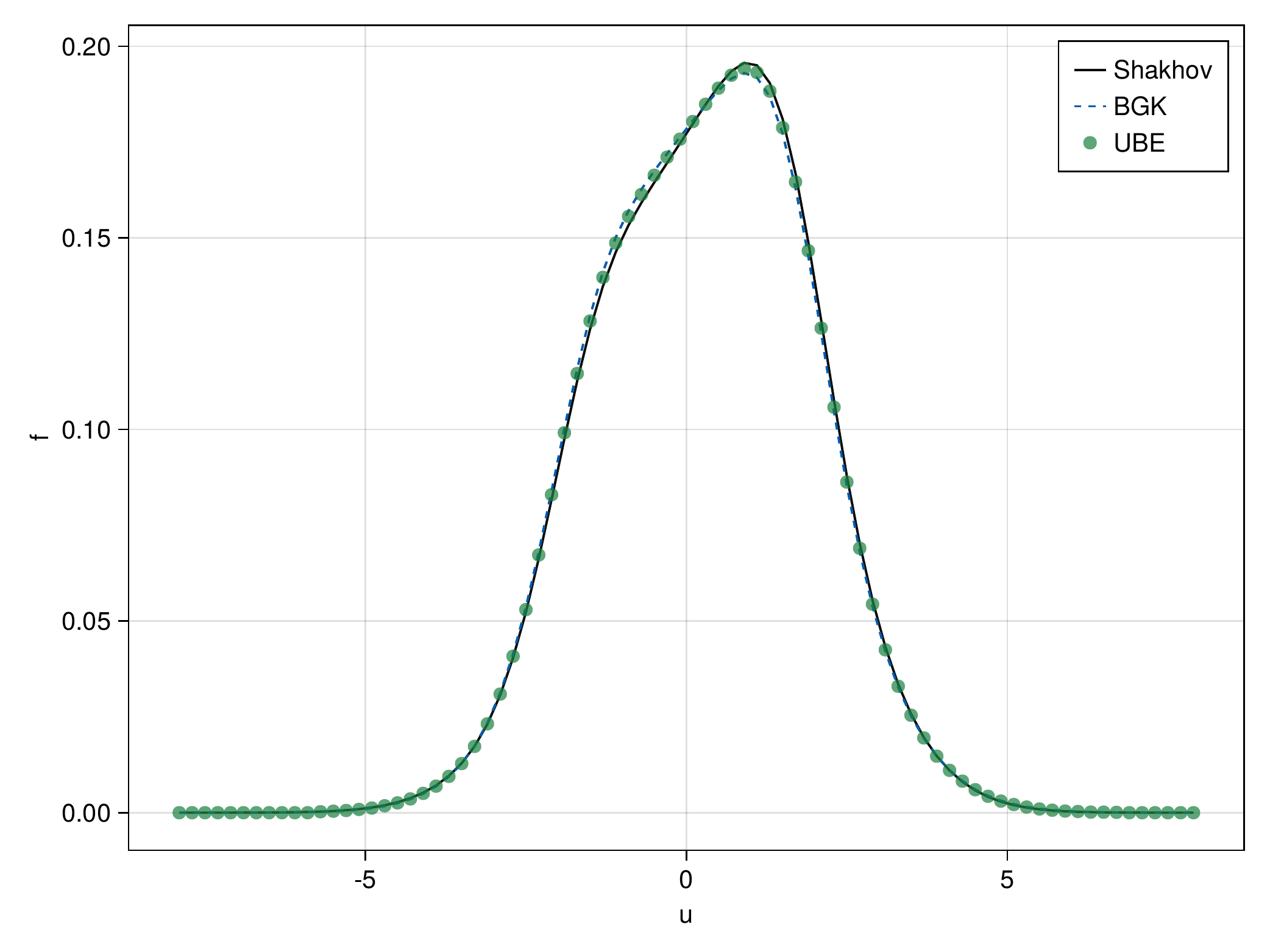}
	}
	\subfigure[$t=2$]{
		\includegraphics[width=0.31\textwidth]{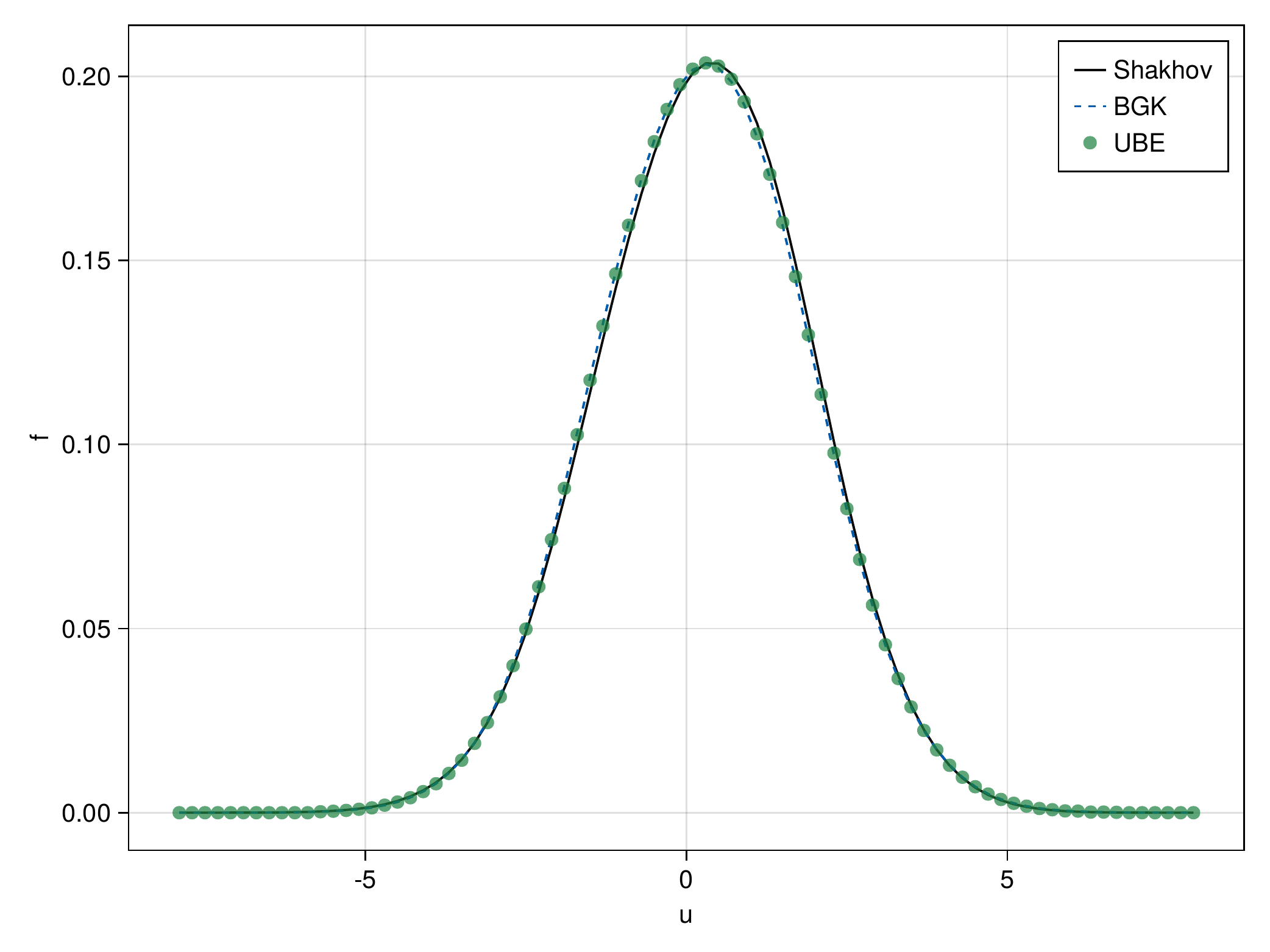}
	}
	\caption{Particle distribution functions at different time instants in the homogeneous relaxation problem (Shakhov model as reference solution).}
    \label{fig:shakhov solution}
\end{figure}

\begin{figure}[htb!]
	\centering
	\subfigure[$t=0$]{
		\includegraphics[width=0.31\textwidth]{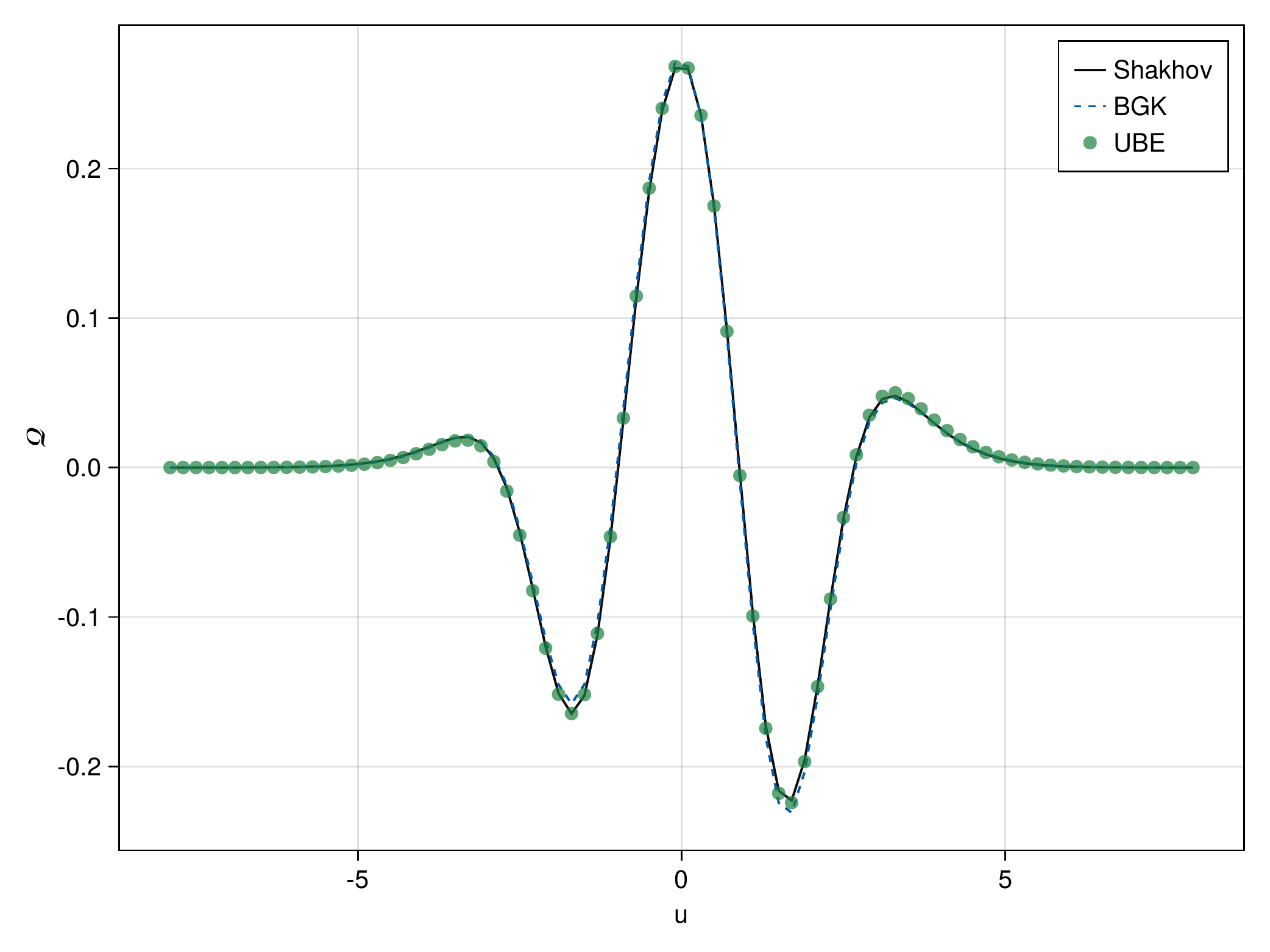}
	}
	\subfigure[$t=0.5$]{
		\includegraphics[width=0.31\textwidth]{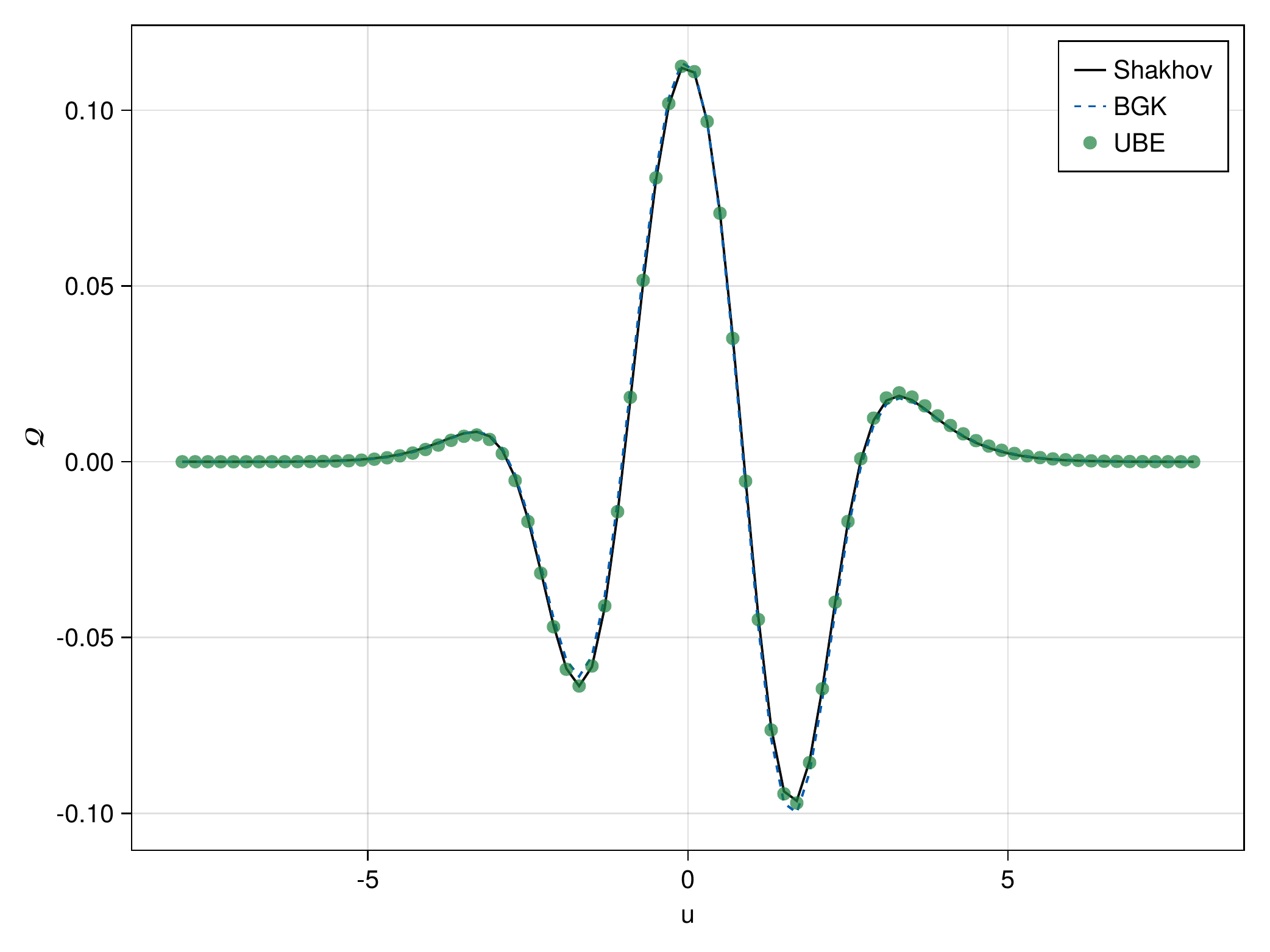}
	}
	\subfigure[$t=1$]{
		\includegraphics[width=0.31\textwidth]{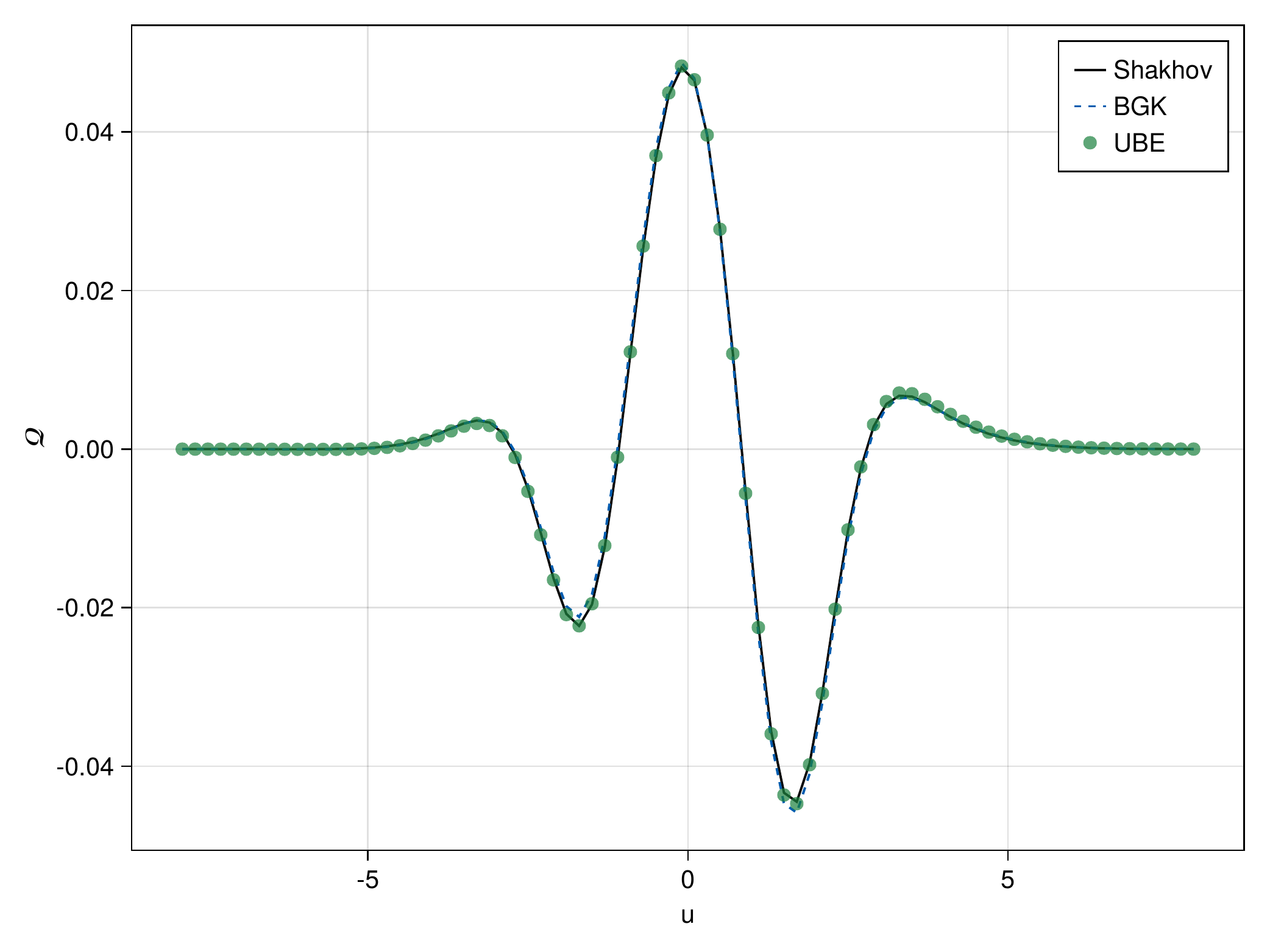}
	}
	\caption{Collision terms at different time instants in the homogeneous relaxation problem (Shakhov model as reference solution).}
    \label{fig:shakhov collision}
\end{figure}

\begin{figure}[htb!]
	\centering
	\subfigure[Density]{
		\includegraphics[width=0.4\textwidth]{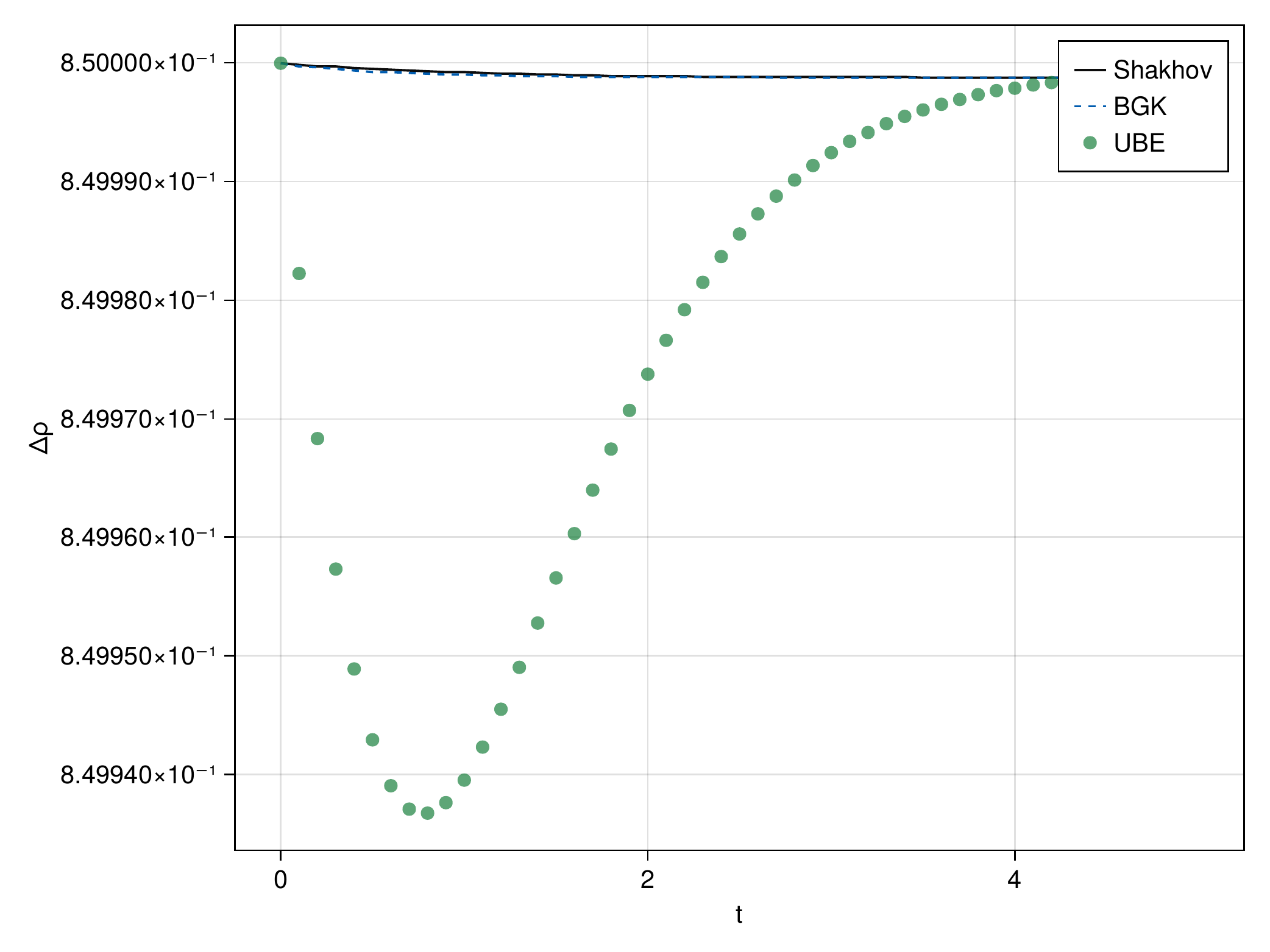}
	}
	\subfigure[Energy]{
		\includegraphics[width=0.4\textwidth]{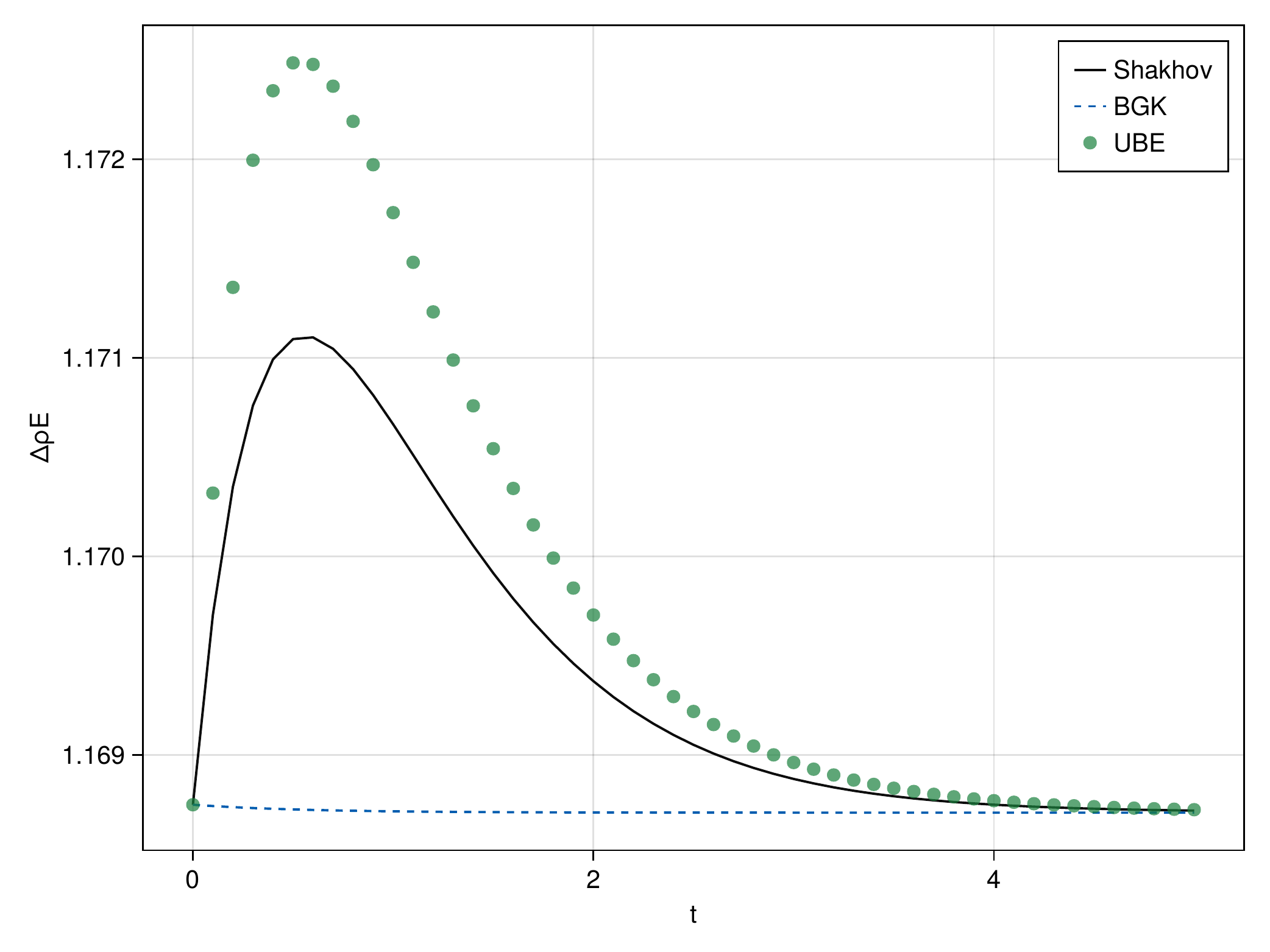}
	}
	\caption{Evolution of total density and energy with time in the homogeneous relaxation problem (Shakhov model as reference solution).}
    \label{fig:shakhov macro}
\end{figure}

\begin{figure}[htb!]
	\centering
	\subfigure[$t=0$]{
		\includegraphics[width=0.31\textwidth]{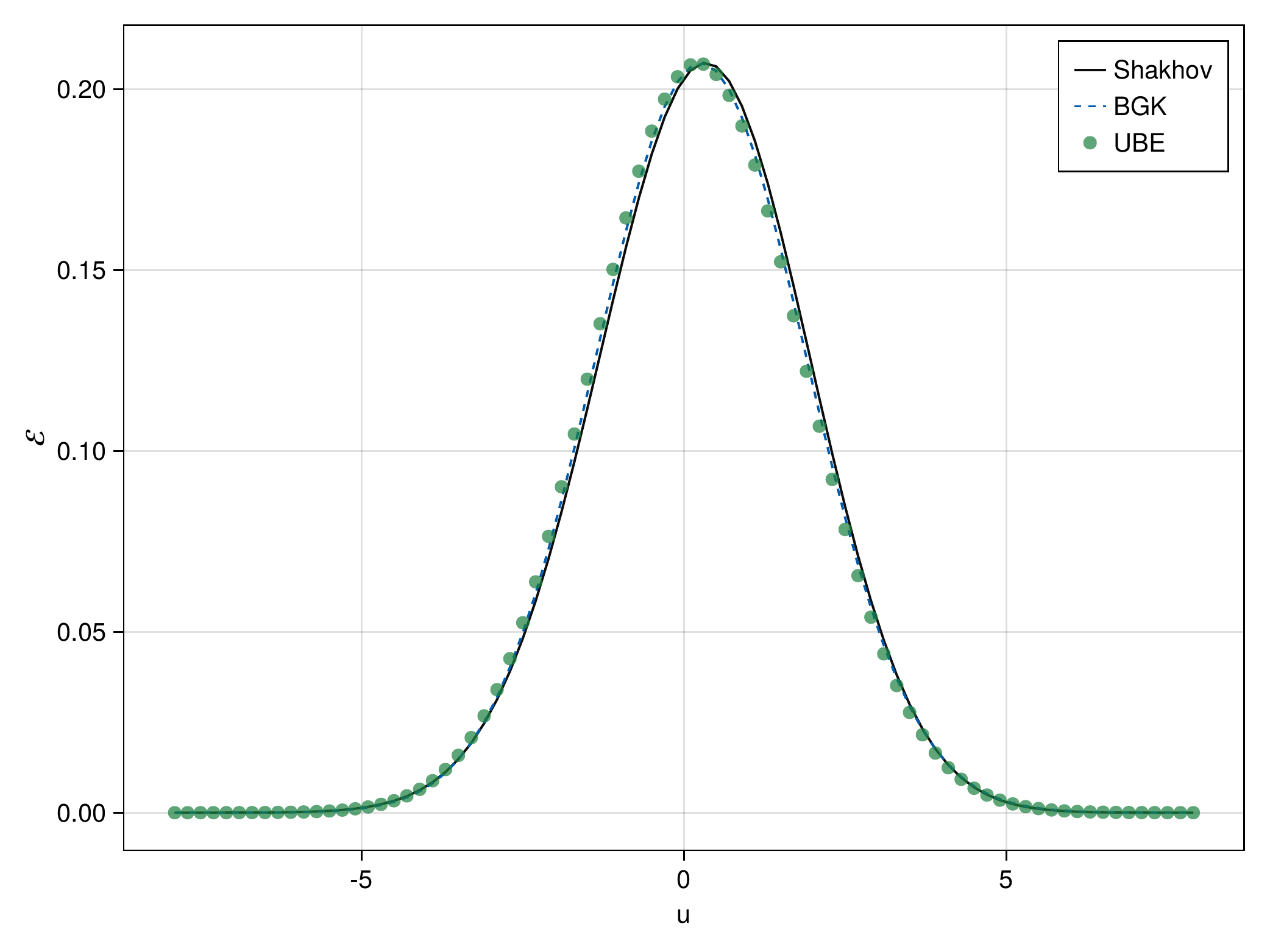}
	}
	\subfigure[$t=0.5$]{
		\includegraphics[width=0.31\textwidth]{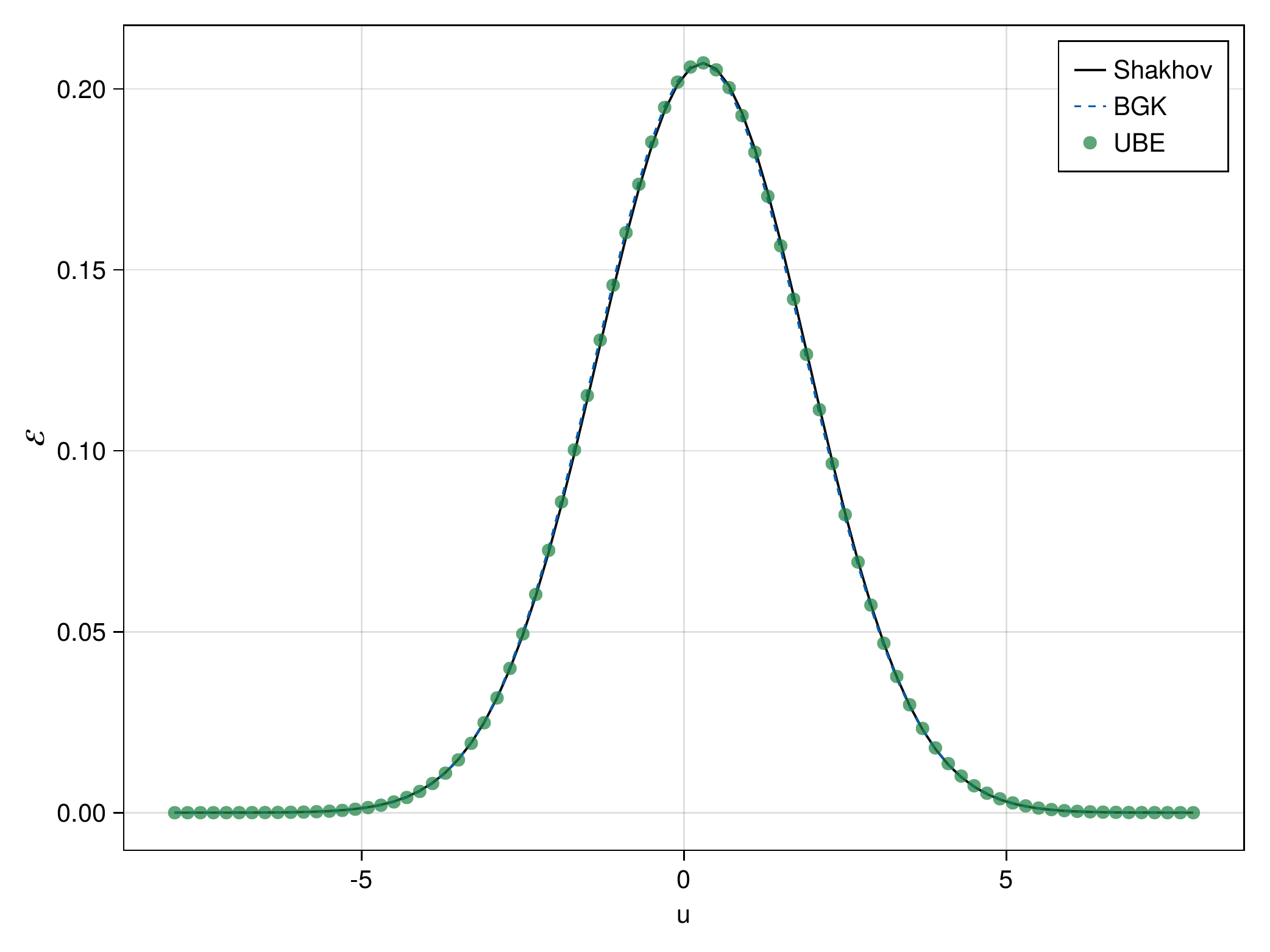}
	}
	\subfigure[$t=1$]{
		\includegraphics[width=0.31\textwidth]{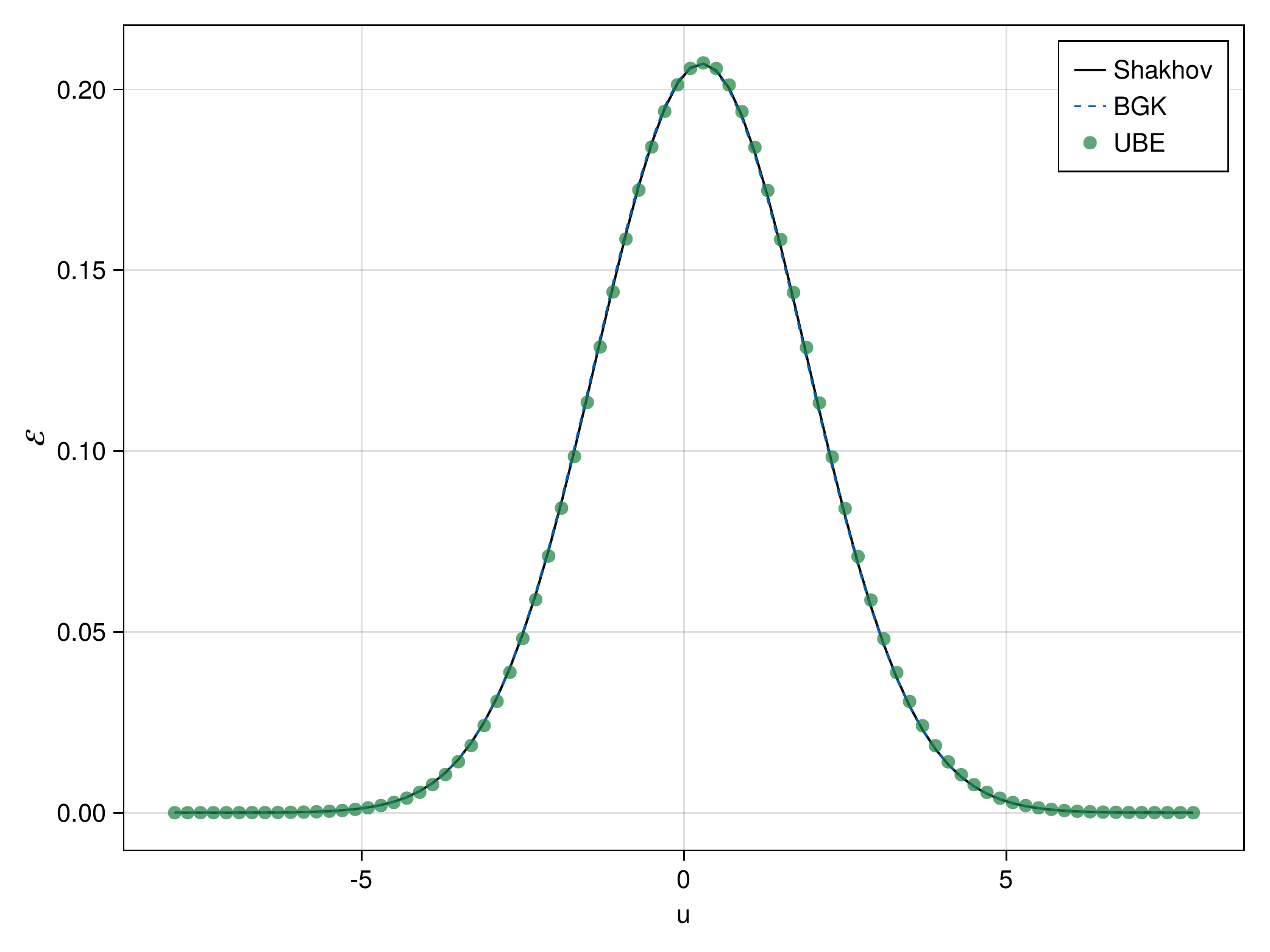}
	}
	\caption{Equilibrium functions at different time instants in the homogeneous relaxation problem (Shakhov model as reference solution).}
    \label{fig:shakhov equilibrium}
\end{figure}

\begin{figure}[htb!]
	\centering
	\subfigure[$t=0.5$]{
		\includegraphics[width=0.31\textwidth]{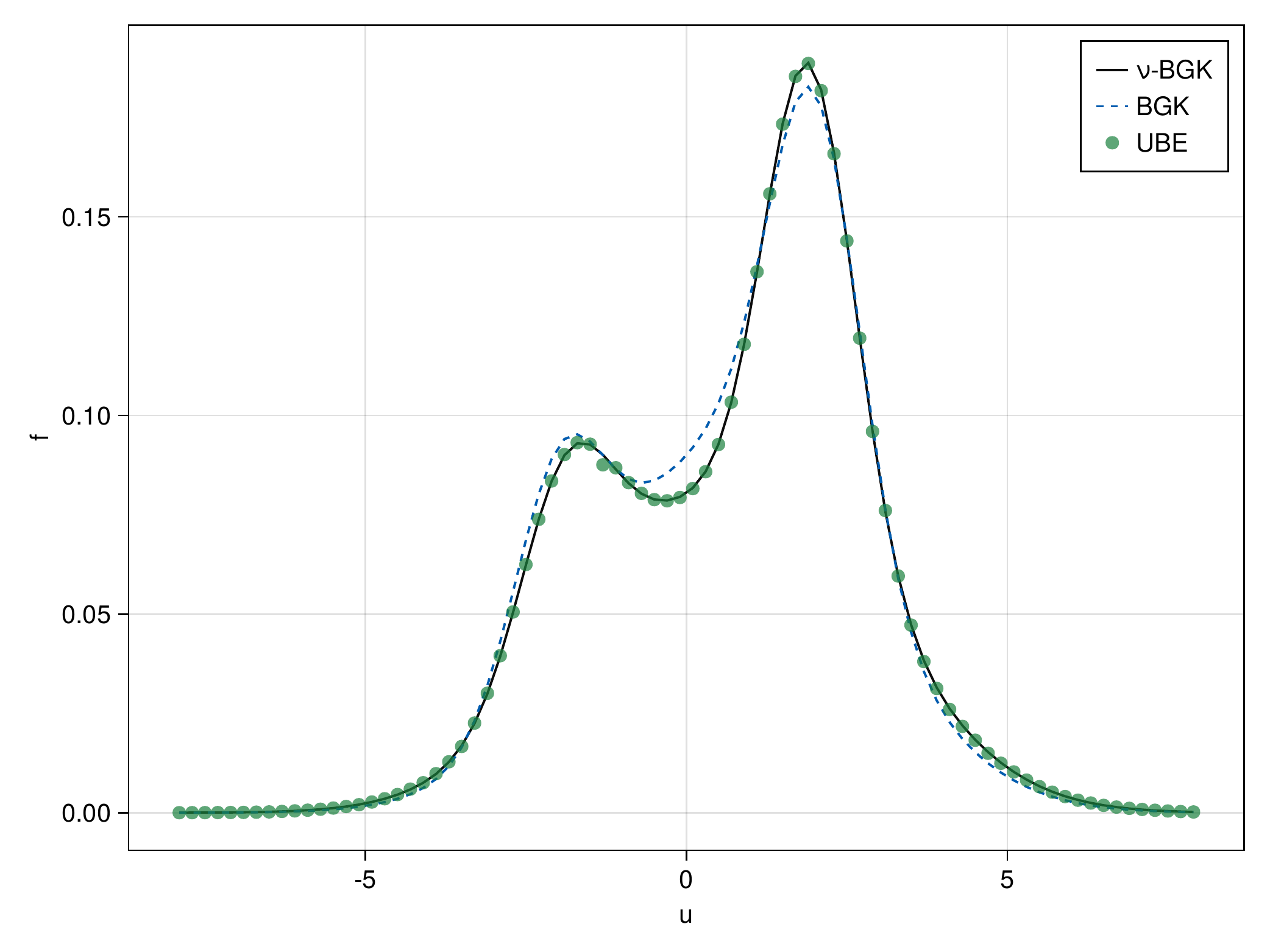}
	}
	\subfigure[$t=1$]{
		\includegraphics[width=0.31\textwidth]{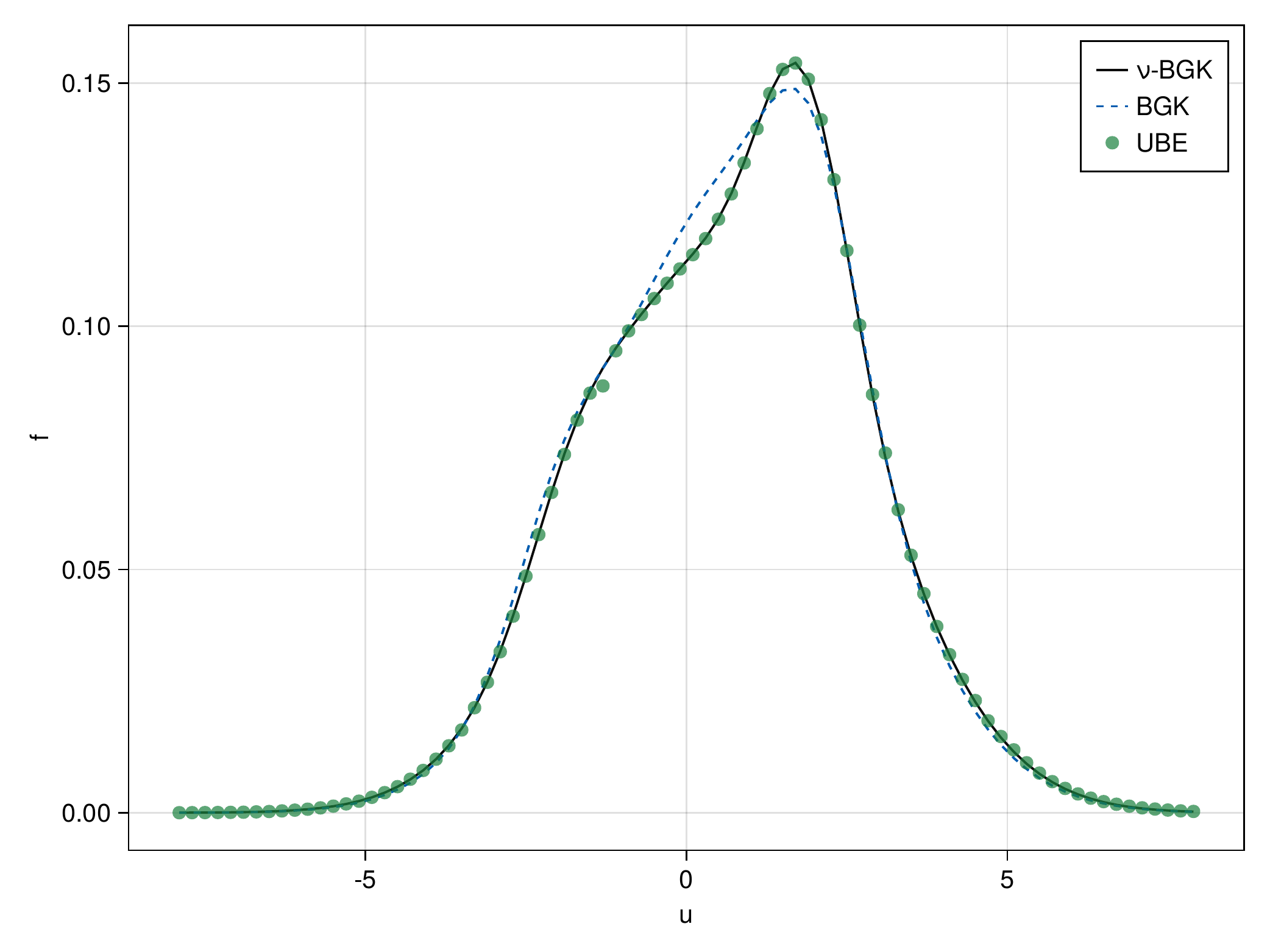}
	}
	\subfigure[$t=2$]{
		\includegraphics[width=0.31\textwidth]{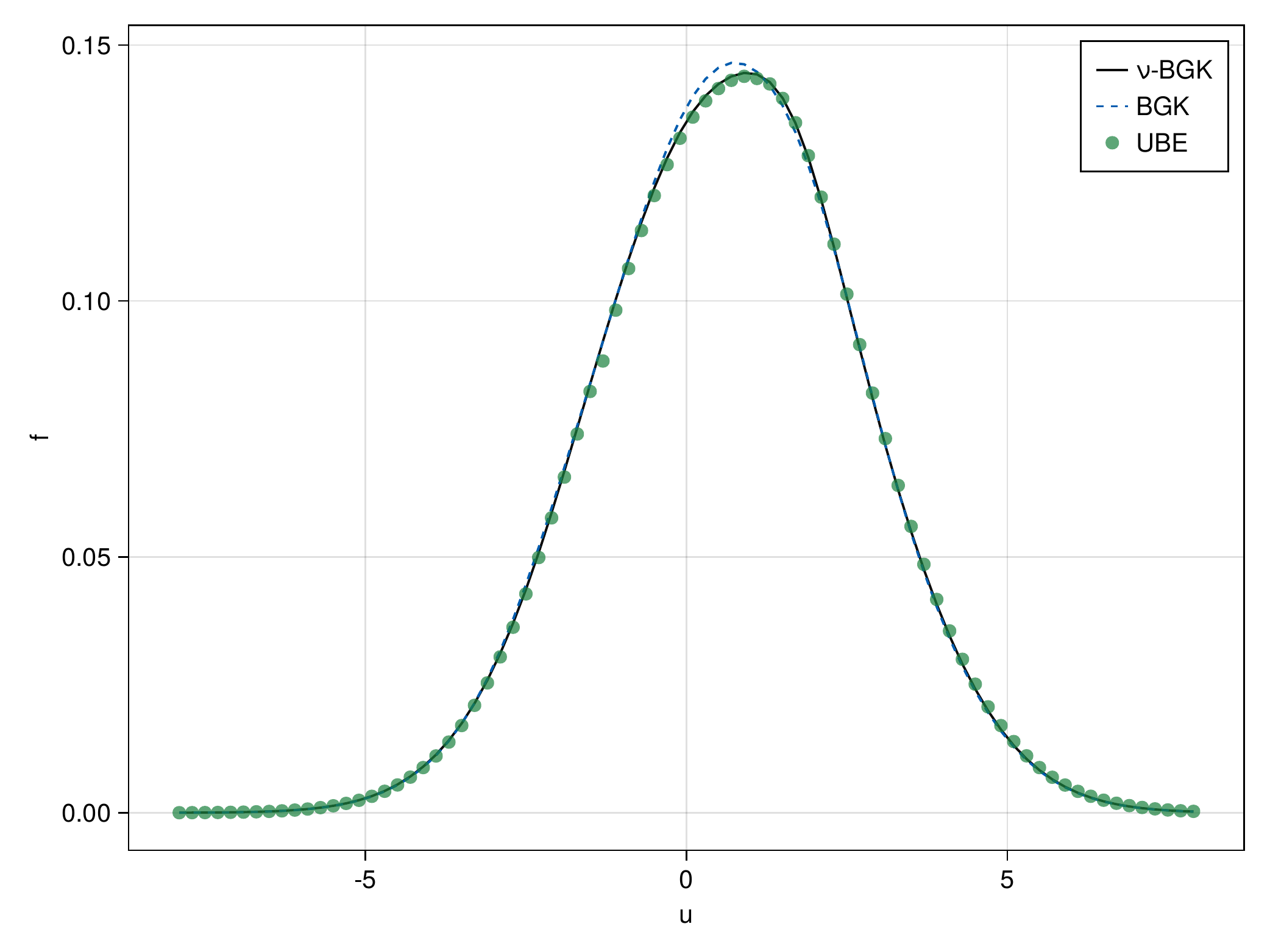}
	}
	\caption{Particle distribution functions at different time instants in the homogeneous relaxation problem ($\nu$-BGK model as reference solution).}
    \label{fig:nubgk solution}
\end{figure}

\begin{figure}[htb!]
	\centering
	\subfigure[$t=0$]{
		\includegraphics[width=0.31\textwidth]{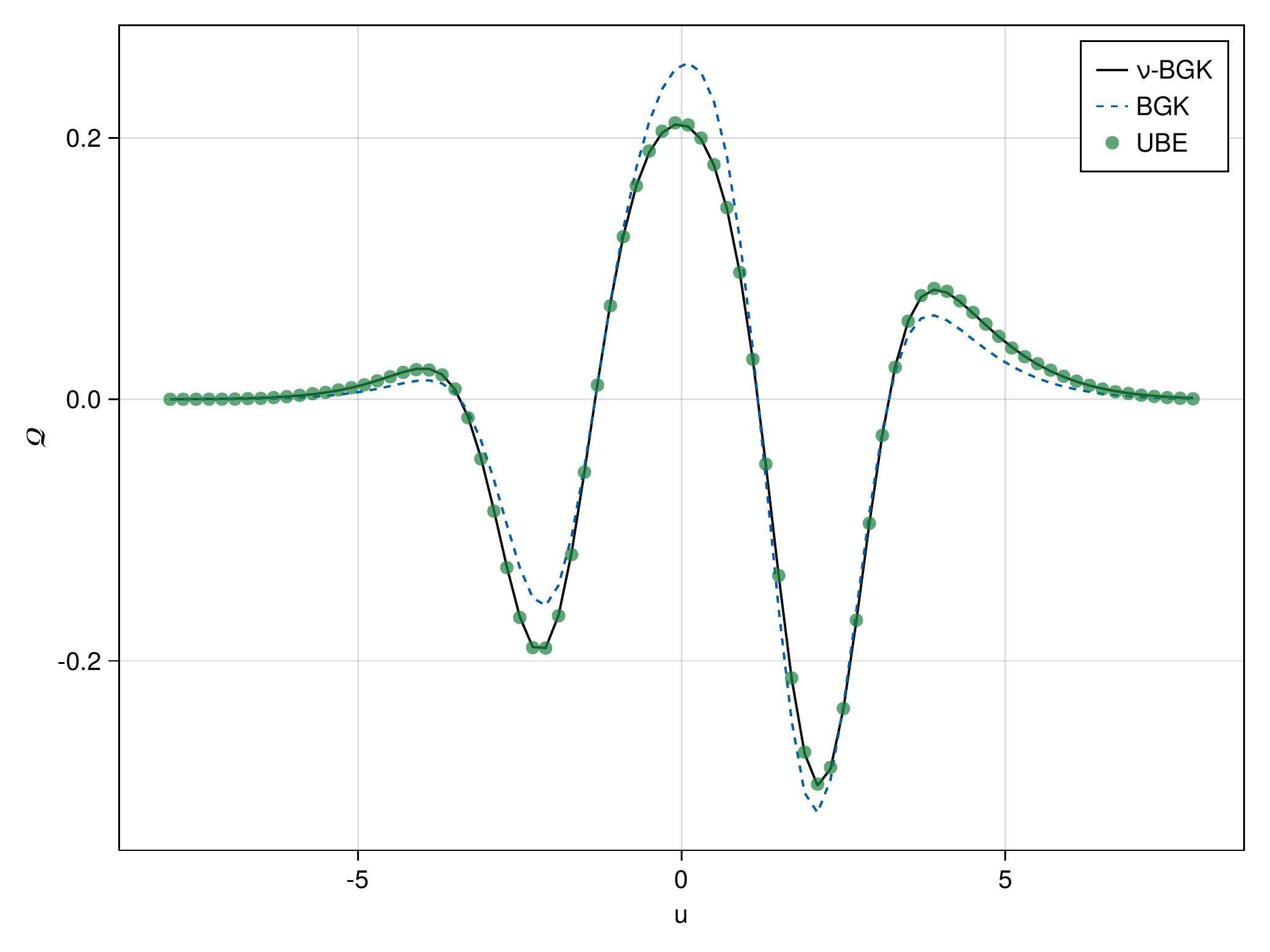}
	}
	\subfigure[$t=0.5$]{
		\includegraphics[width=0.31\textwidth]{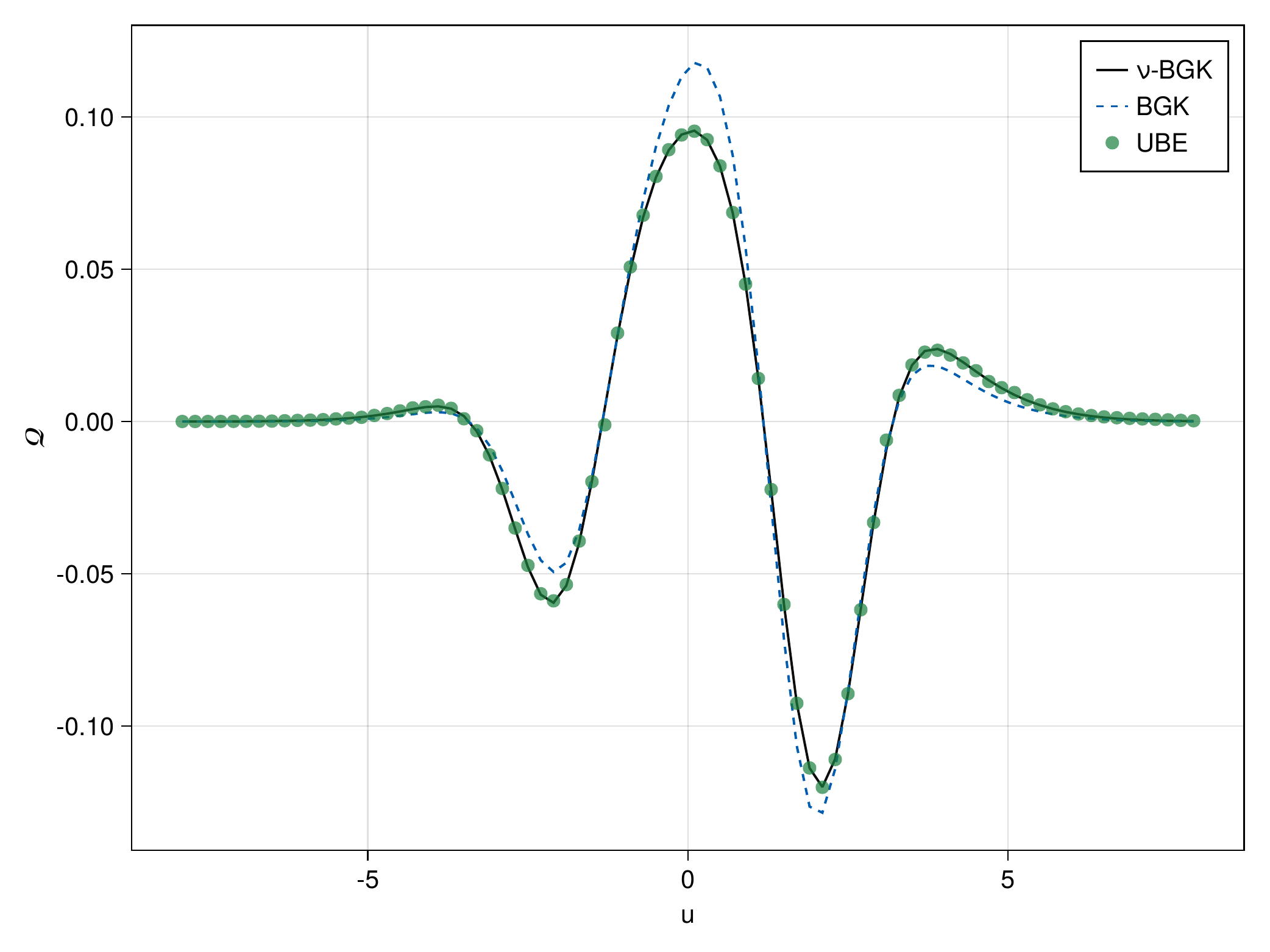}
	}
	\subfigure[$t=1$]{
		\includegraphics[width=0.31\textwidth]{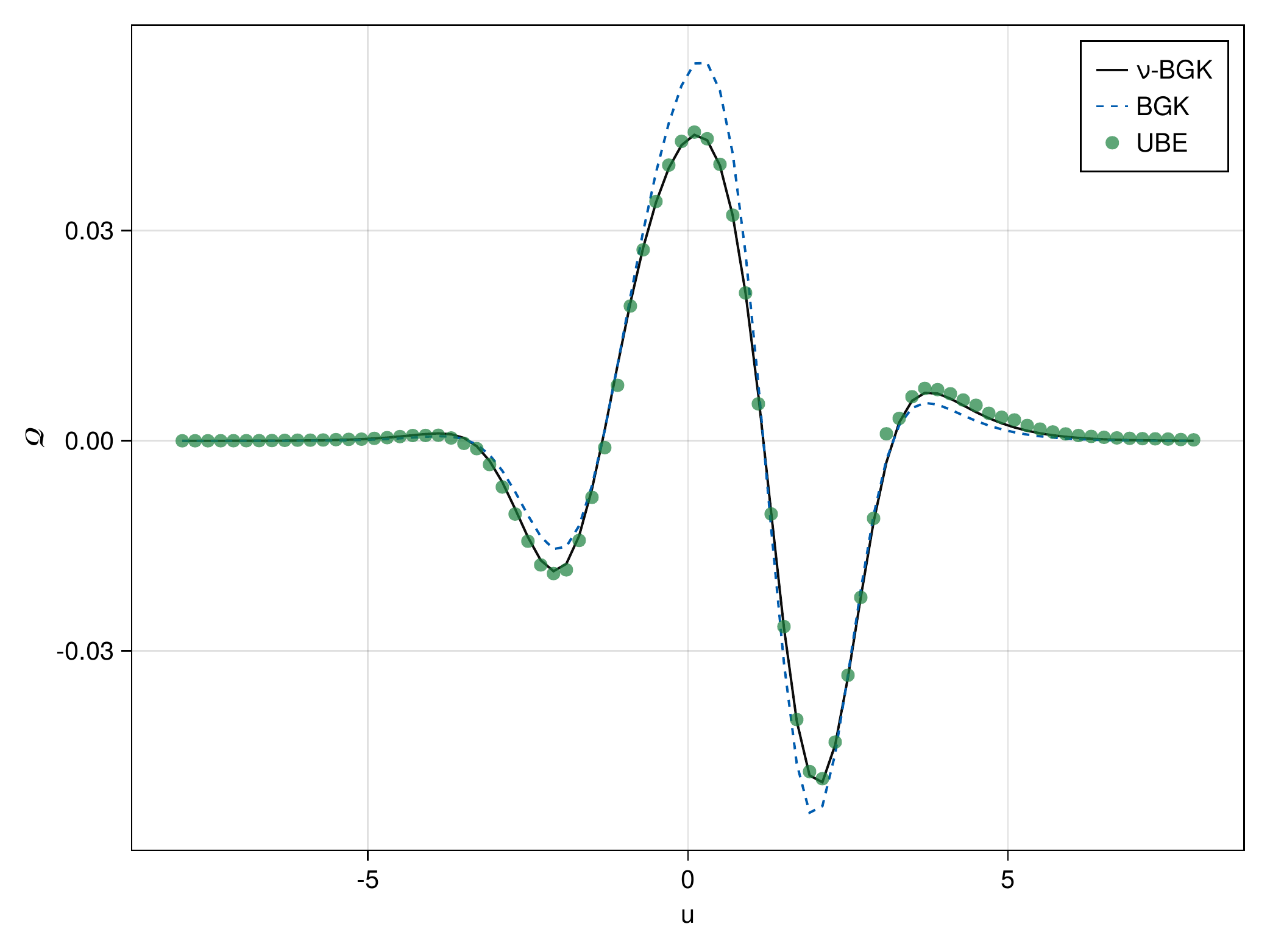}
	}
	\caption{Collision terms at different time instants in the homogeneous relaxation problem ($\nu$-BGK model as reference solution).}
    \label{fig:nubgk collision}
\end{figure}

\begin{figure}[htb!]
	\centering
	\subfigure[$t=0$]{
		\includegraphics[width=0.31\textwidth]{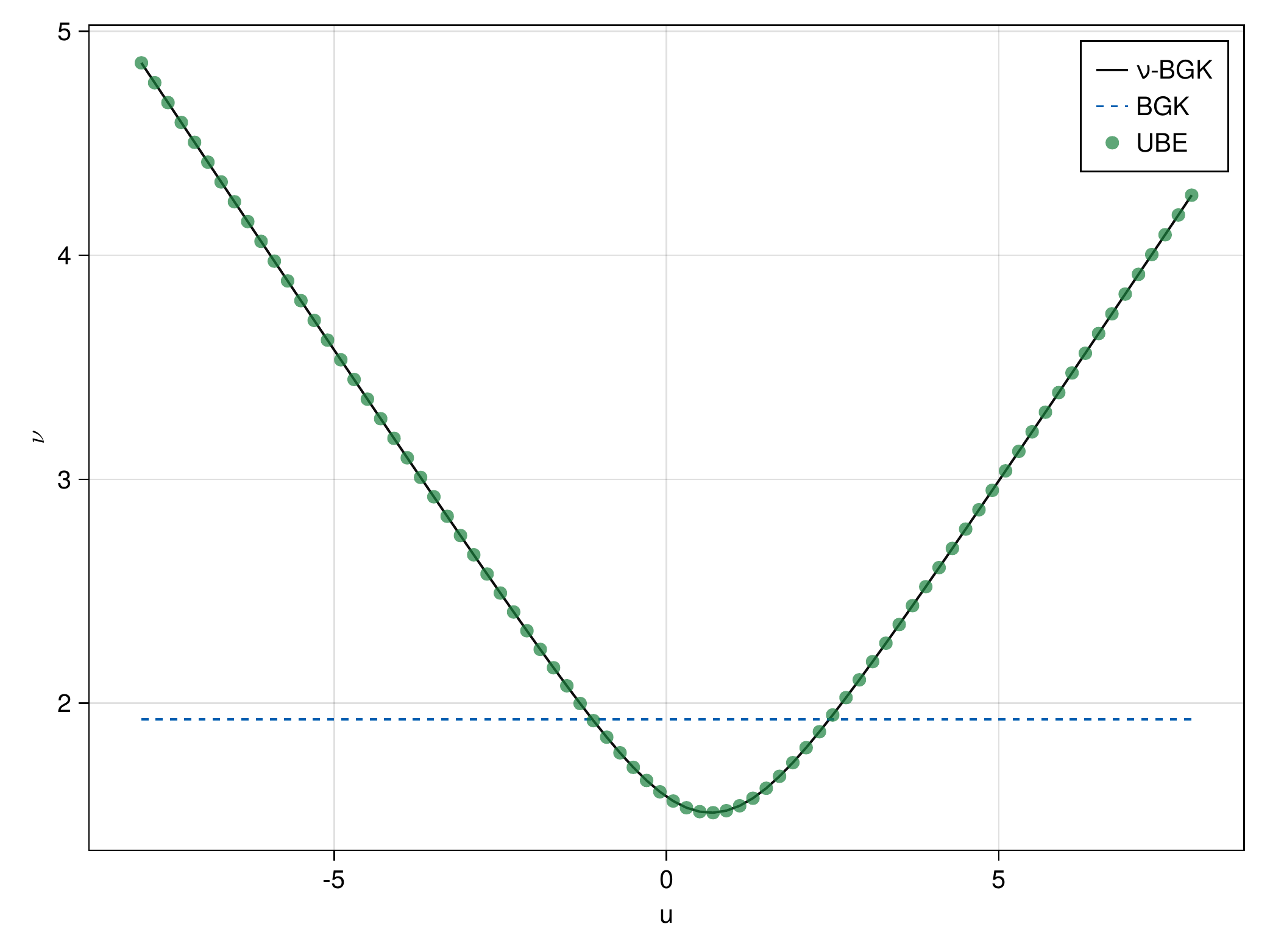}
	}
	\subfigure[$t=0.5$]{
		\includegraphics[width=0.31\textwidth]{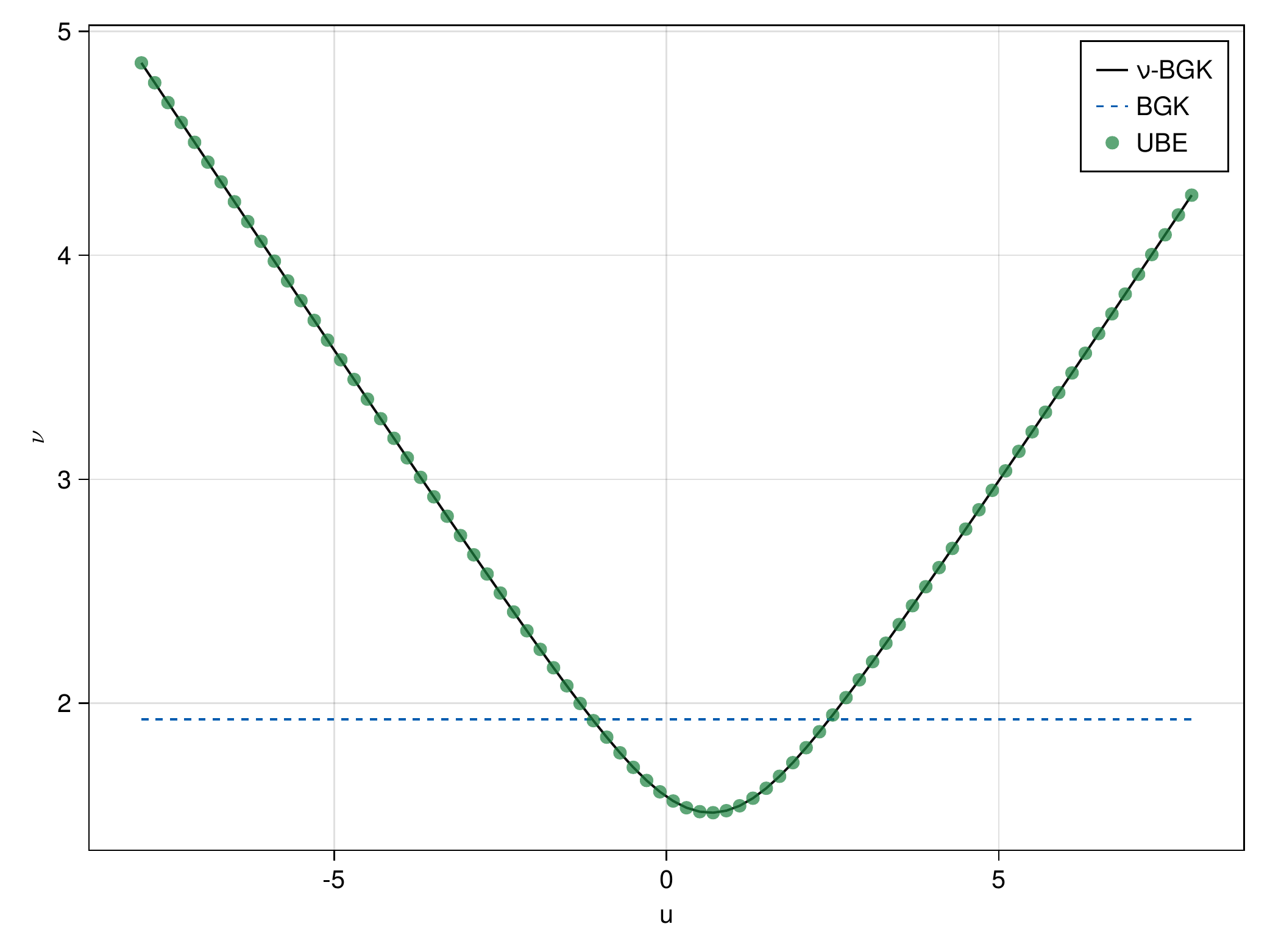}
	}
	\subfigure[$t=1$]{
		\includegraphics[width=0.31\textwidth]{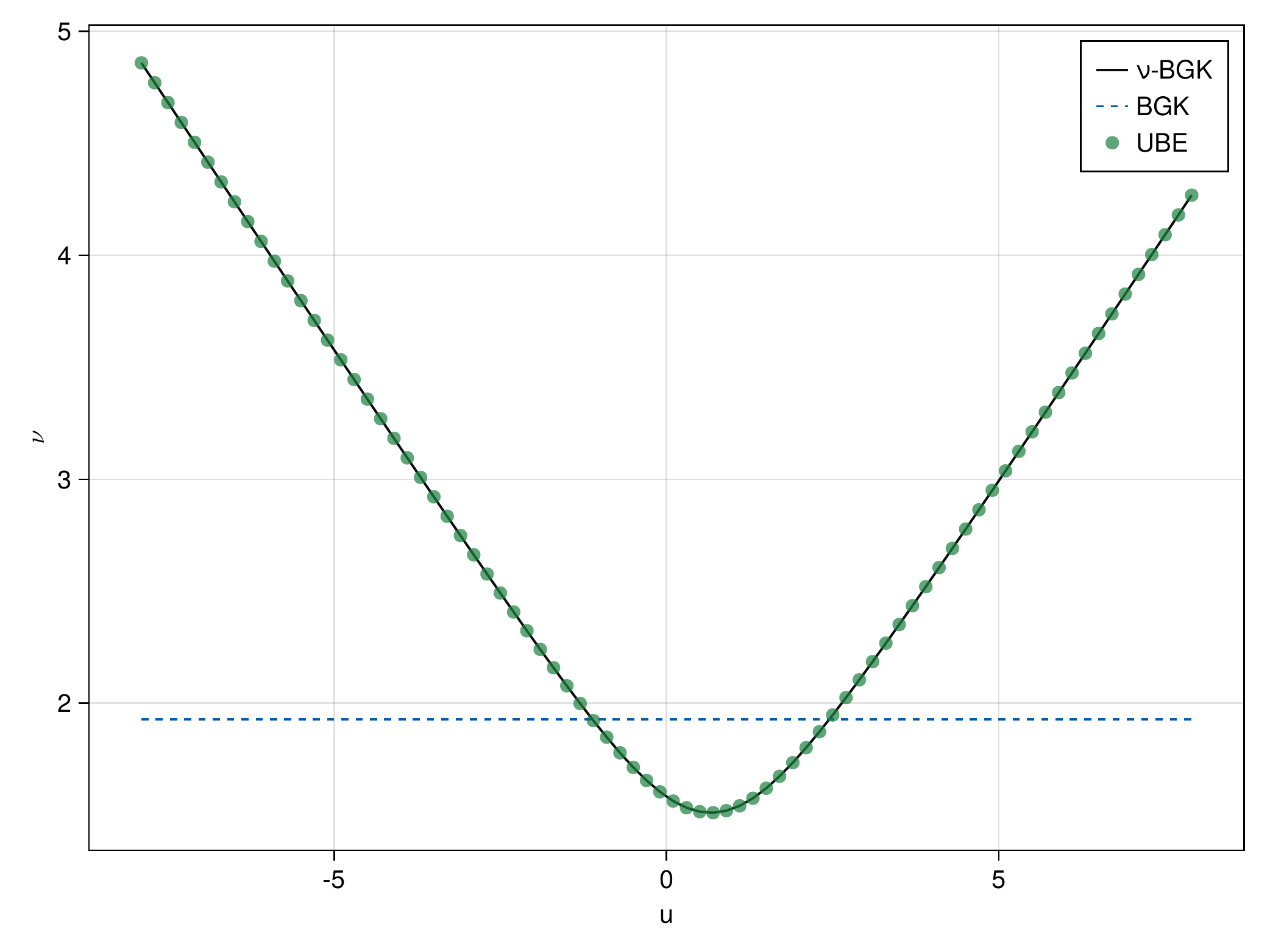}
	}
	\caption{Relaxation frequencies at different time instants in the homogeneous relaxation problem ($\nu$-BGK model as reference solution).}
    \label{fig:nubgk frequency}
\end{figure}

\begin{figure}[htb!]
	\centering
	\subfigure[Density]{
		\includegraphics[width=0.4\textwidth]{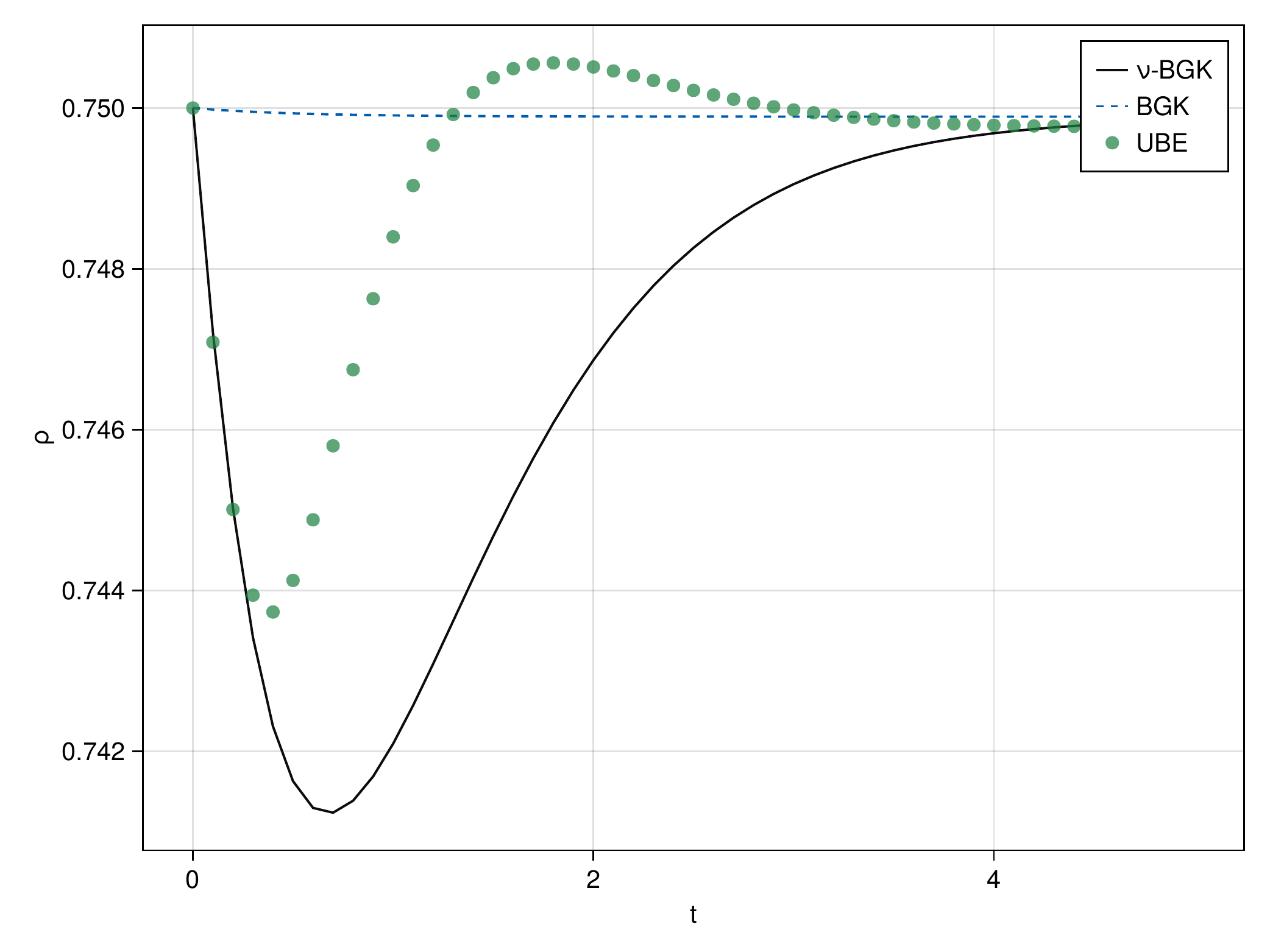}
	}
	\subfigure[Energy]{
		\includegraphics[width=0.4\textwidth]{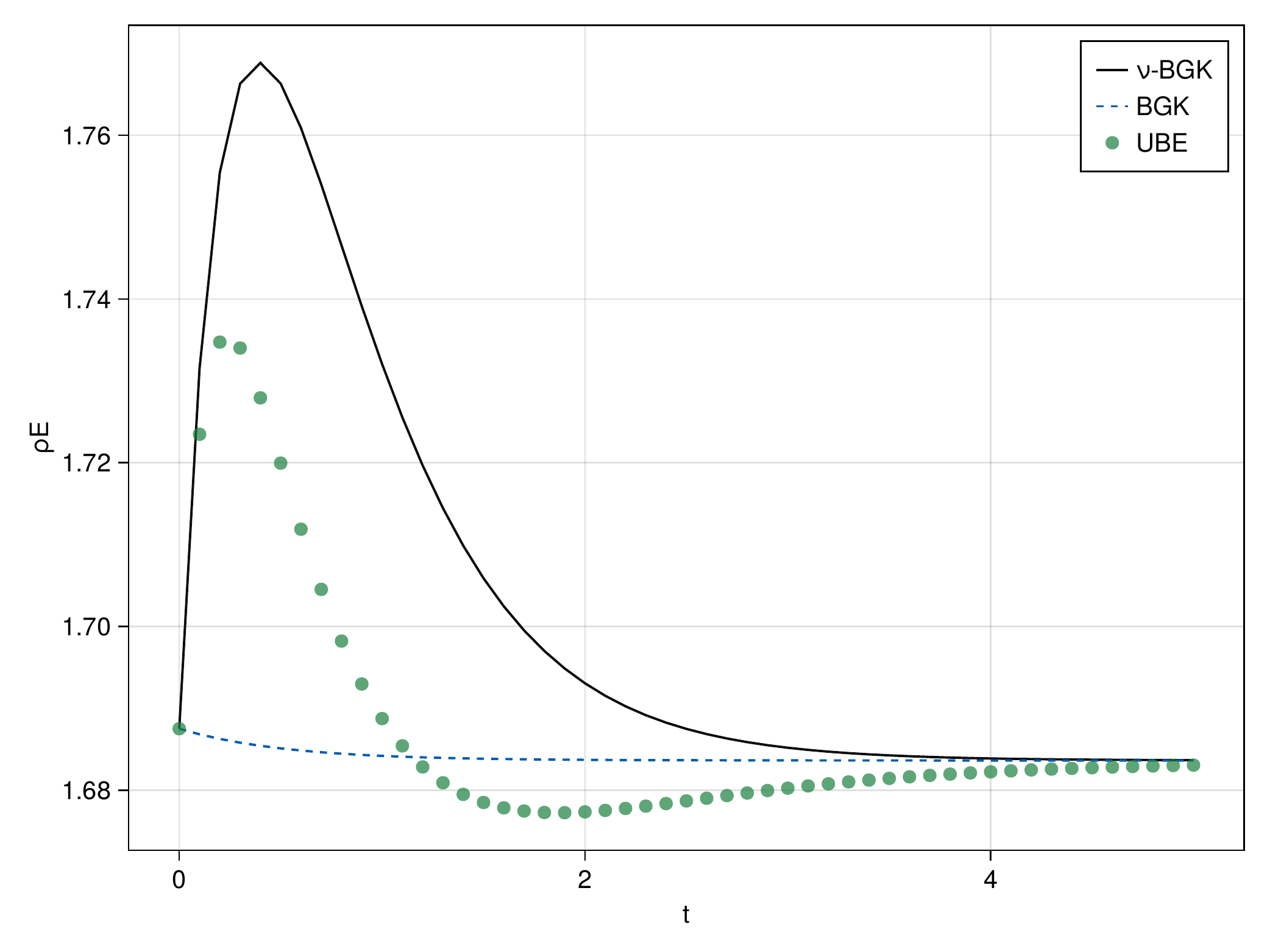}
	}
	\caption{Evolution of total density and energy with time in the homogeneous relaxation problem ($\nu$-BGK model as reference solution).}
    \label{fig:nubgk macro}
\end{figure}

\begin{figure}[htb!]
	\centering
	\subfigure[$t=0.2$]{
		\includegraphics[width=0.31\textwidth]{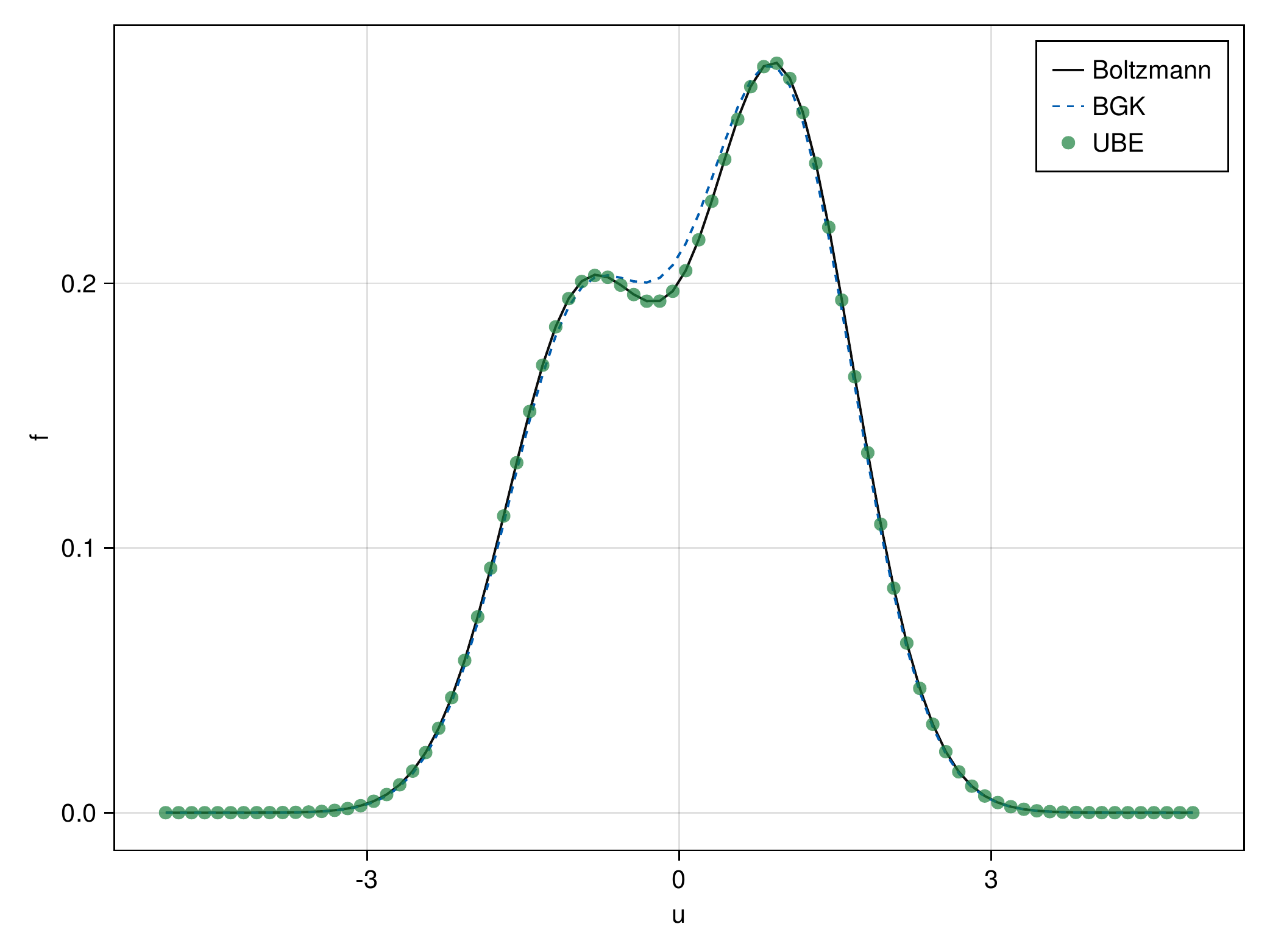}
	}
	\subfigure[$t=0.5$]{
		\includegraphics[width=0.31\textwidth]{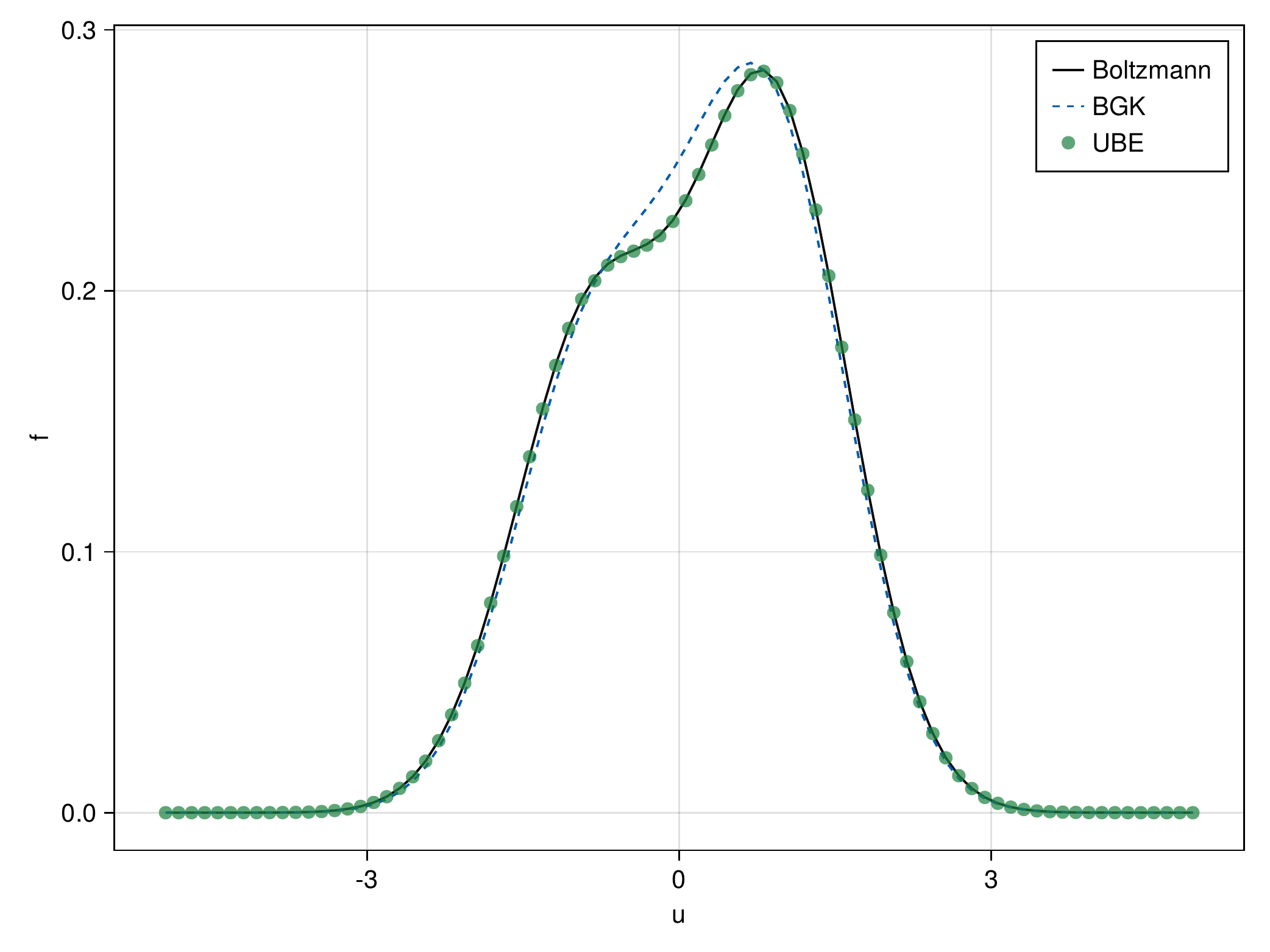}
	}
	\subfigure[$t=1$]{
		\includegraphics[width=0.31\textwidth]{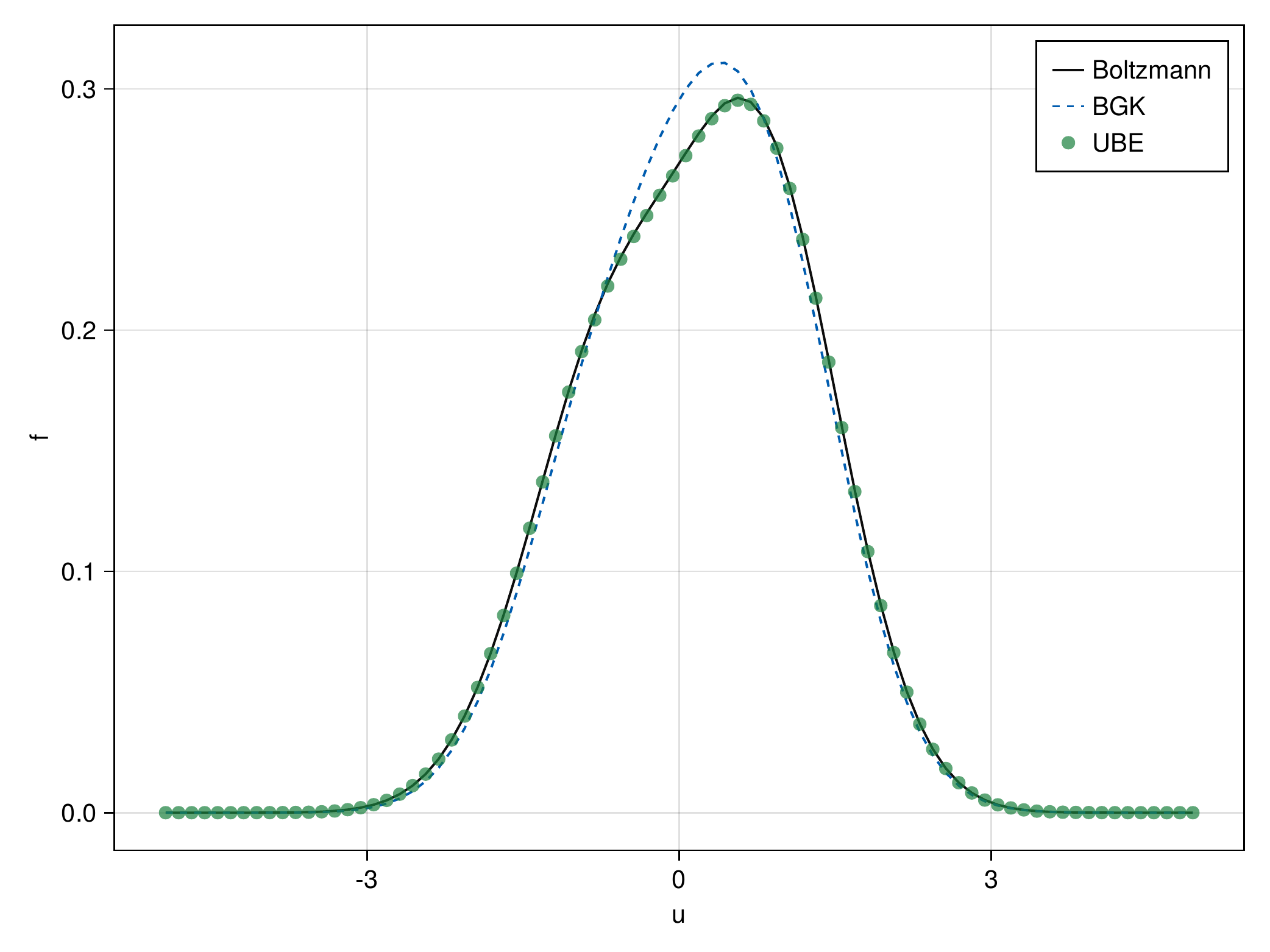}
	}
	\subfigure[$t=2$]{
		\includegraphics[width=0.31\textwidth]{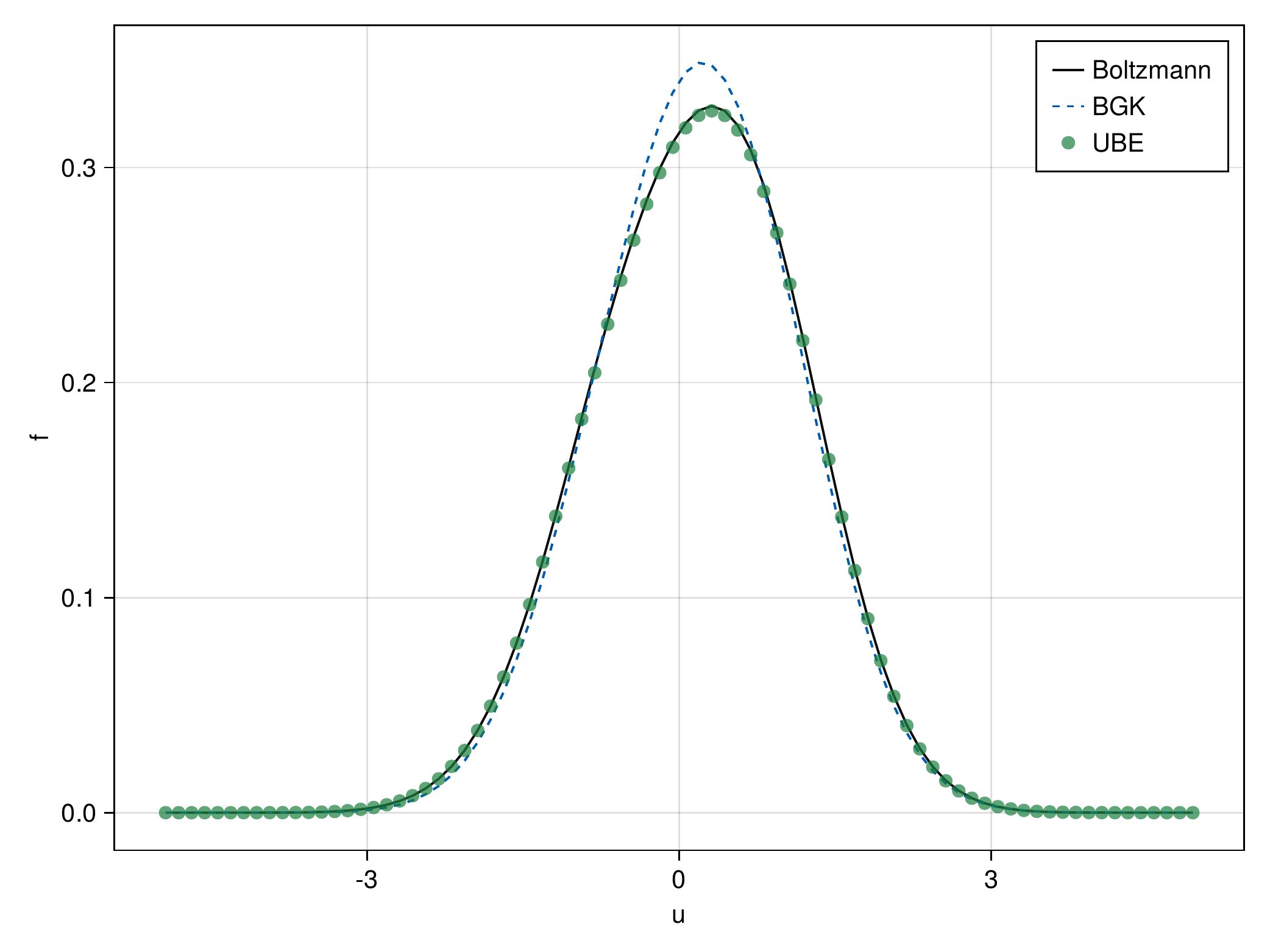}
	}
	\subfigure[$t=4$]{
		\includegraphics[width=0.31\textwidth]{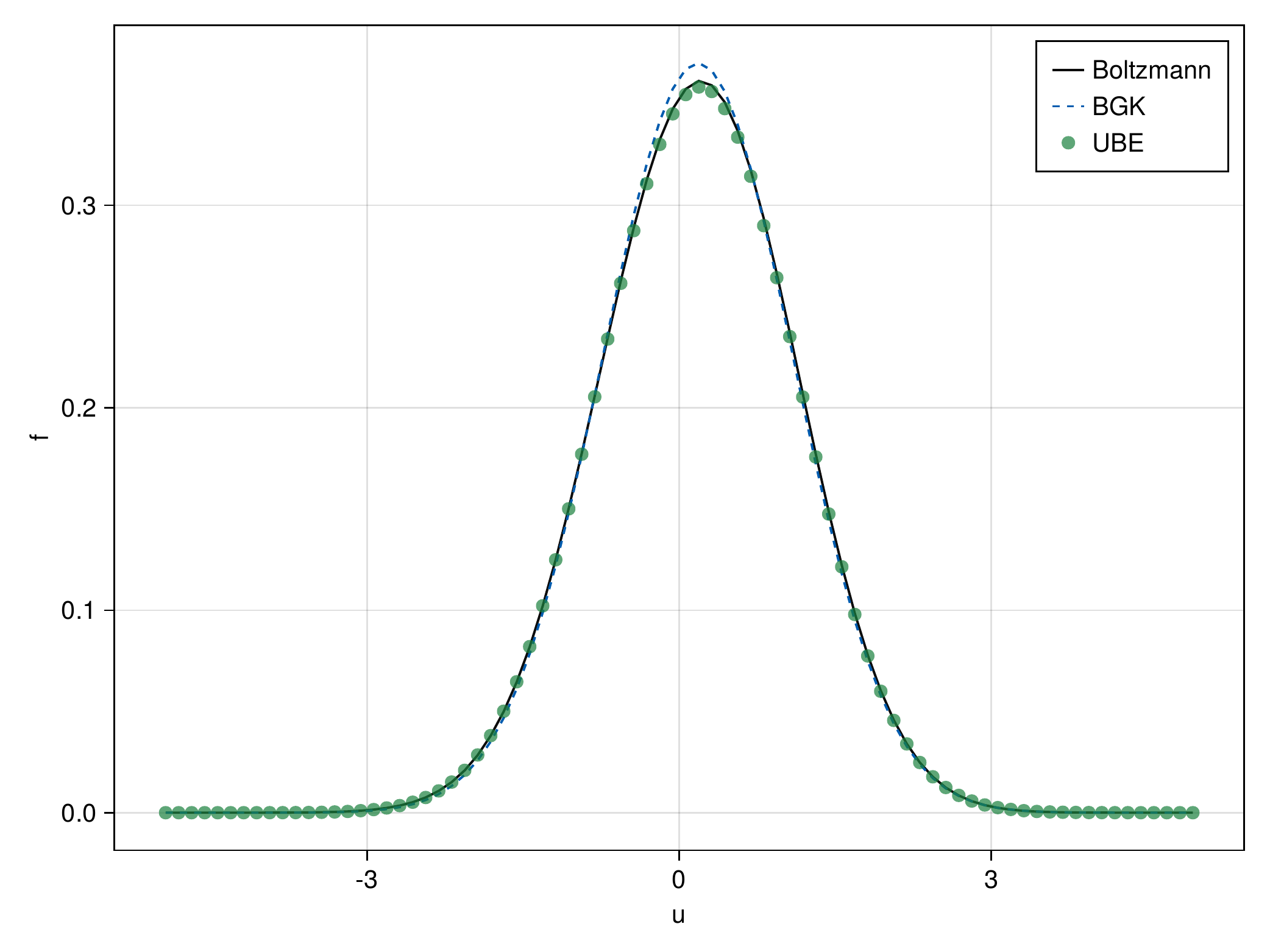}
	}
	\subfigure[$t=6$]{
		\includegraphics[width=0.31\textwidth]{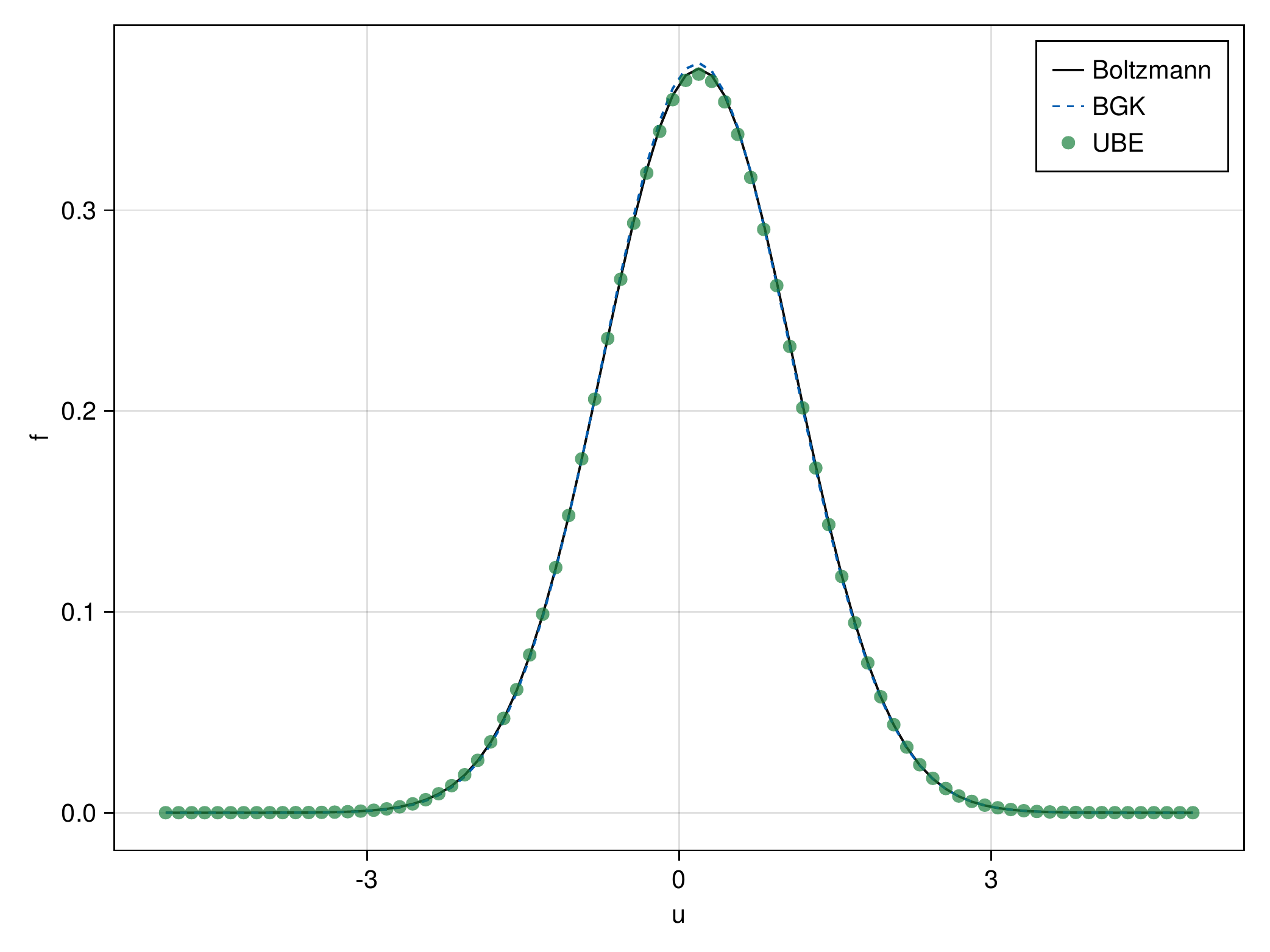}
	}
	\caption{Particle distribution functions at different time instants in the homogeneous relaxation problem (full Boltzmann model as reference solution).}
    \label{fig:boltzmann solution}
\end{figure}

\begin{figure}[htb!]
	\centering
	\subfigure[$t=0$]{
		\includegraphics[width=0.31\textwidth]{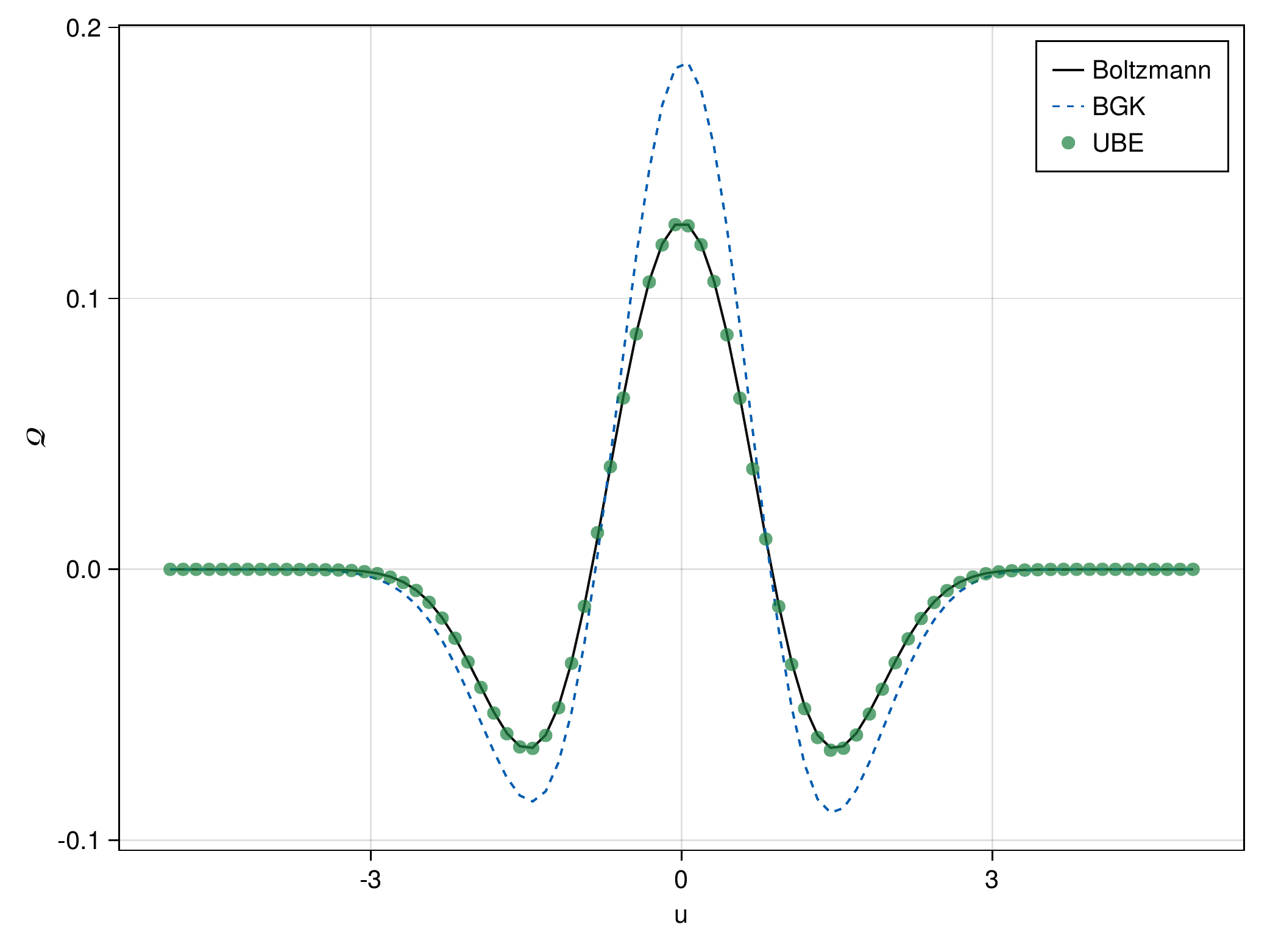}
	}
	\subfigure[$t=1$]{
		\includegraphics[width=0.31\textwidth]{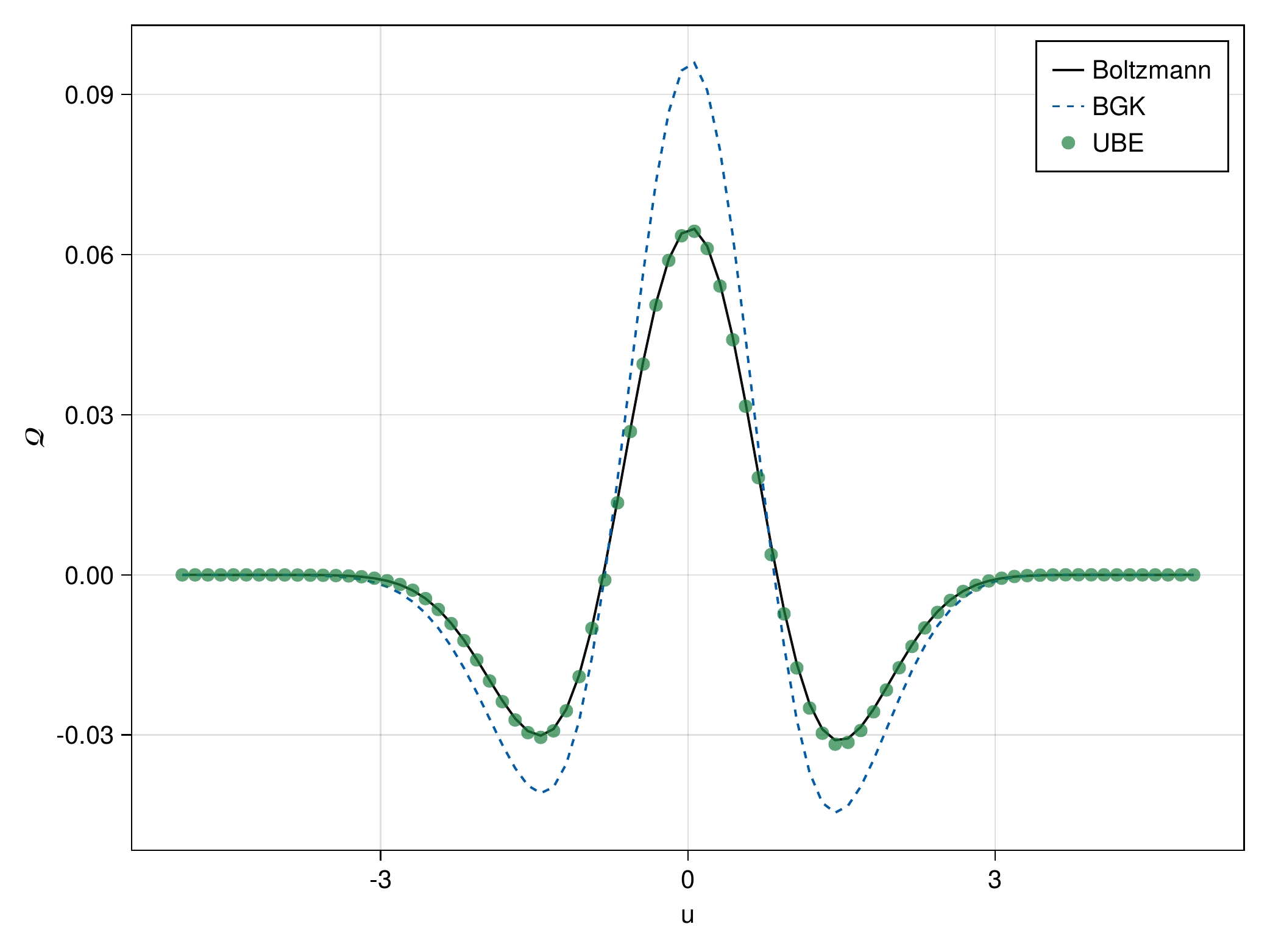}
	}
	\subfigure[$t=2$]{
		\includegraphics[width=0.31\textwidth]{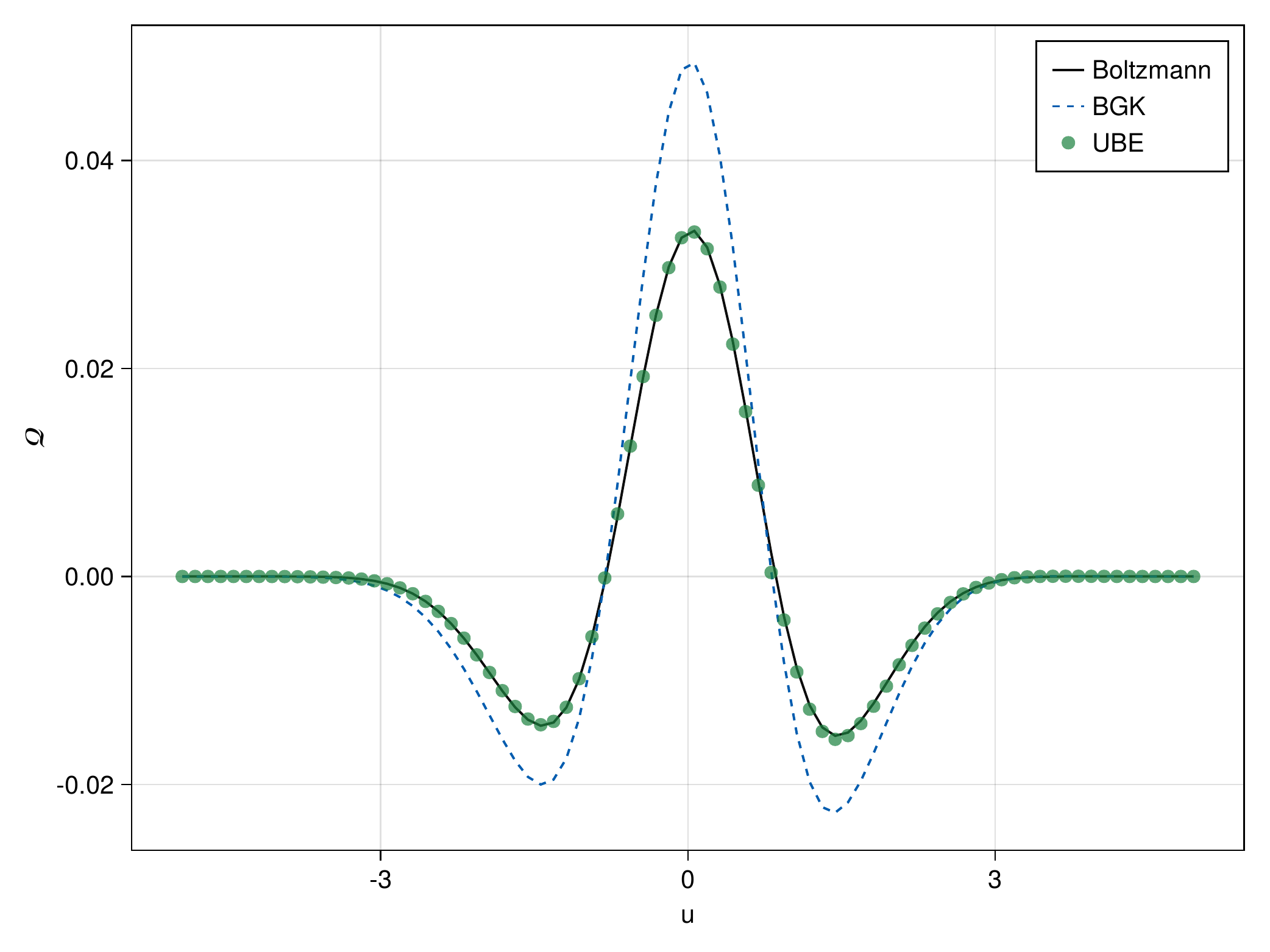}
	}
	\caption{Collision terms at different time instants in the homogeneous relaxation problem (full Boltzmann model as reference solution).}
    \label{fig:boltzmann collision}
\end{figure}

\begin{figure}[htb!]
	\centering
	\subfigure[$t=0$]{
		\includegraphics[width=0.31\textwidth]{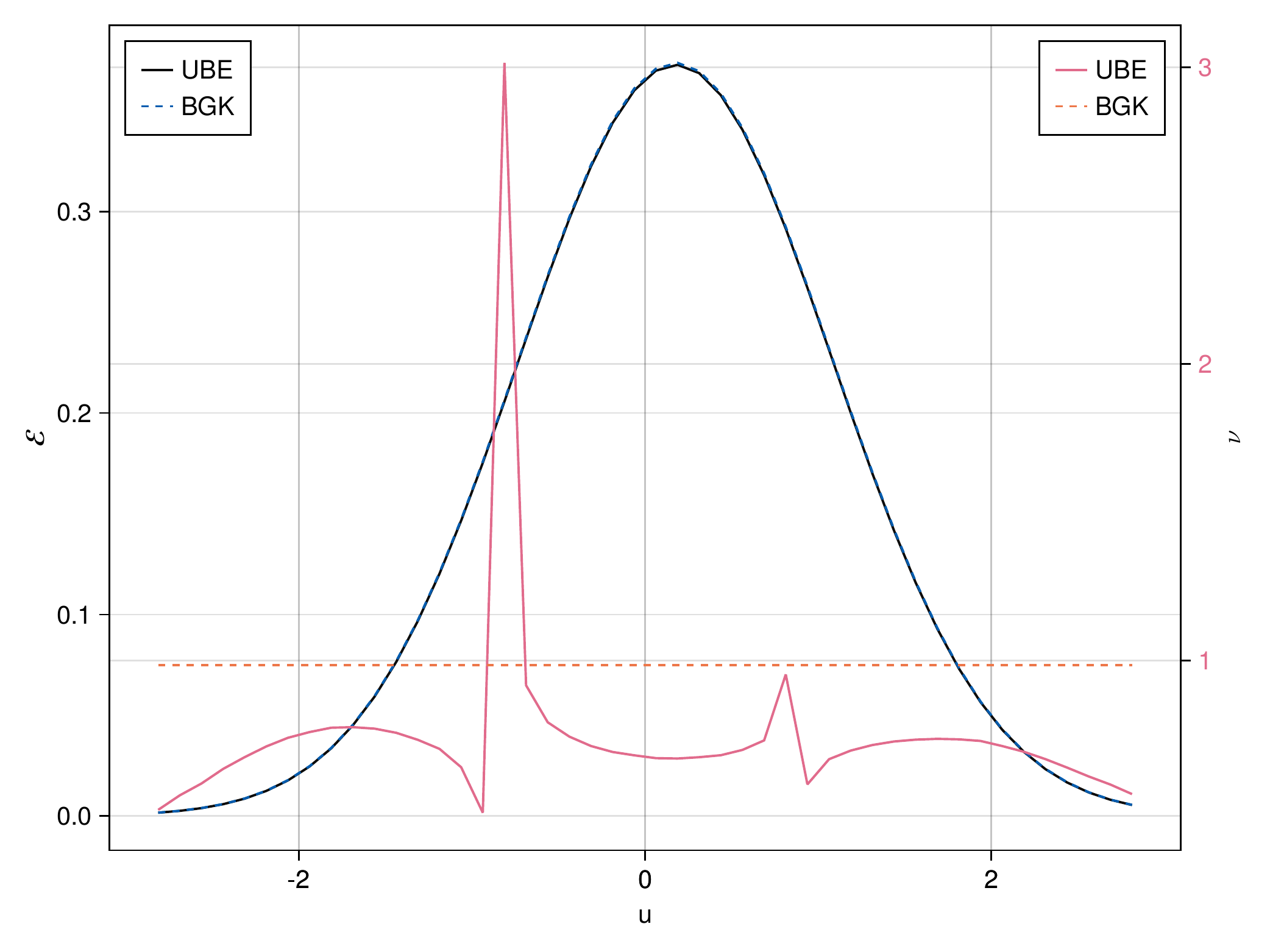}
	}
	\subfigure[$t=1$]{
		\includegraphics[width=0.31\textwidth]{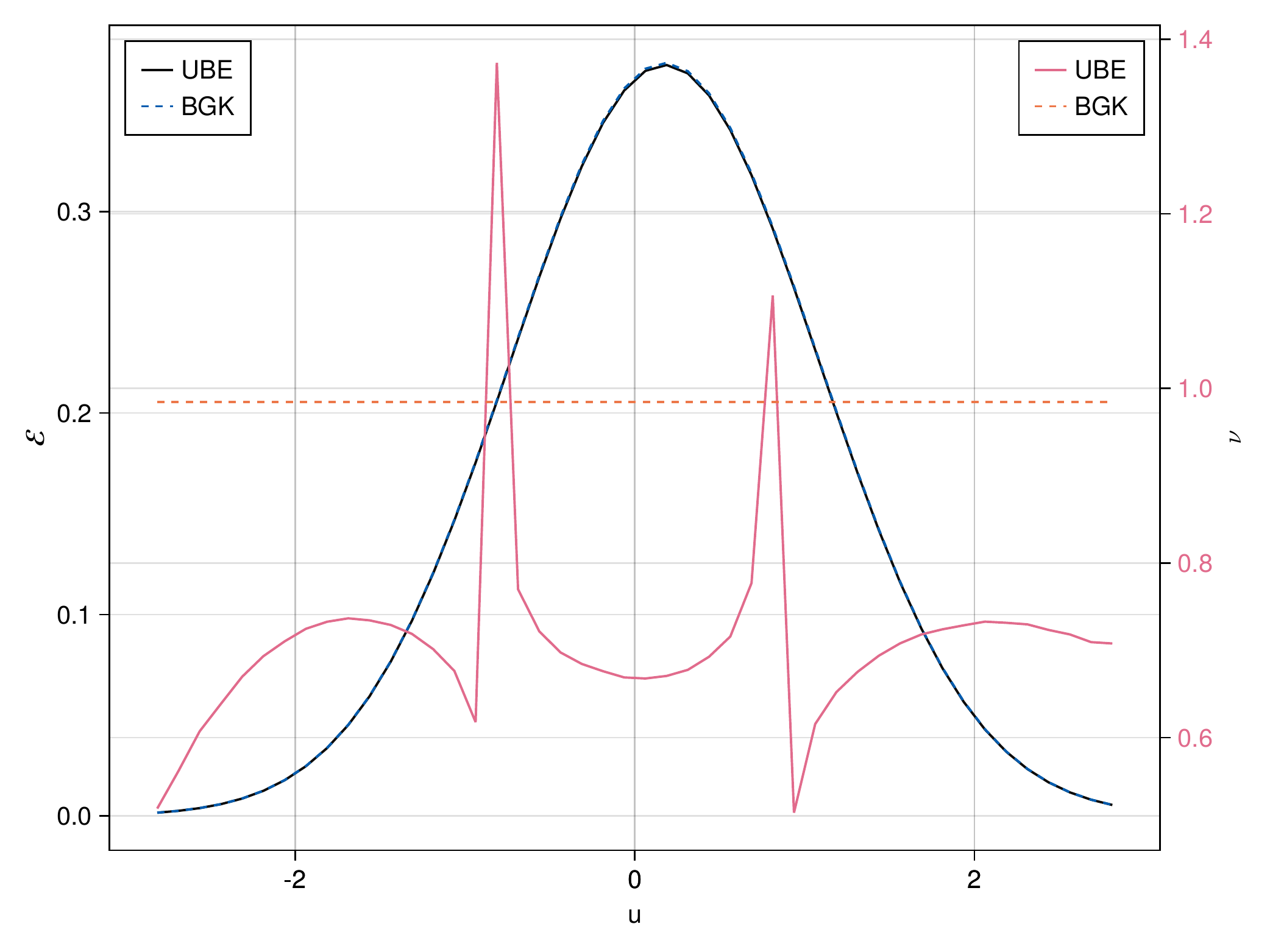}
	}
	\subfigure[$t=2$]{
		\includegraphics[width=0.31\textwidth]{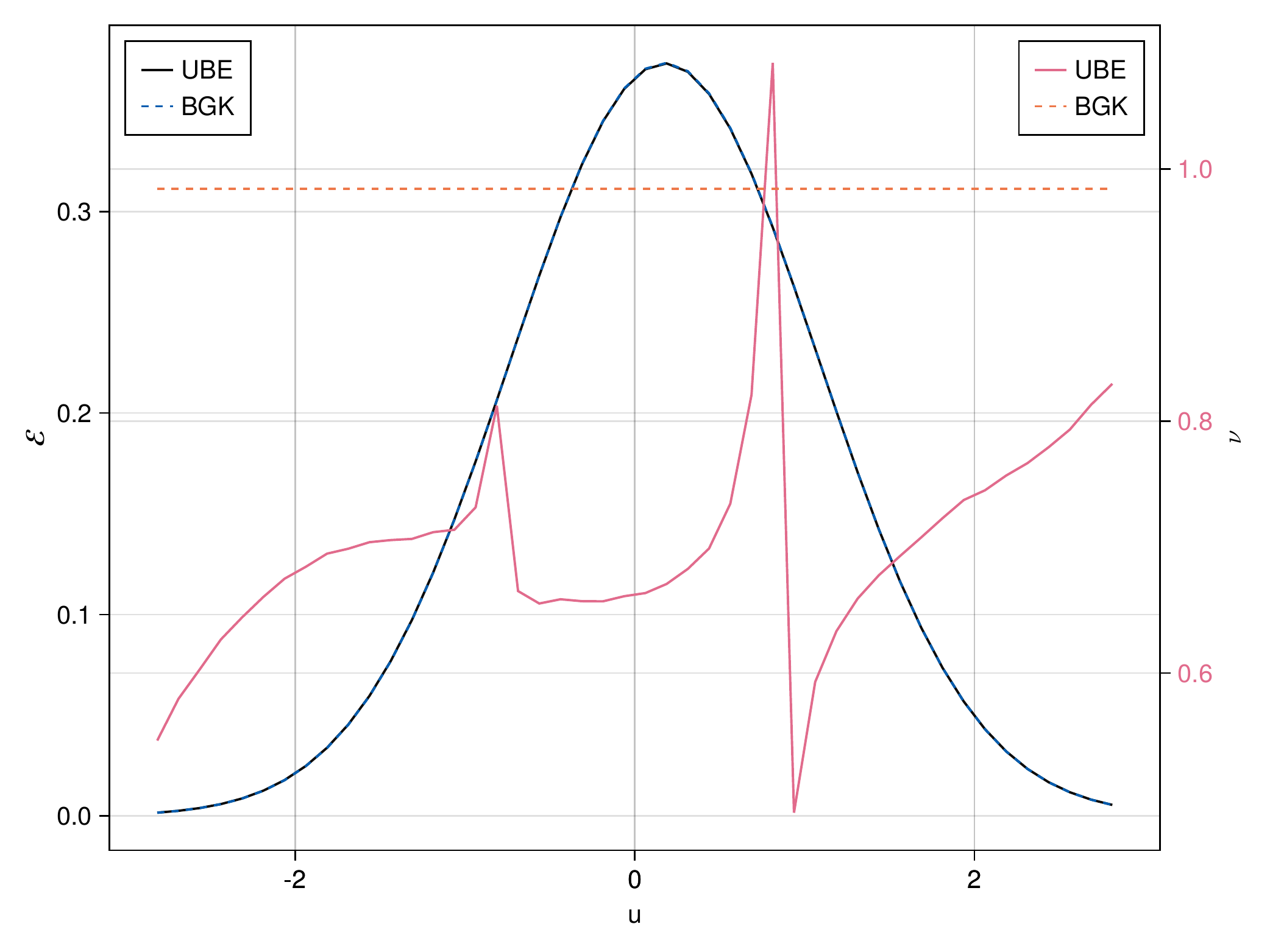}
	}
	\caption{Equilibrium (dark colors) and relaxation frequencies (light colors) at different time instants in the homogeneous relaxation problem (full Boltzmann model as reference solution).}
    \label{fig:boltzmann enu}
\end{figure}

\begin{figure}[htb!]
	\centering
	\includegraphics[width=0.4\textwidth]{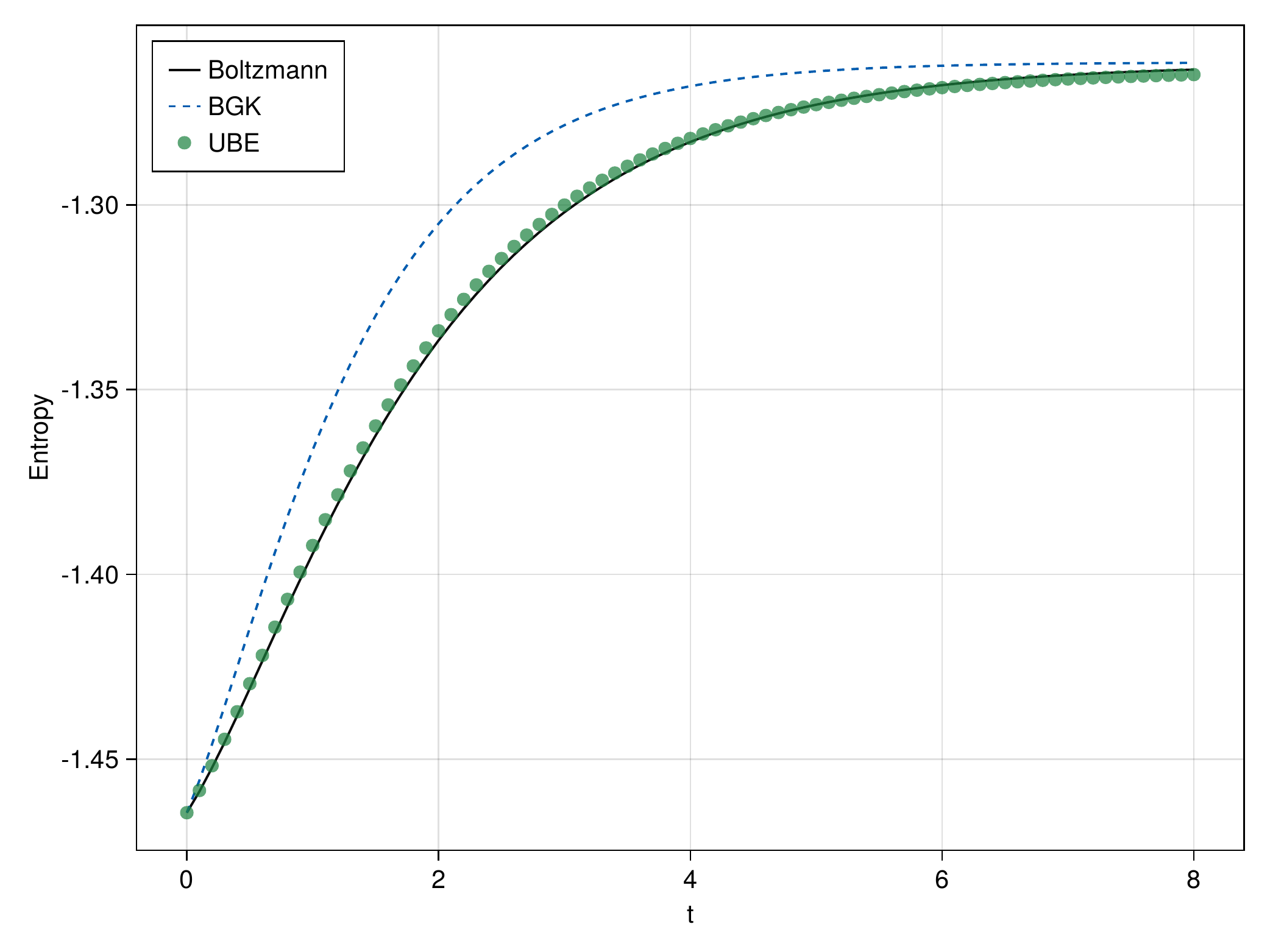}
	\caption{Evolution of entropy with time in the homogeneous relaxation problem (full Boltzmann model as reference solution).}
    \label{fig:boltzmann entropy}
\end{figure}

\begin{figure}[htb!]
	\centering
	\subfigure[Density]{
		\includegraphics[width=0.31\textwidth]{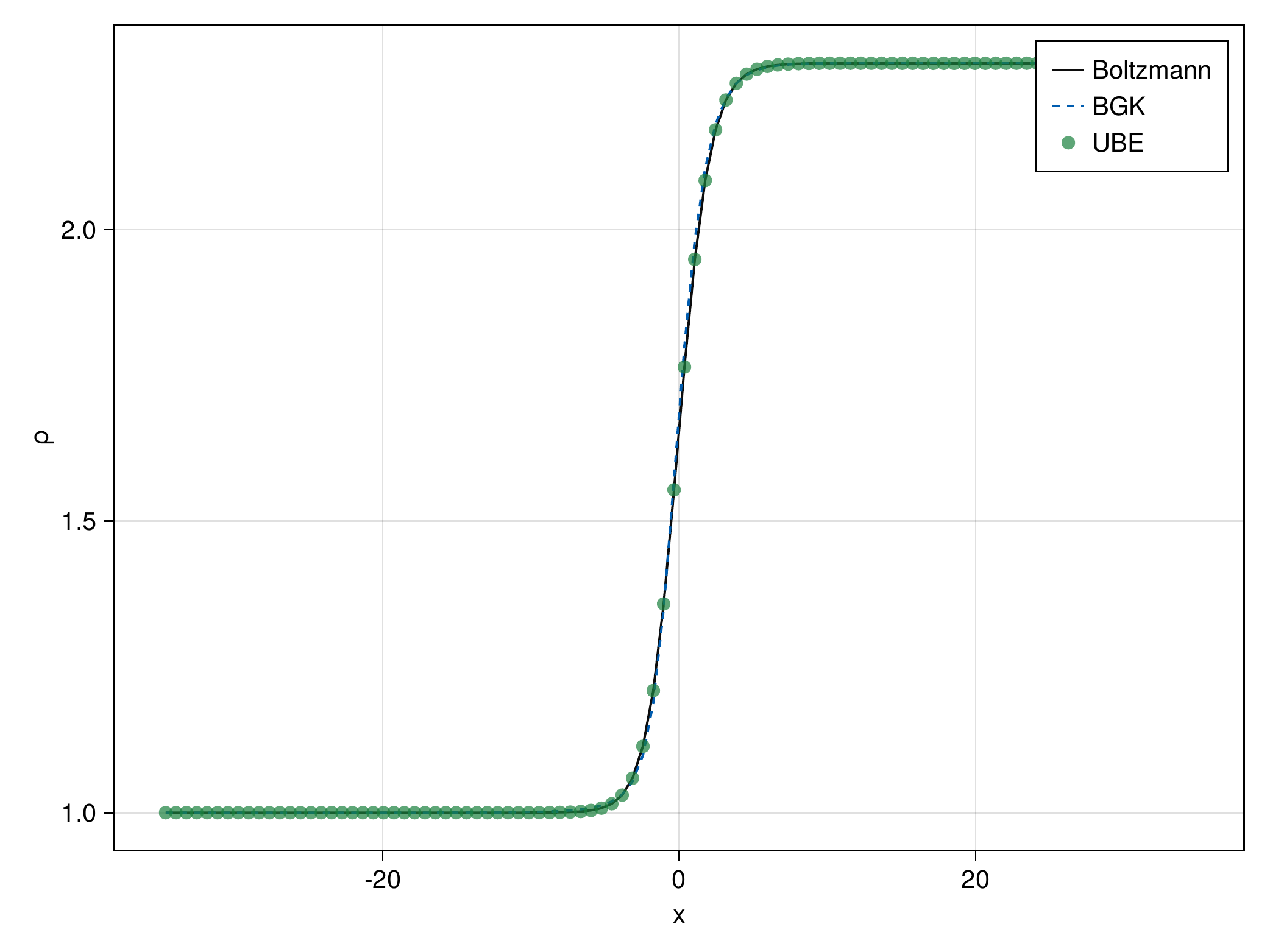}
	}
	\subfigure[Velocity]{
		\includegraphics[width=0.31\textwidth]{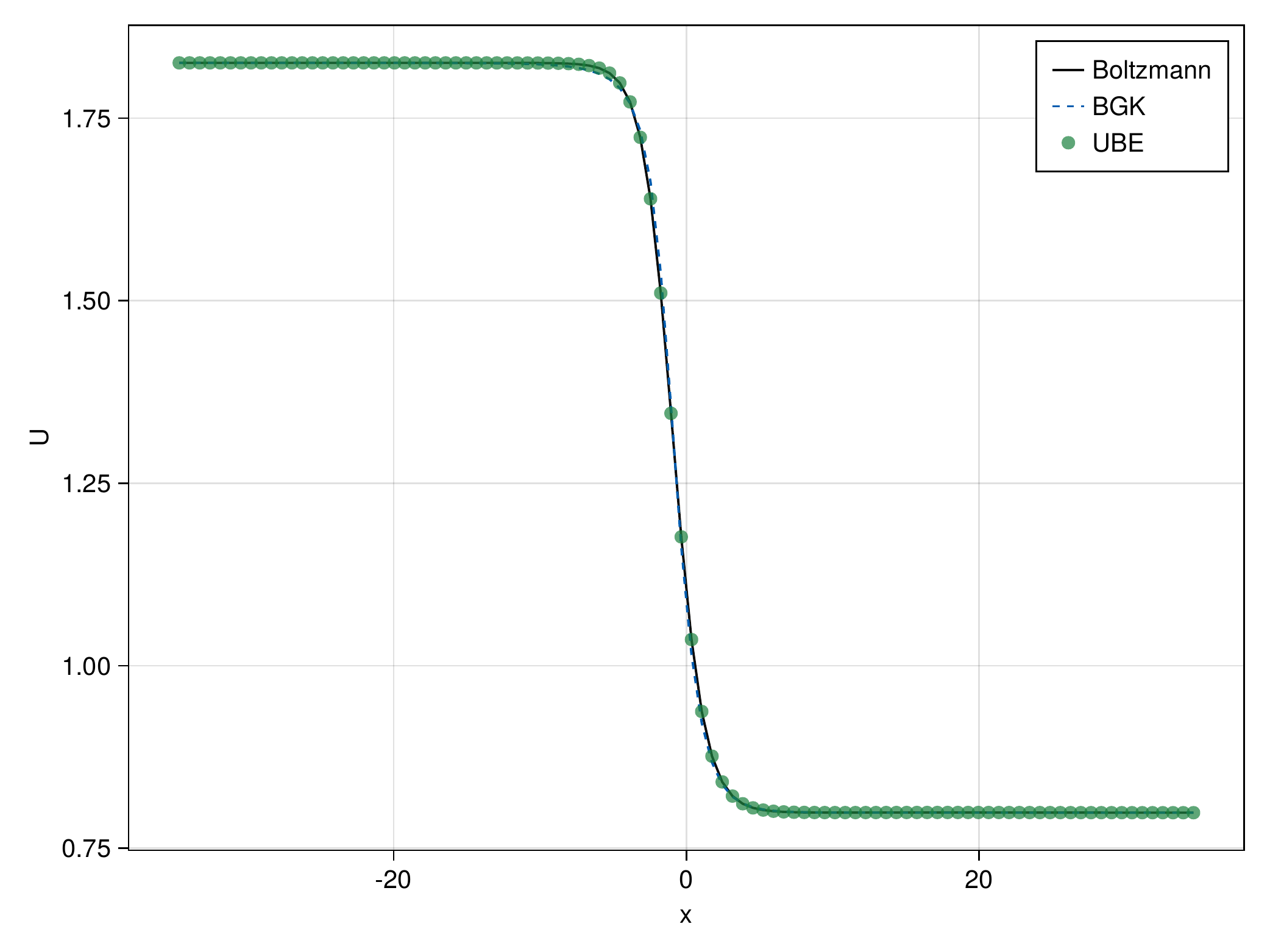}
	}
	\subfigure[Temperature]{
		\includegraphics[width=0.31\textwidth]{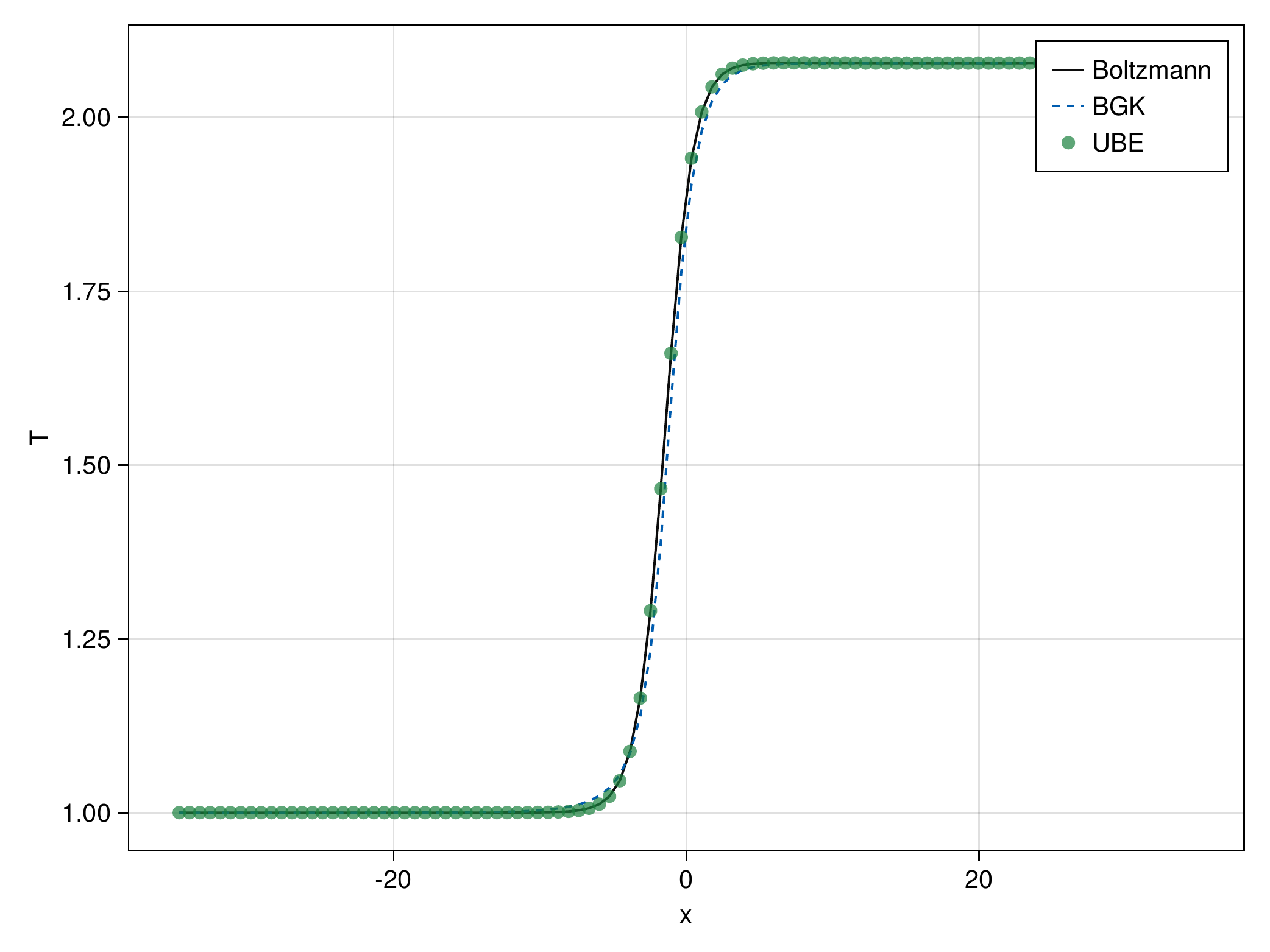}
	}
	\subfigure[Viscous stress]{
		\includegraphics[width=0.31\textwidth]{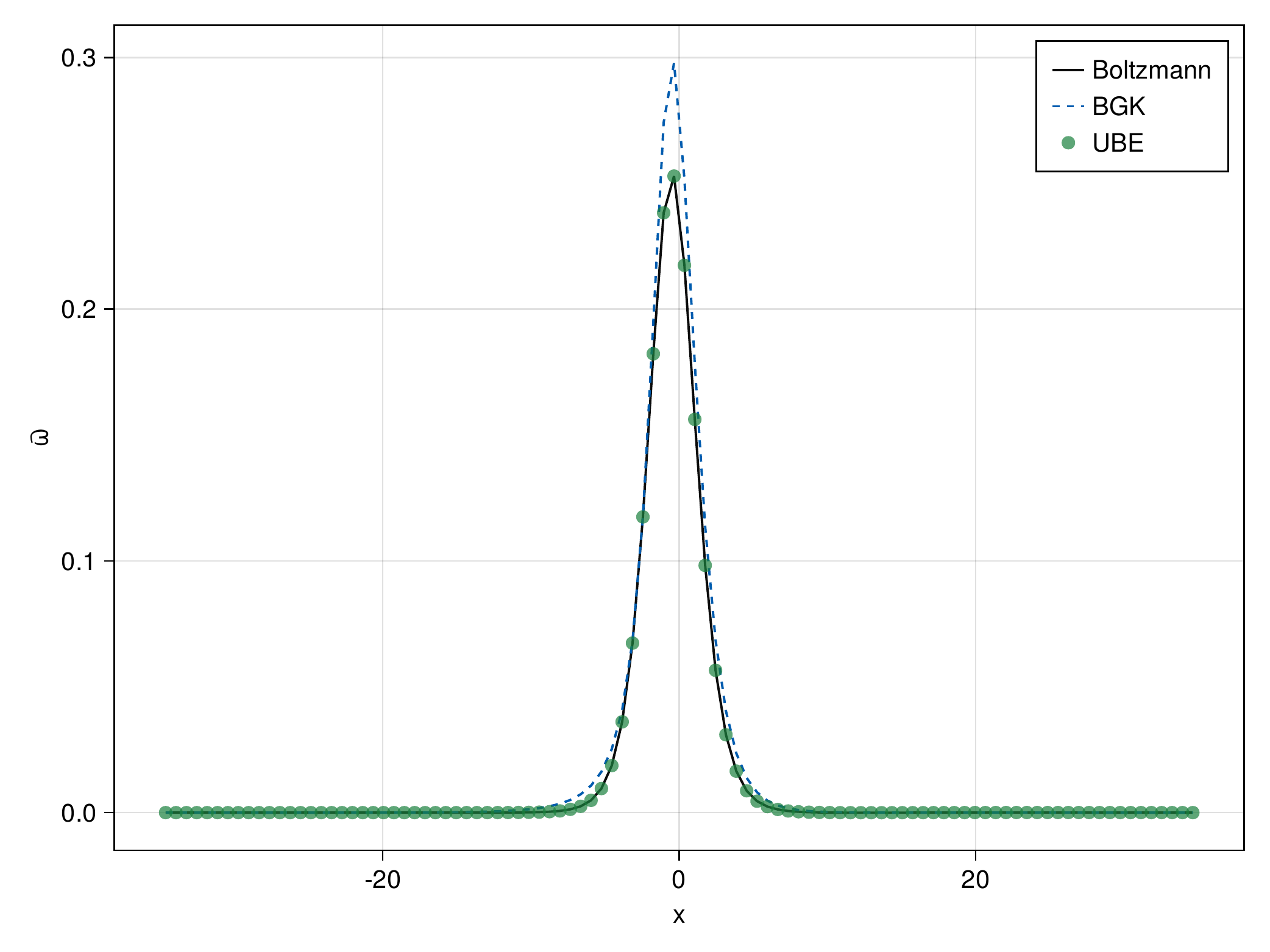}
	}
	\subfigure[Heat flux]{
		\includegraphics[width=0.31\textwidth]{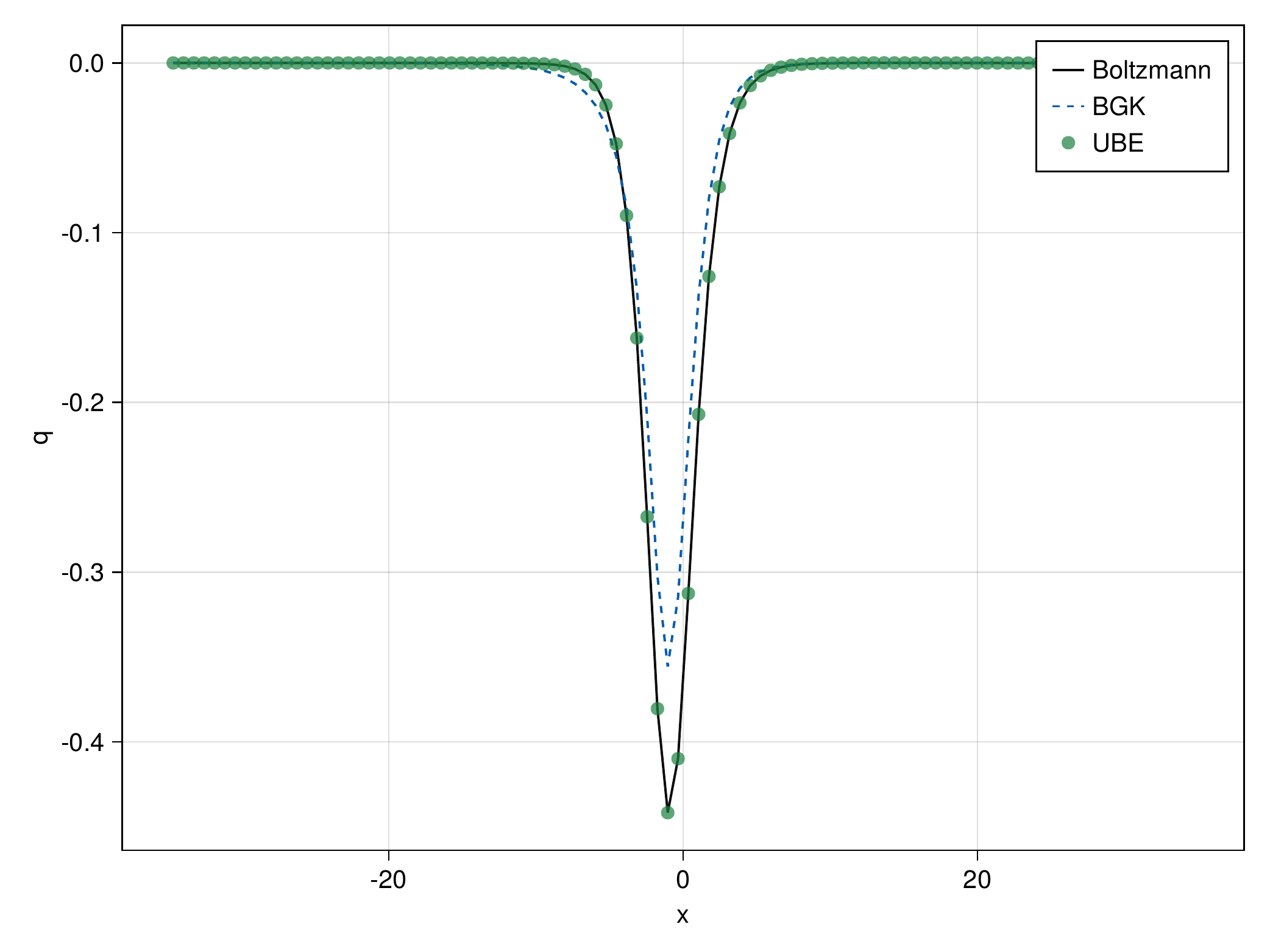}
	}
	\subfigure[Entropy]{
		\includegraphics[width=0.31\textwidth]{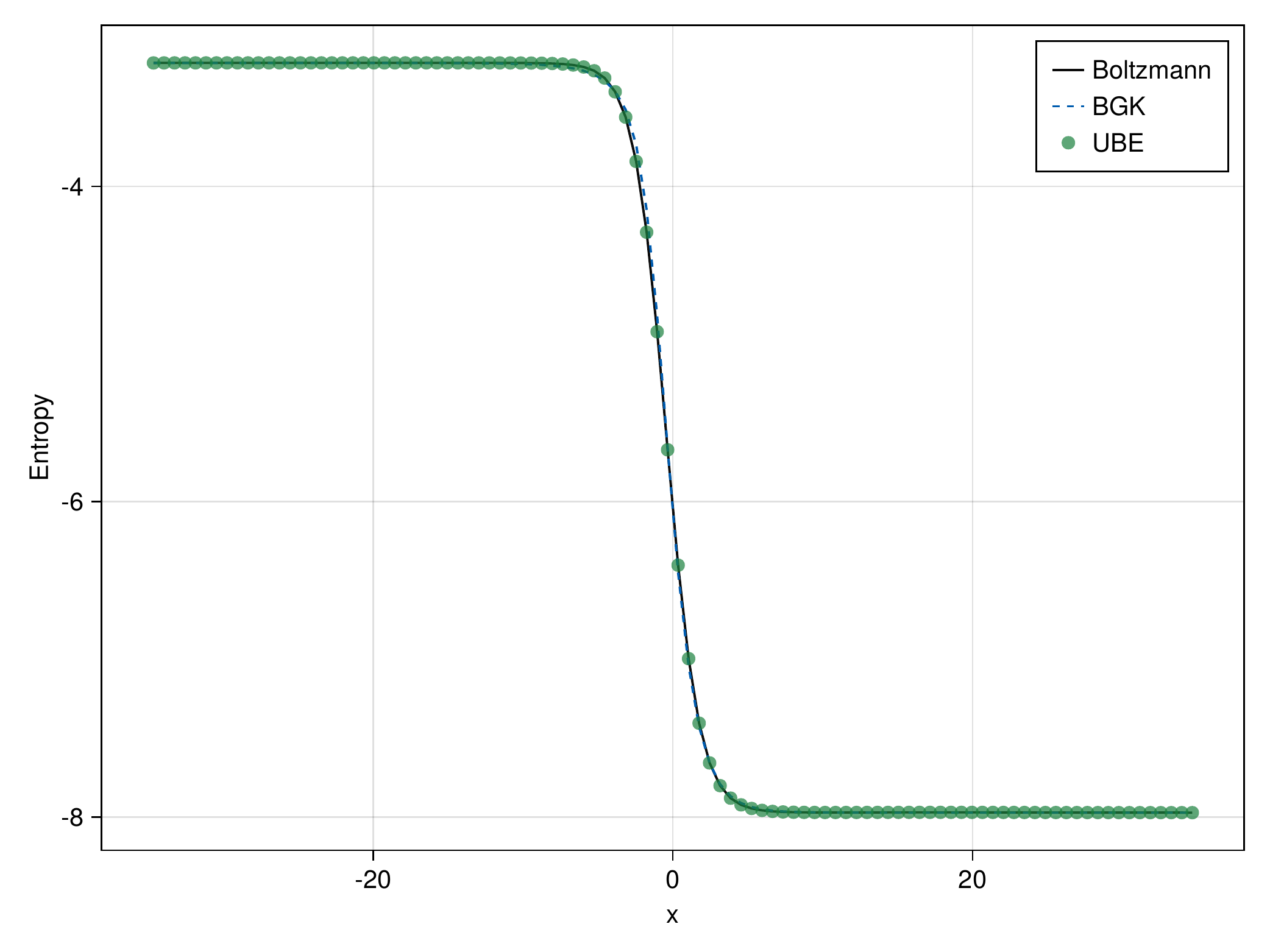}
	}
	\caption{Profiles of macroscopic variables at $\rm Ma=2$ in the normal shock structure.}
    \label{fig:shock macro ma2}
\end{figure}

\begin{figure}[htb!]
	\centering
	\subfigure[Density]{
		\includegraphics[width=0.31\textwidth]{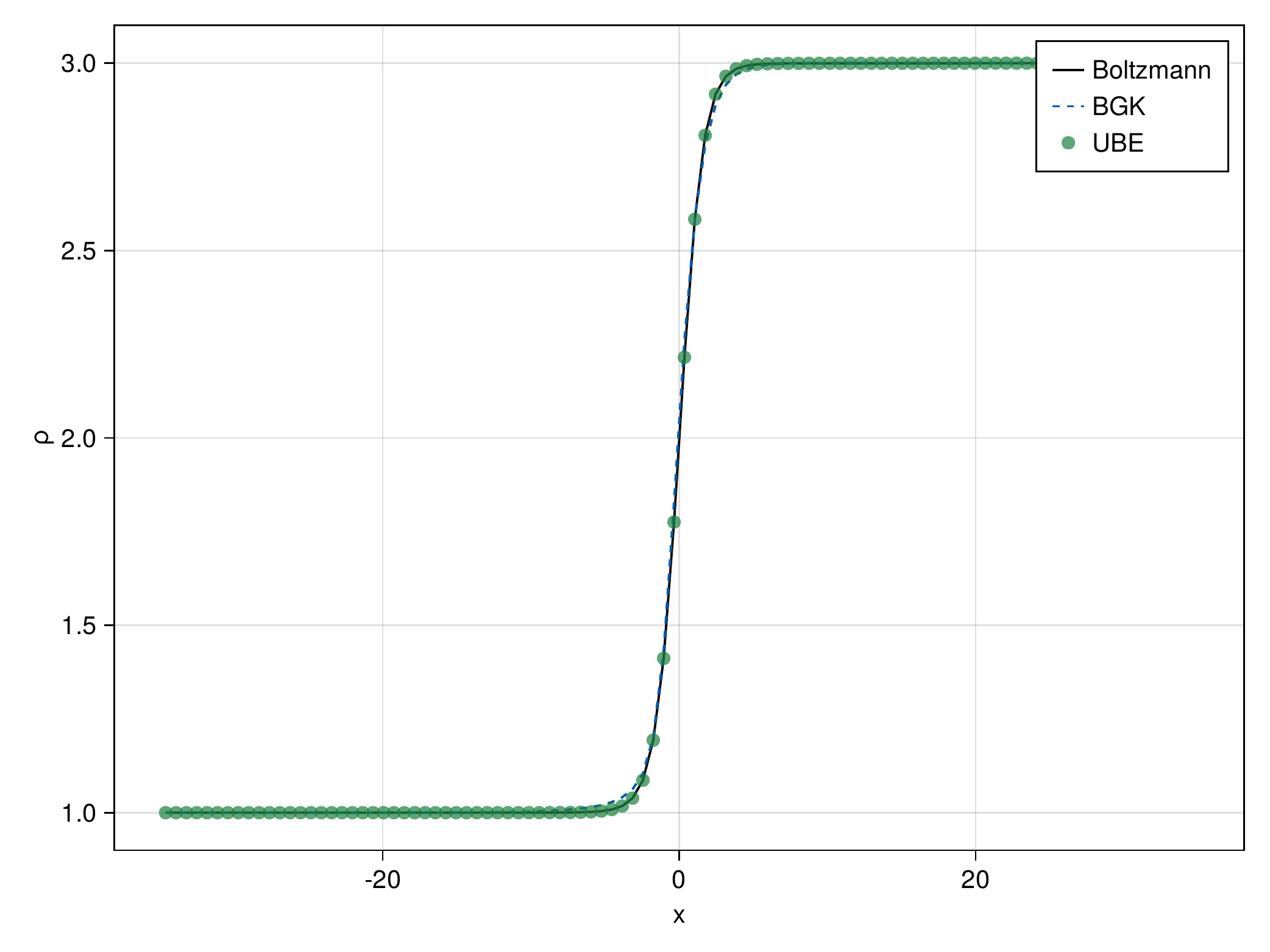}
	}
	\subfigure[Velocity]{
		\includegraphics[width=0.31\textwidth]{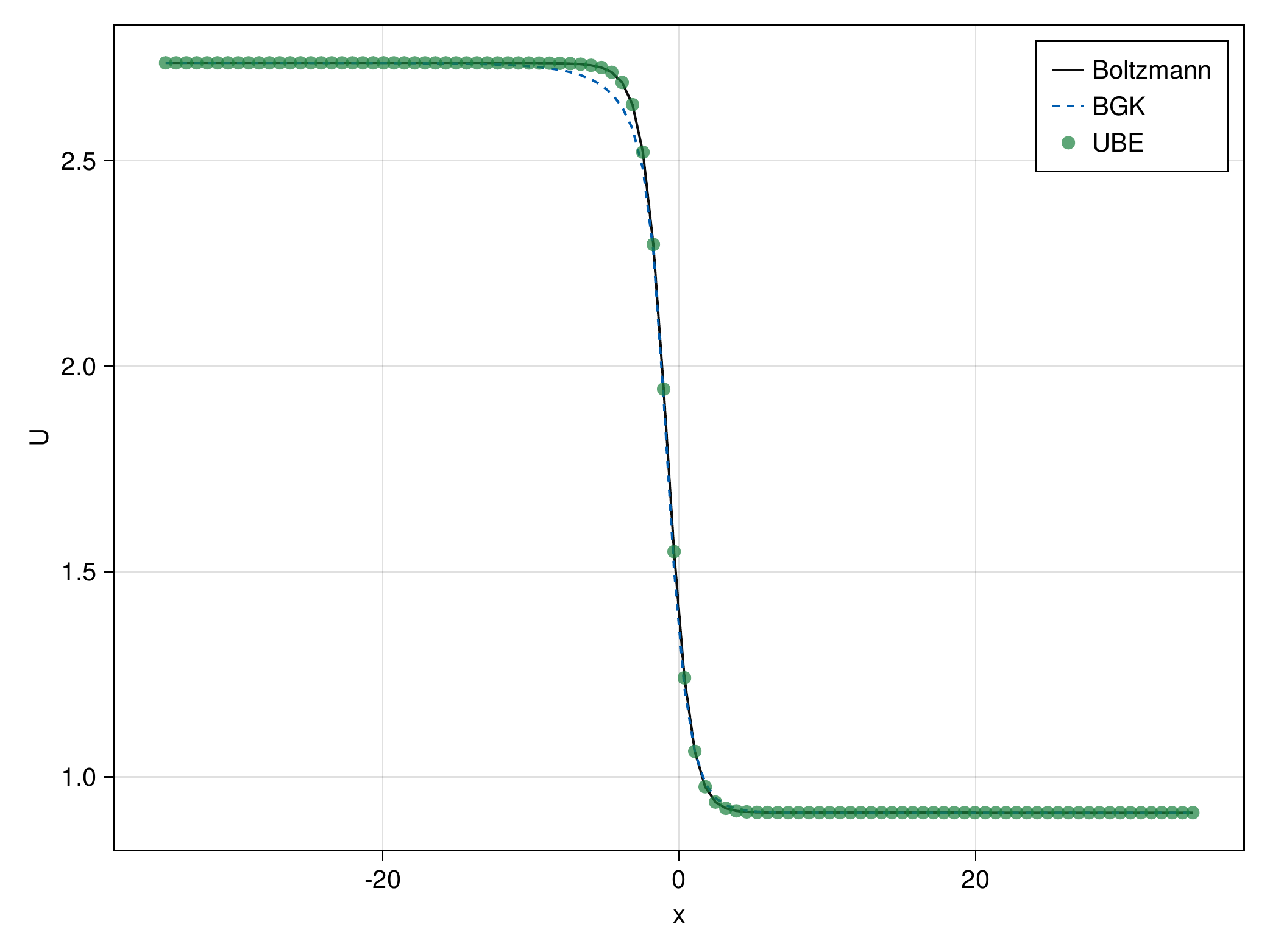}
	}
	\subfigure[Temperature]{
		\includegraphics[width=0.31\textwidth]{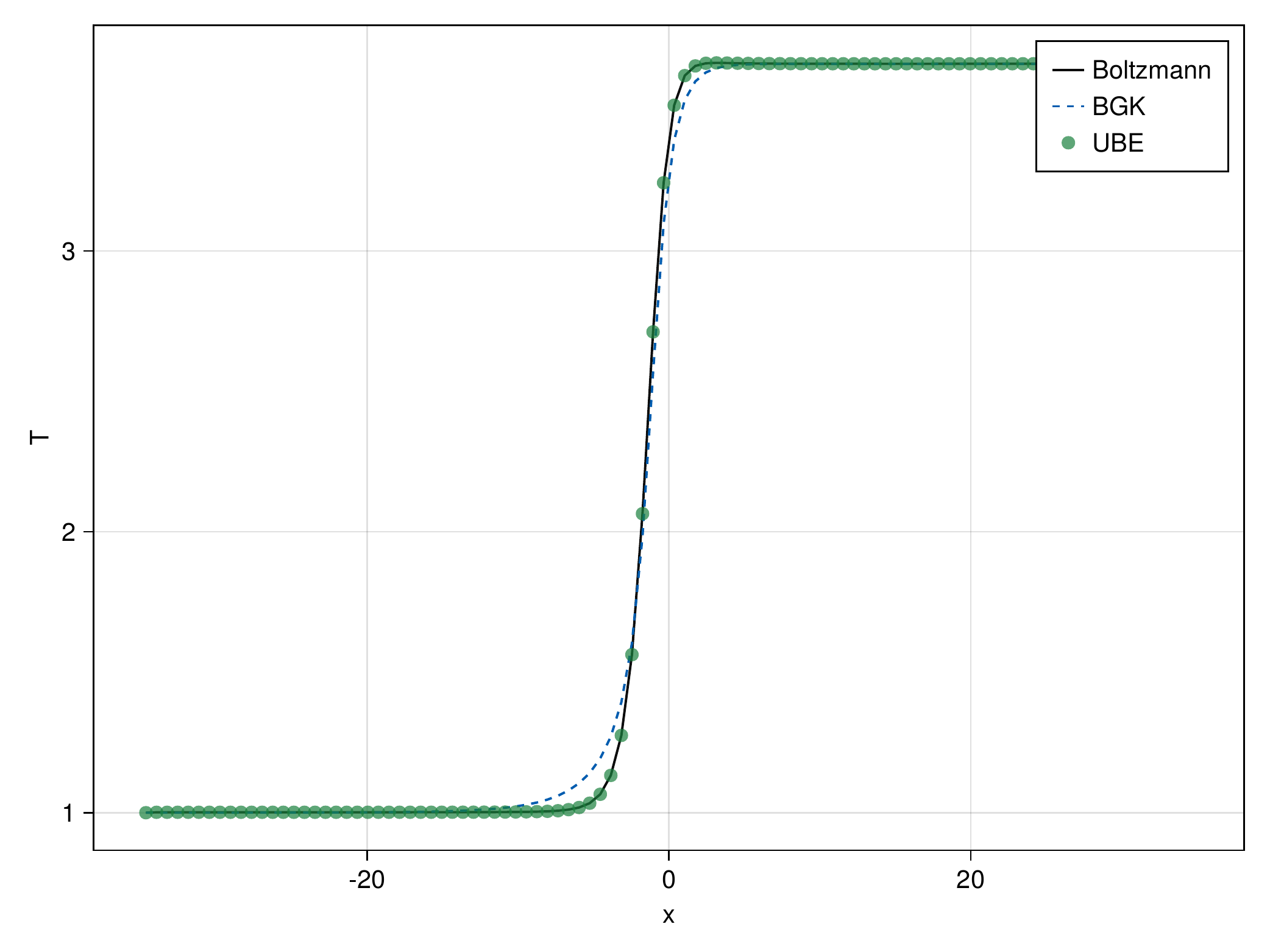}
	}
	\subfigure[Viscous stress]{
		\includegraphics[width=0.31\textwidth]{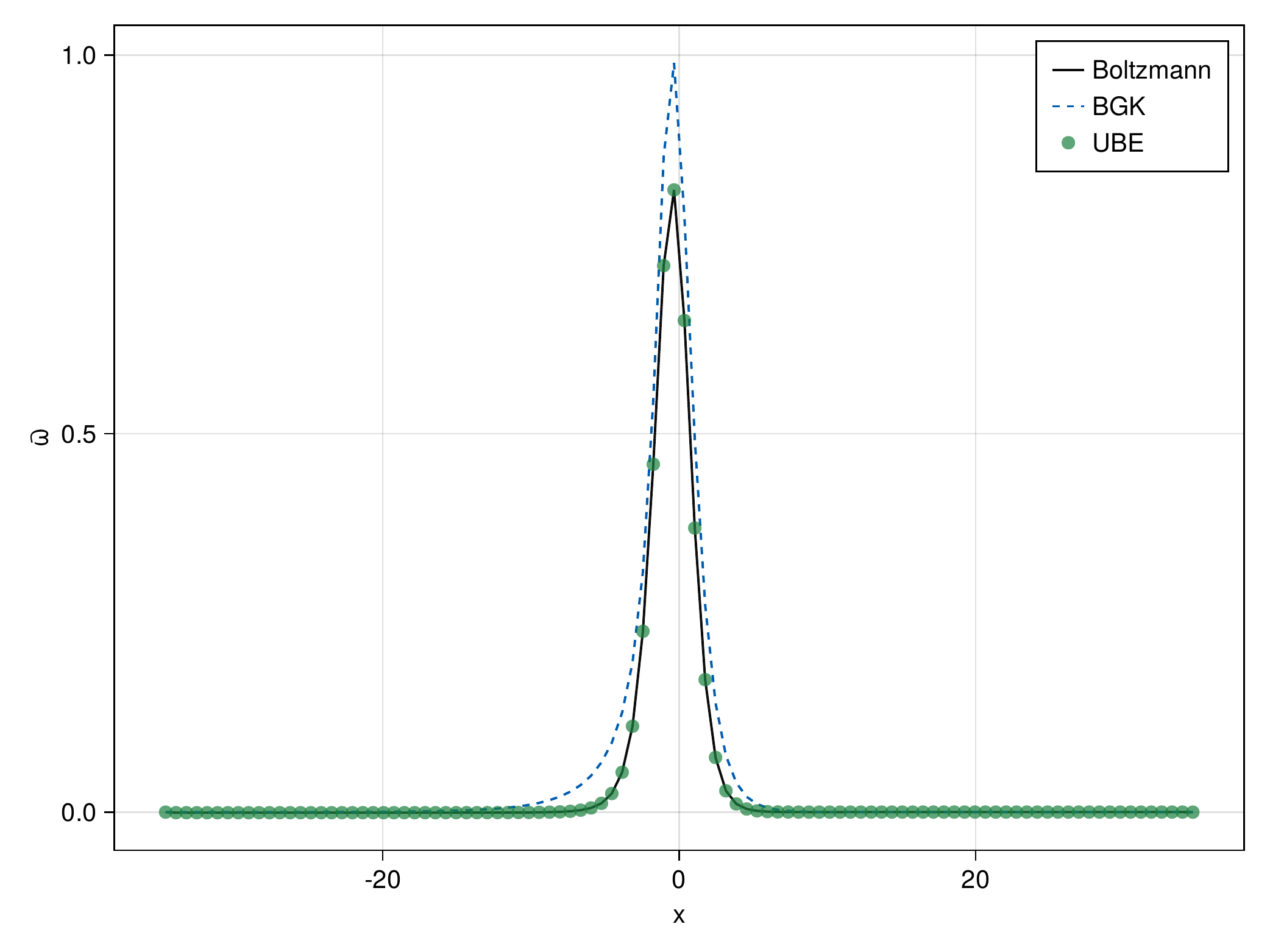}
	}
	\subfigure[Heat flux]{
		\includegraphics[width=0.31\textwidth]{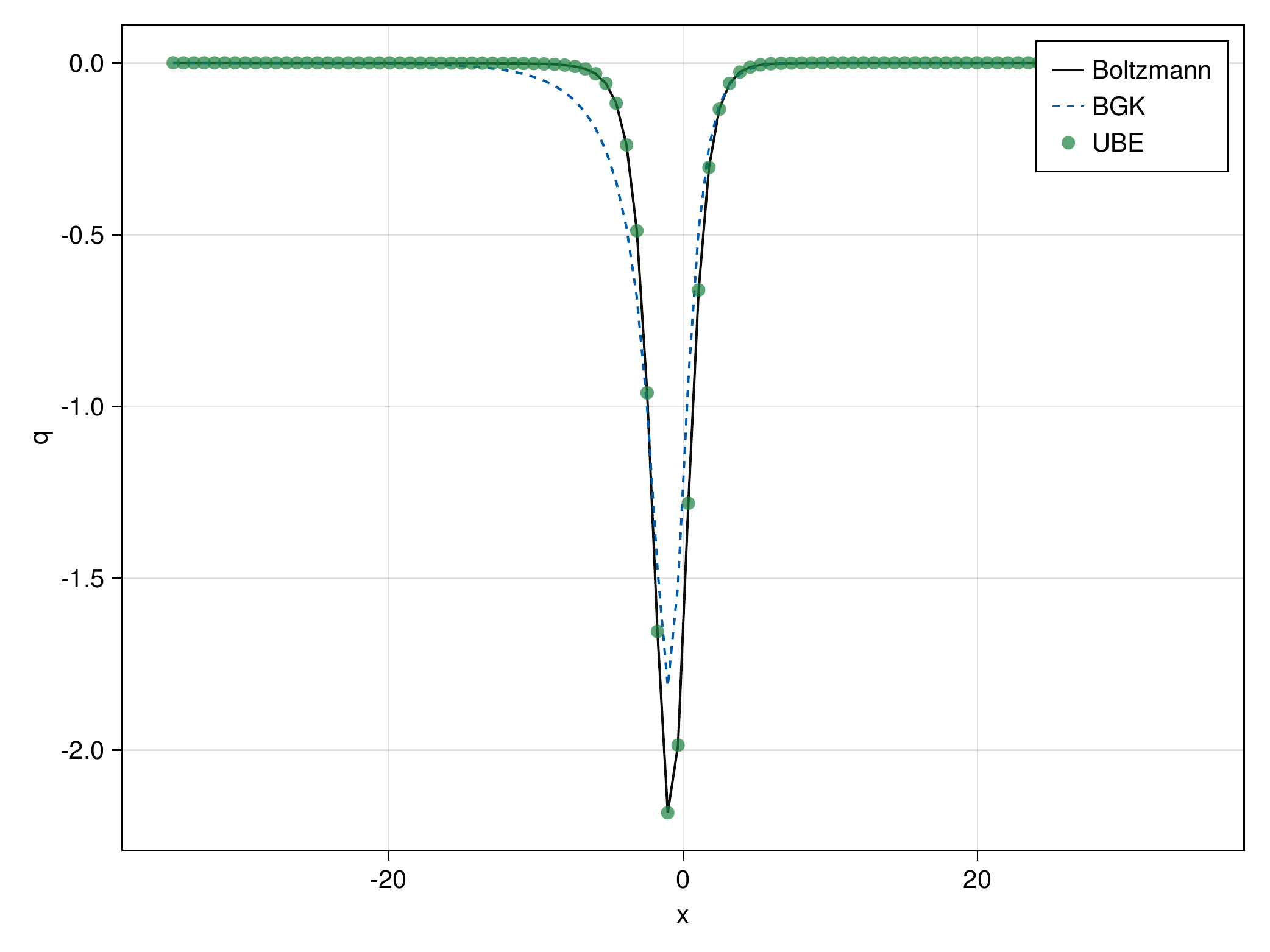}
	}
	\subfigure[Entropy]{
		\includegraphics[width=0.31\textwidth]{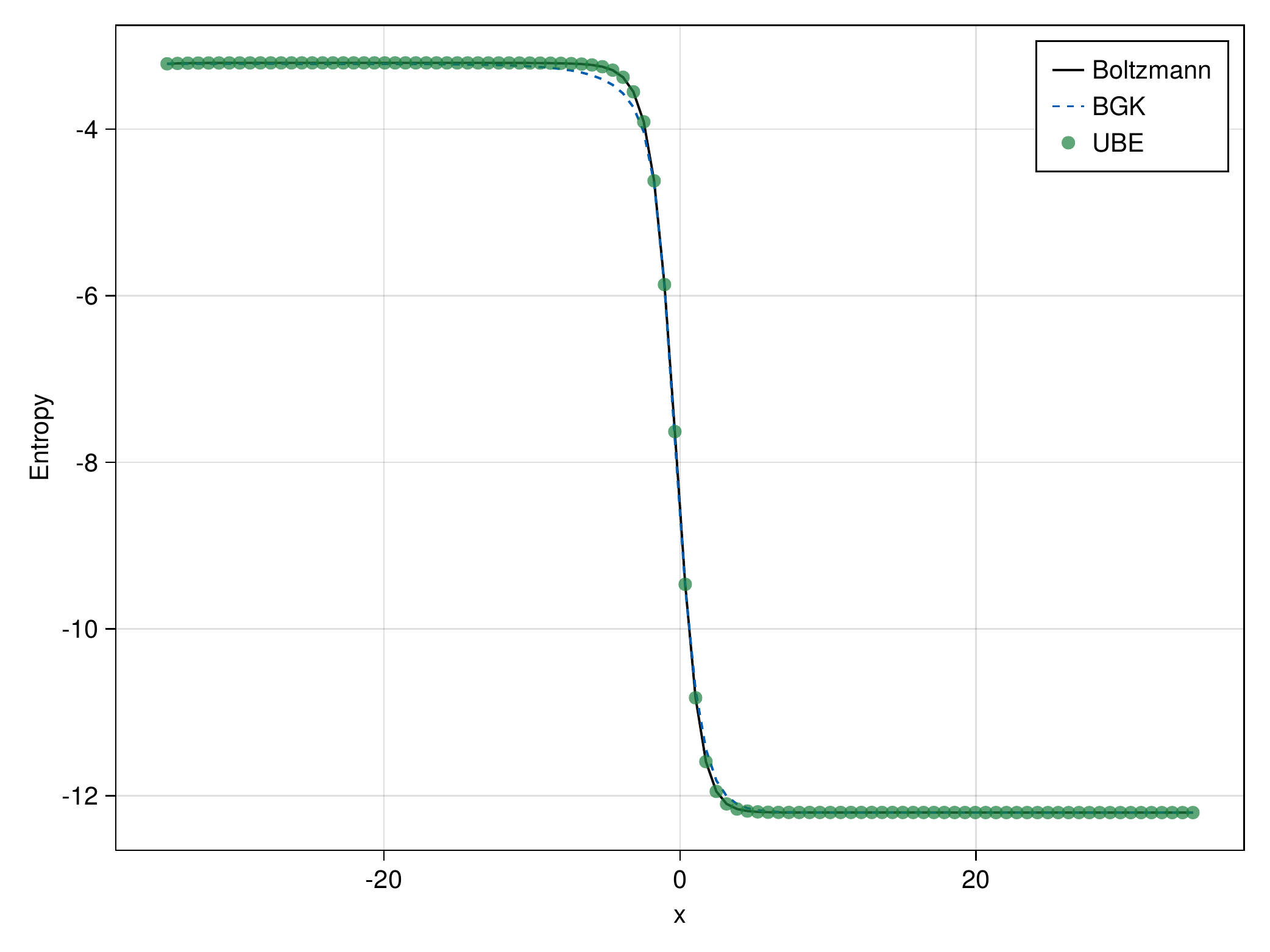}
	}
	\caption{Profiles of macroscopic variables at $\rm Ma=3$ in the normal shock structure.}
    \label{fig:shock macro ma3}
\end{figure}

\begin{figure}[htb!]
	\centering
	\subfigure[UBE]{
		\includegraphics[width=0.35\textwidth]{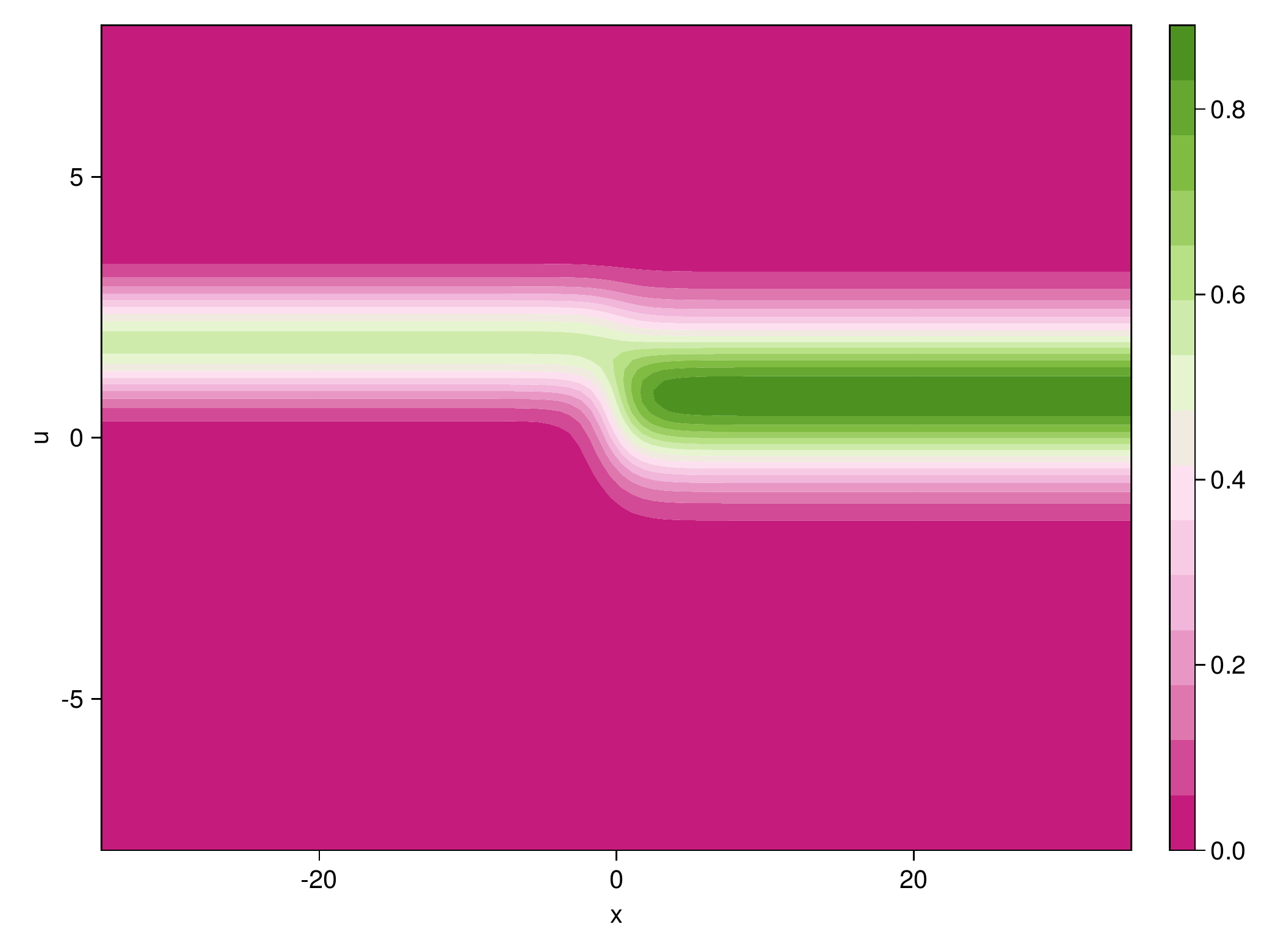}
	}
	\subfigure[BGK]{
		\includegraphics[width=0.35\textwidth]{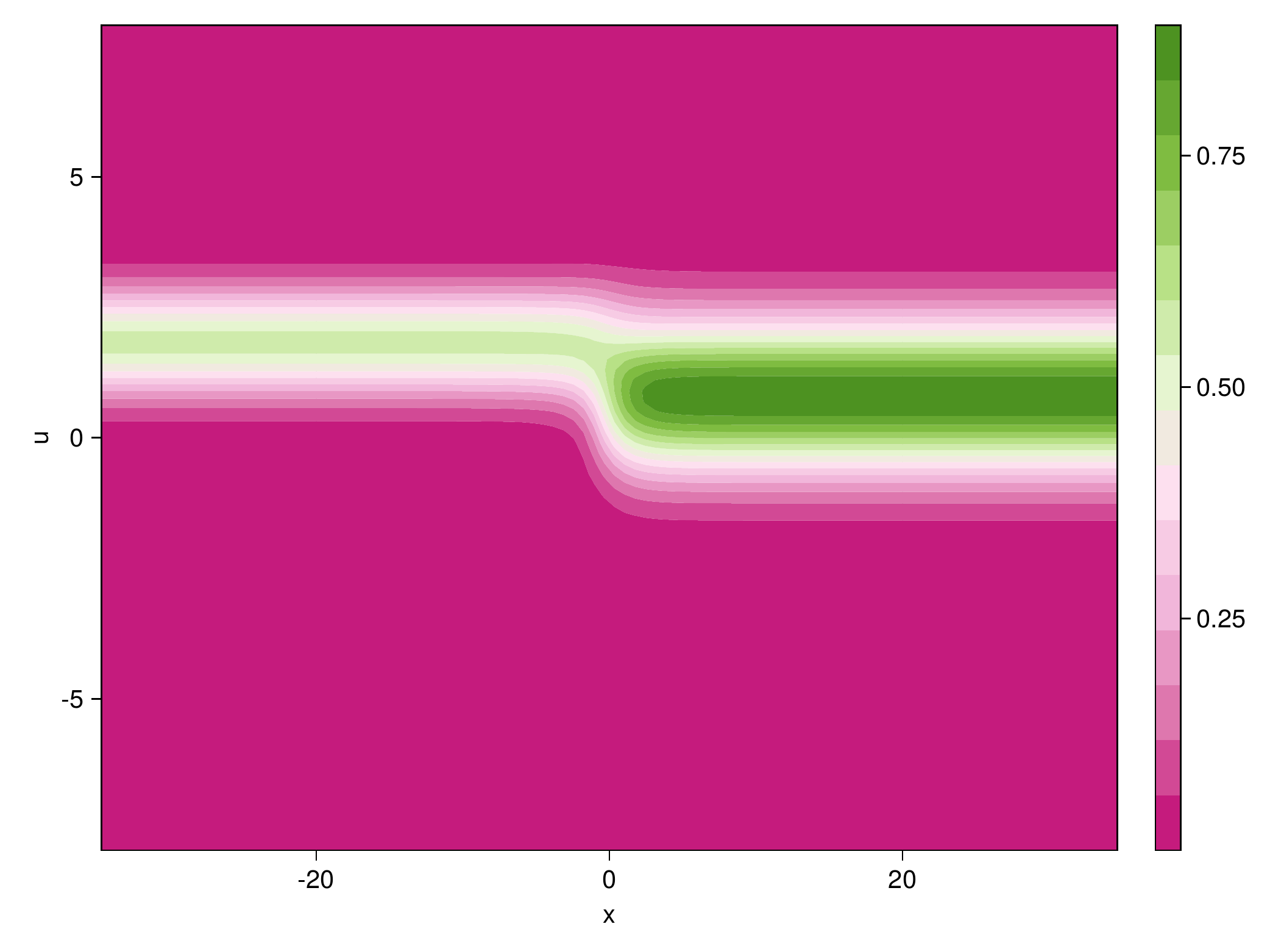}
	}
	\subfigure[UBE]{
		\includegraphics[width=0.35\textwidth]{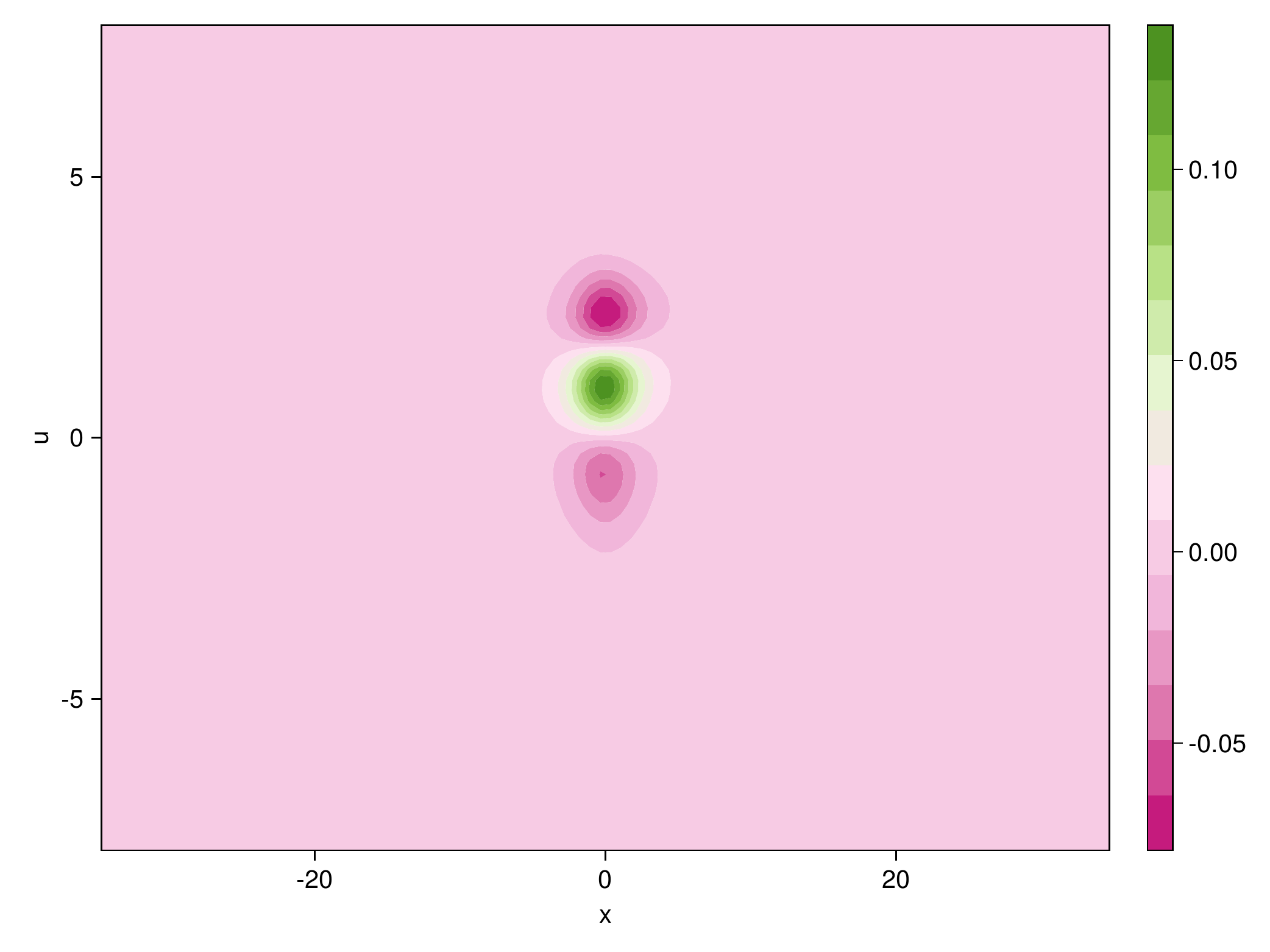}
	}
	\subfigure[BGK]{
		\includegraphics[width=0.35\textwidth]{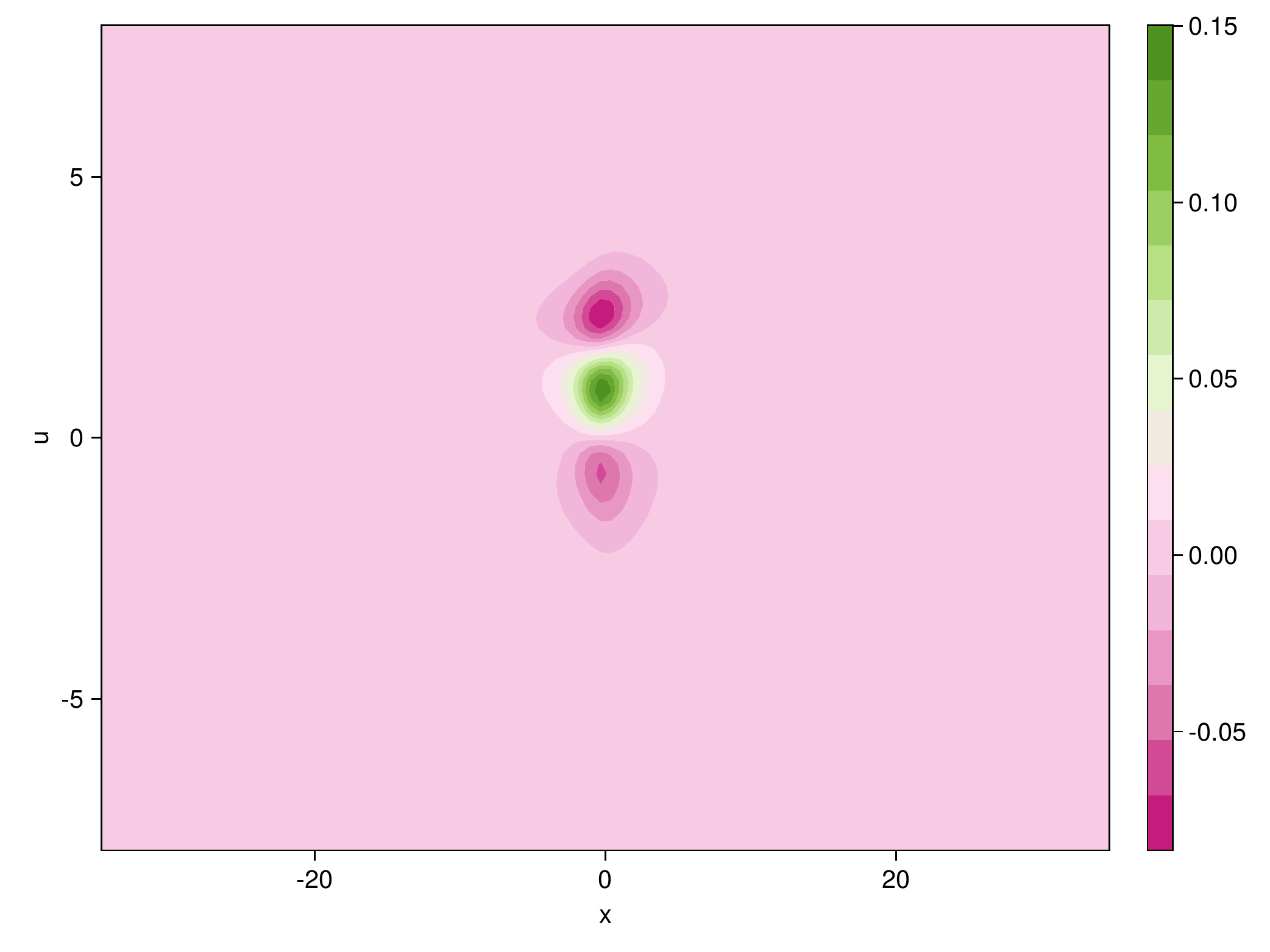}
	}
	\caption{Contours of particle distribution function (first row) and collision term (second row) at $\rm Ma=2$ in the normal shock structure.}
    \label{fig:shock pdf ma2}
\end{figure}

\begin{figure}[htb!]
	\centering
	\subfigure[UBE]{
		\includegraphics[width=0.35\textwidth]{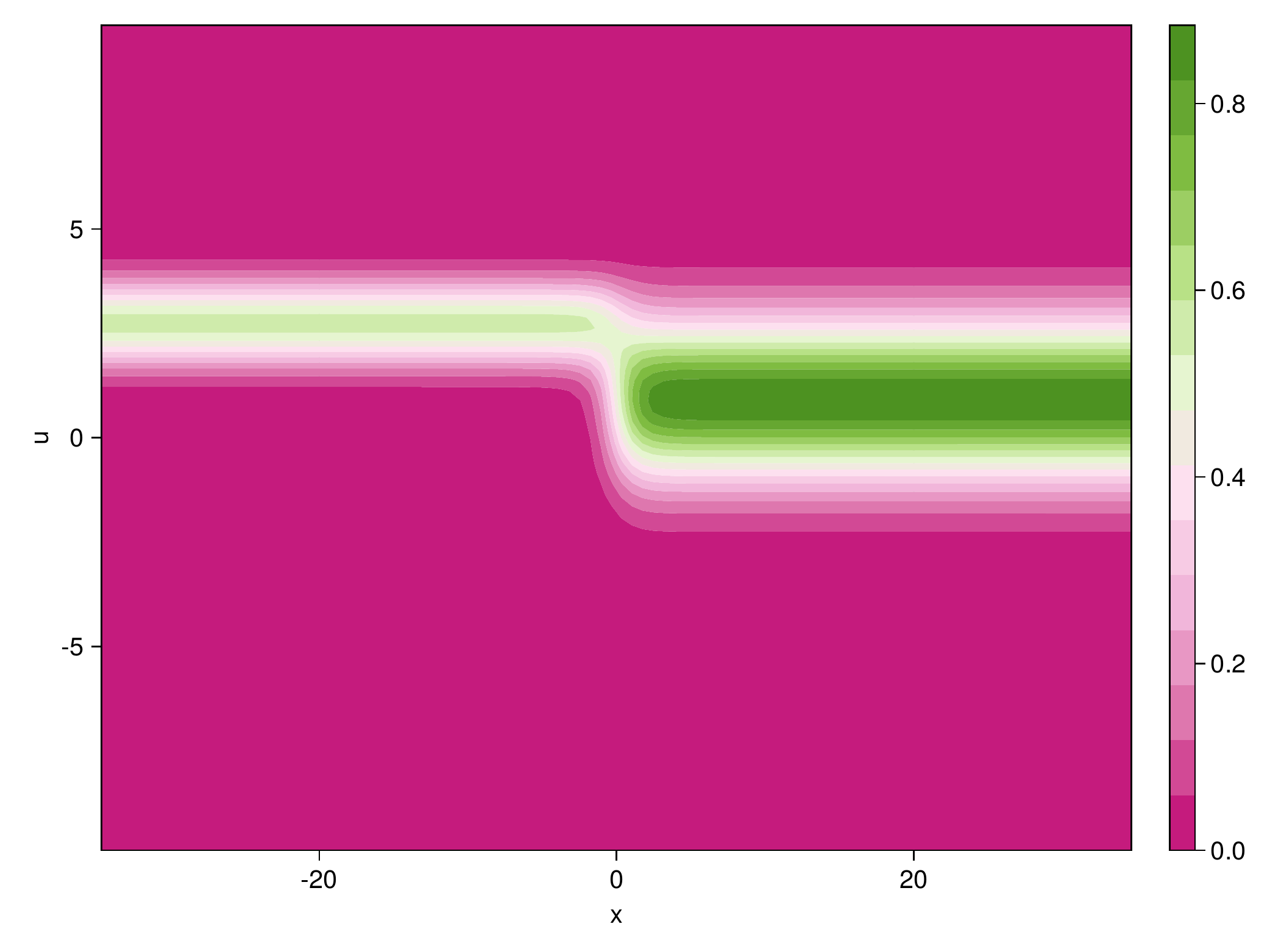}
	}
	\subfigure[BGK]{
		\includegraphics[width=0.35\textwidth]{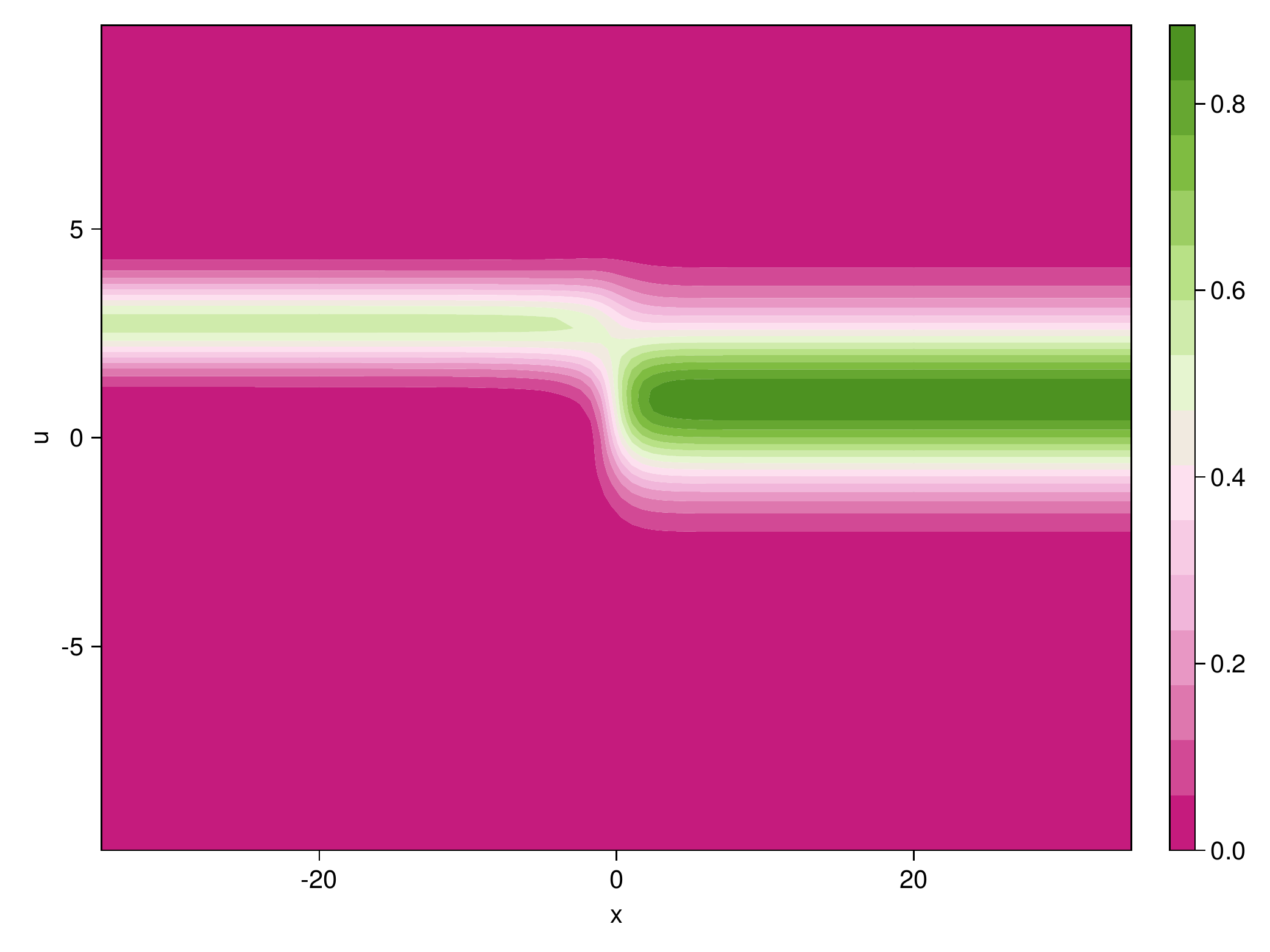}
	}
	\subfigure[UBE]{
		\includegraphics[width=0.35\textwidth]{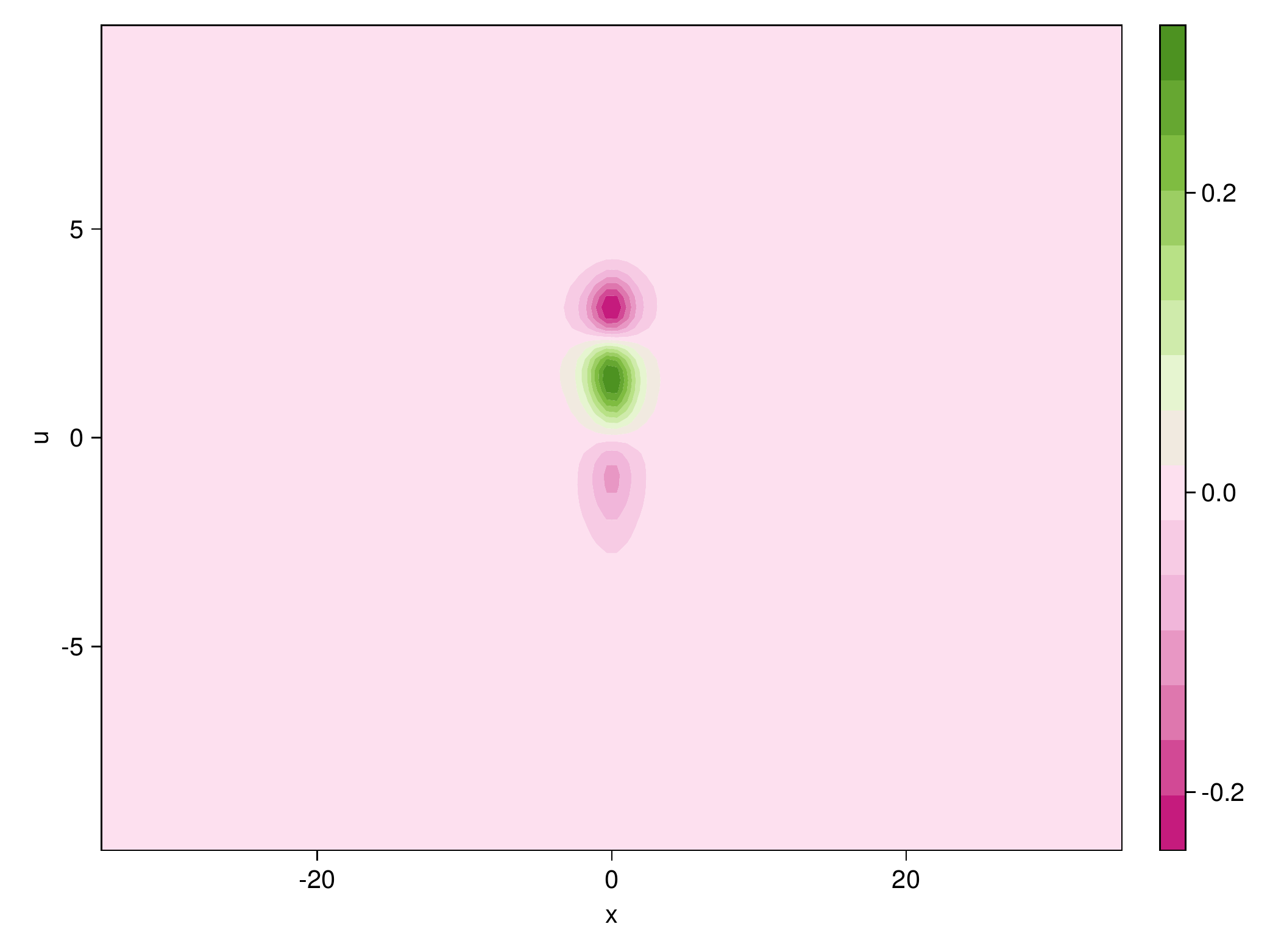}
	}
	\subfigure[BGK]{
		\includegraphics[width=0.35\textwidth]{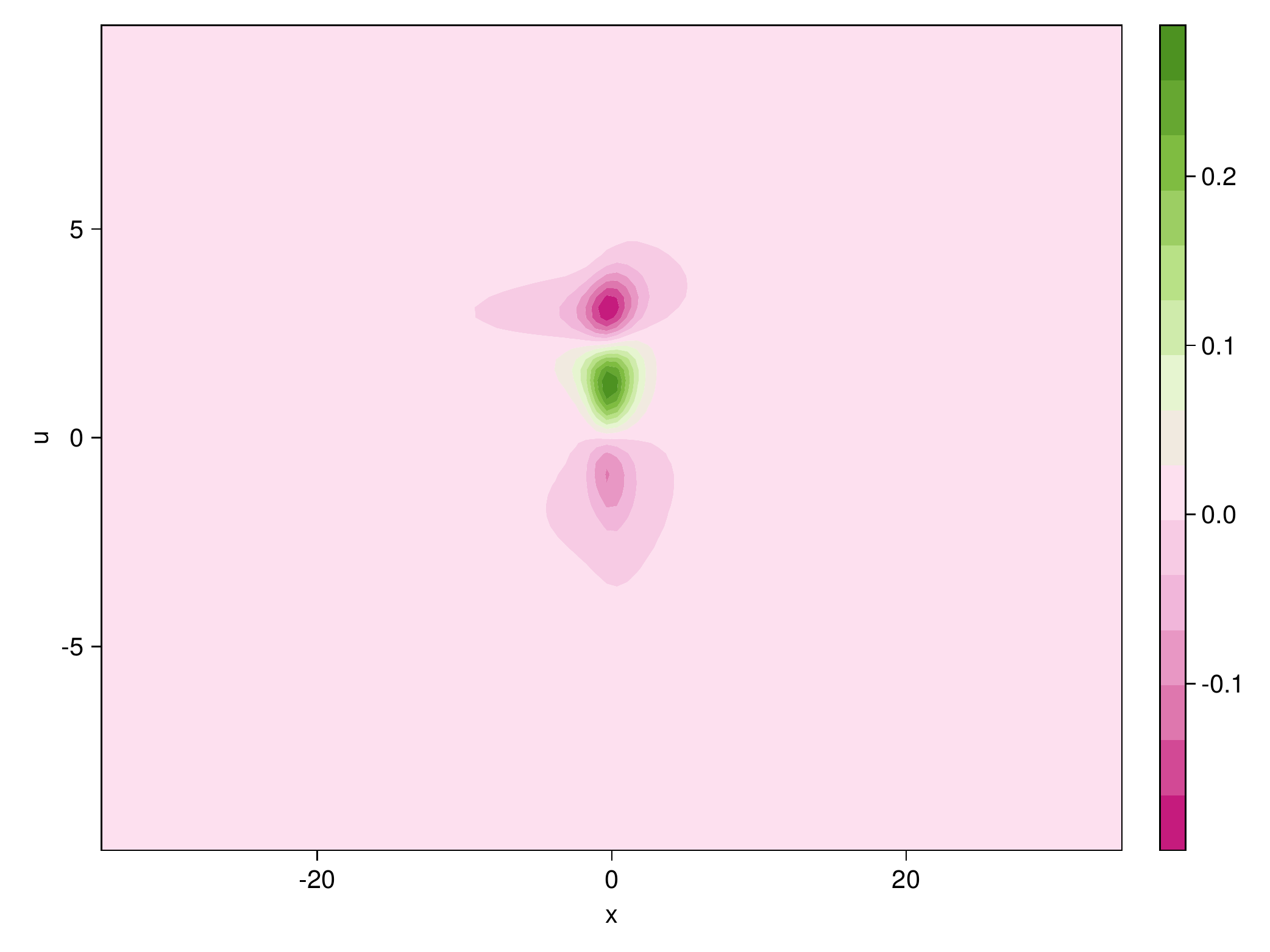}
	}
	\caption{Contours of particle distribution function (first row) and collision term (second row) at $\rm Ma=3$ in the normal shock structure.}
    \label{fig:shock pdf ma3}
\end{figure}

\begin{figure}[htb!]
	\centering
	\subfigure[Density with streamlines]{
		\includegraphics[width=0.4\textwidth]{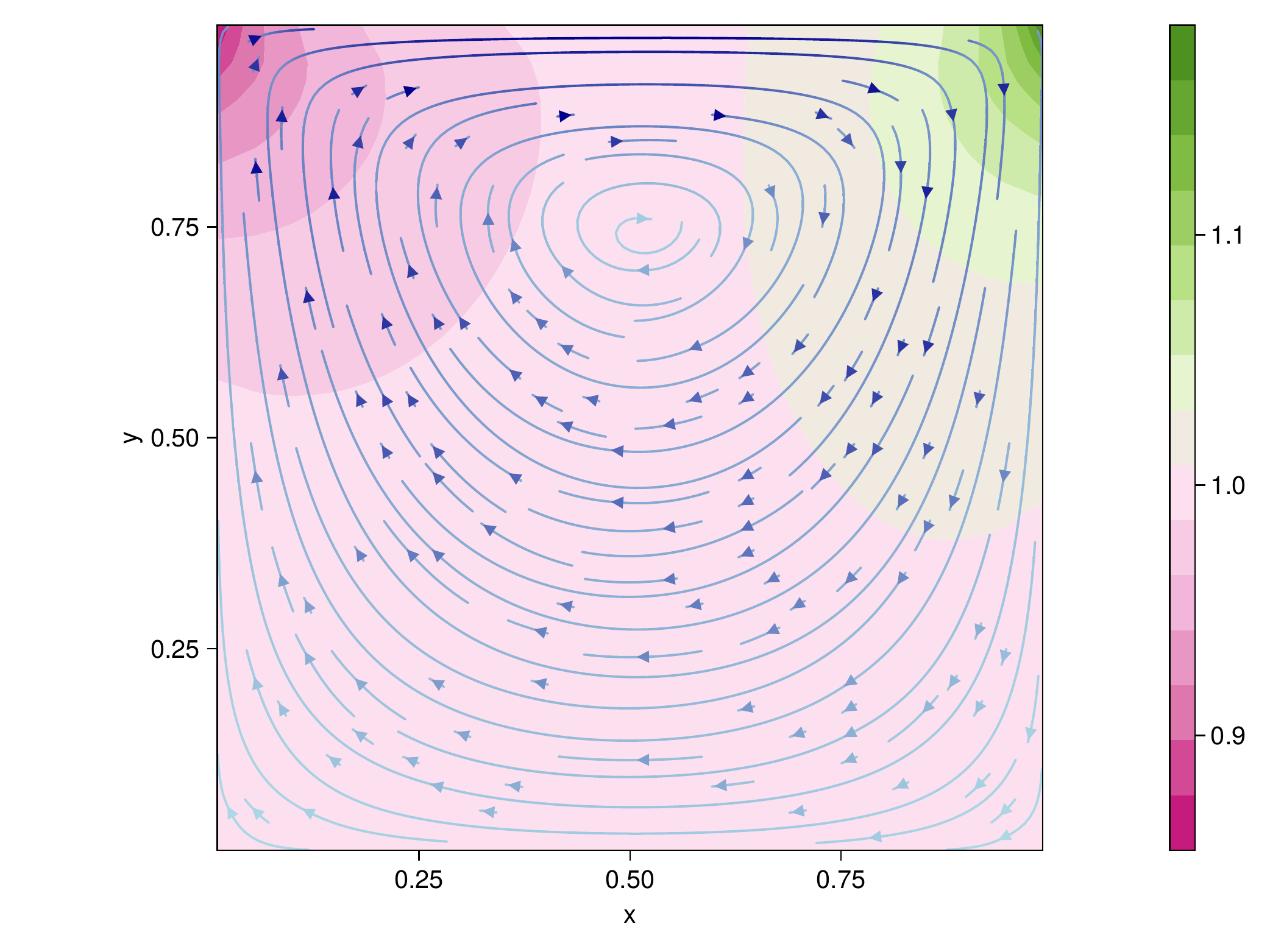}
	}
	\subfigure[Temperature with heat flux vectors]{
		\includegraphics[width=0.4\textwidth]{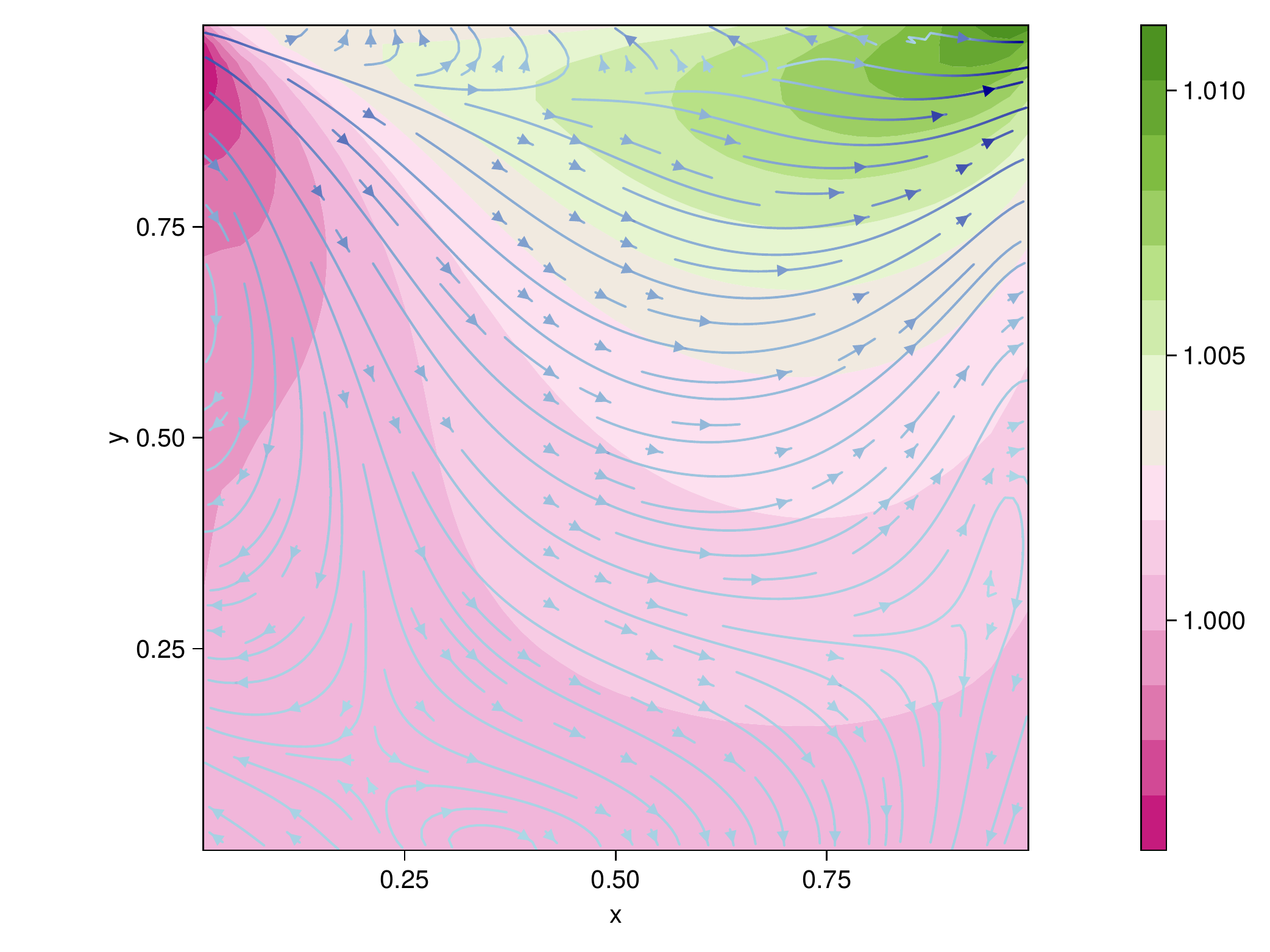}
	}
	\caption{Contours of density with streamlines and temperature with heat flux vectors inside the cavity.}
    \label{fig:cavity contour}
\end{figure}

\begin{figure}[htb!]
	\centering
	\subfigure[$U$-velocity along vertical central line]{
		\includegraphics[width=0.4\textwidth]{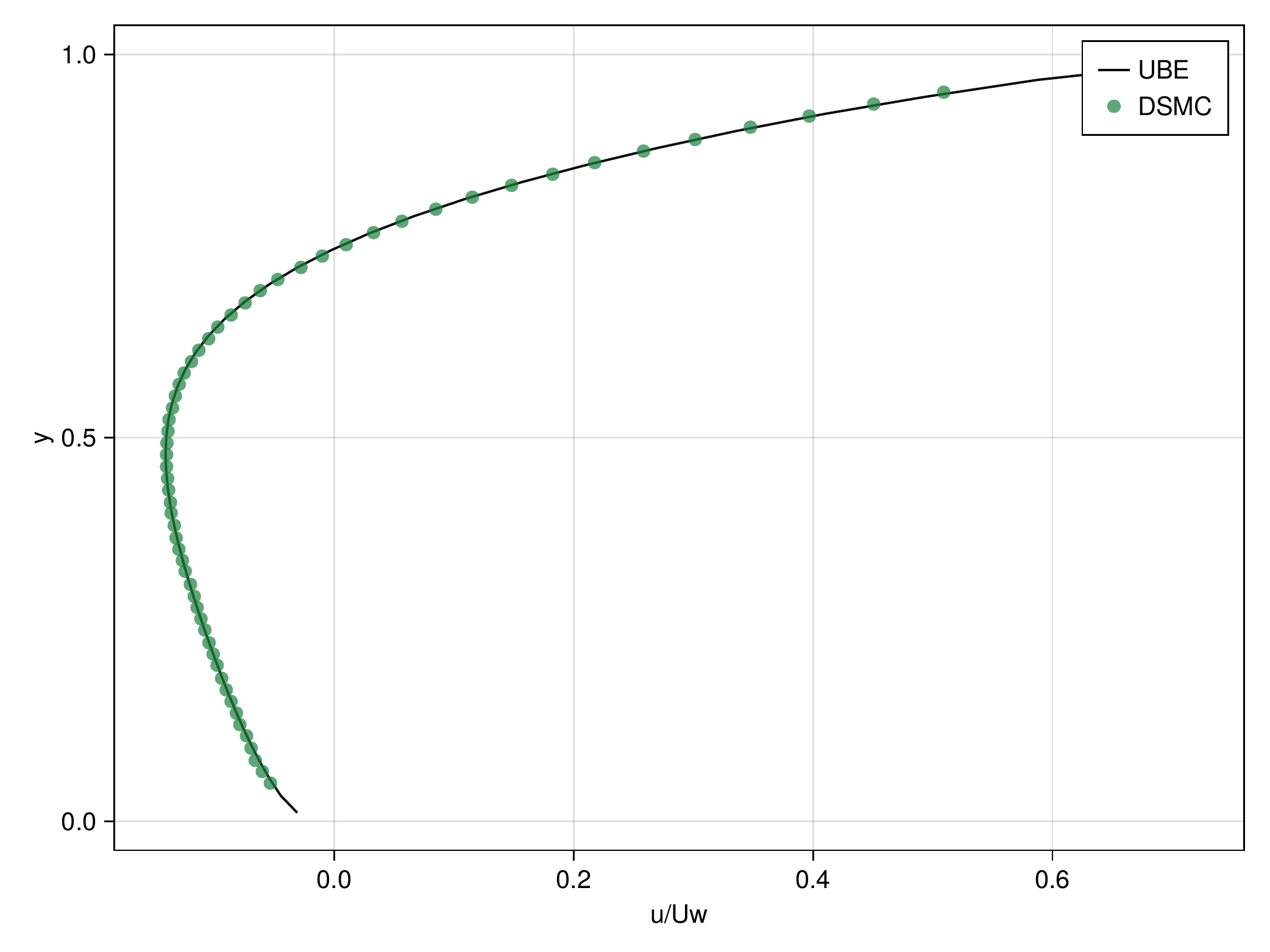}
	}
	\subfigure[$V$-velocity along horizontal central line]{
		\includegraphics[width=0.4\textwidth]{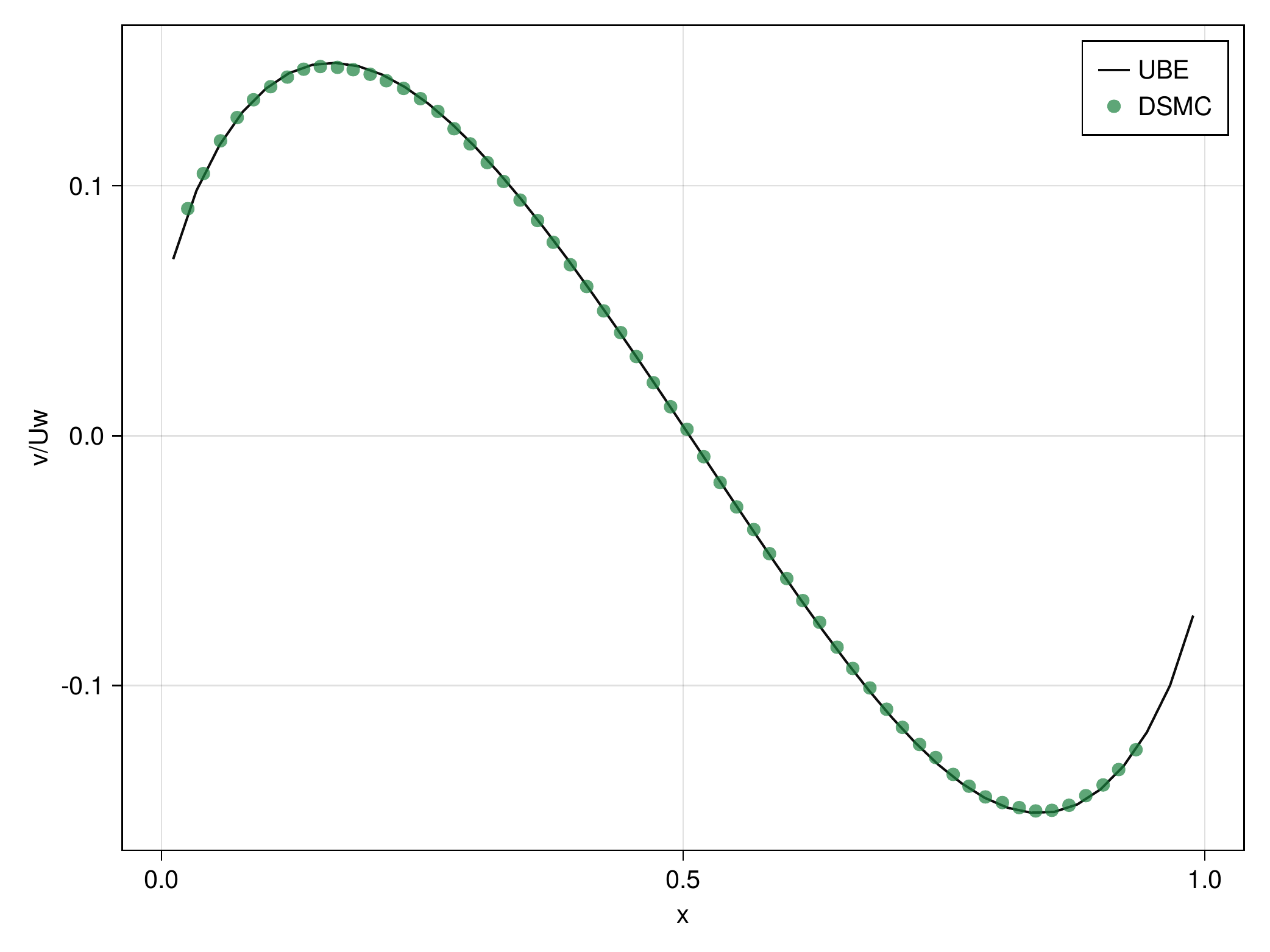}
	}
	\caption{Velocity profiles along vertical and horizontal central lines of the cavity. The results are normalized with the wall speed $U_w$.}
    \label{fig:cavity line}
\end{figure}

\begin{figure}[htb!]
	\centering
	\subfigure[UBE]{
		\includegraphics[width=0.47\textwidth]{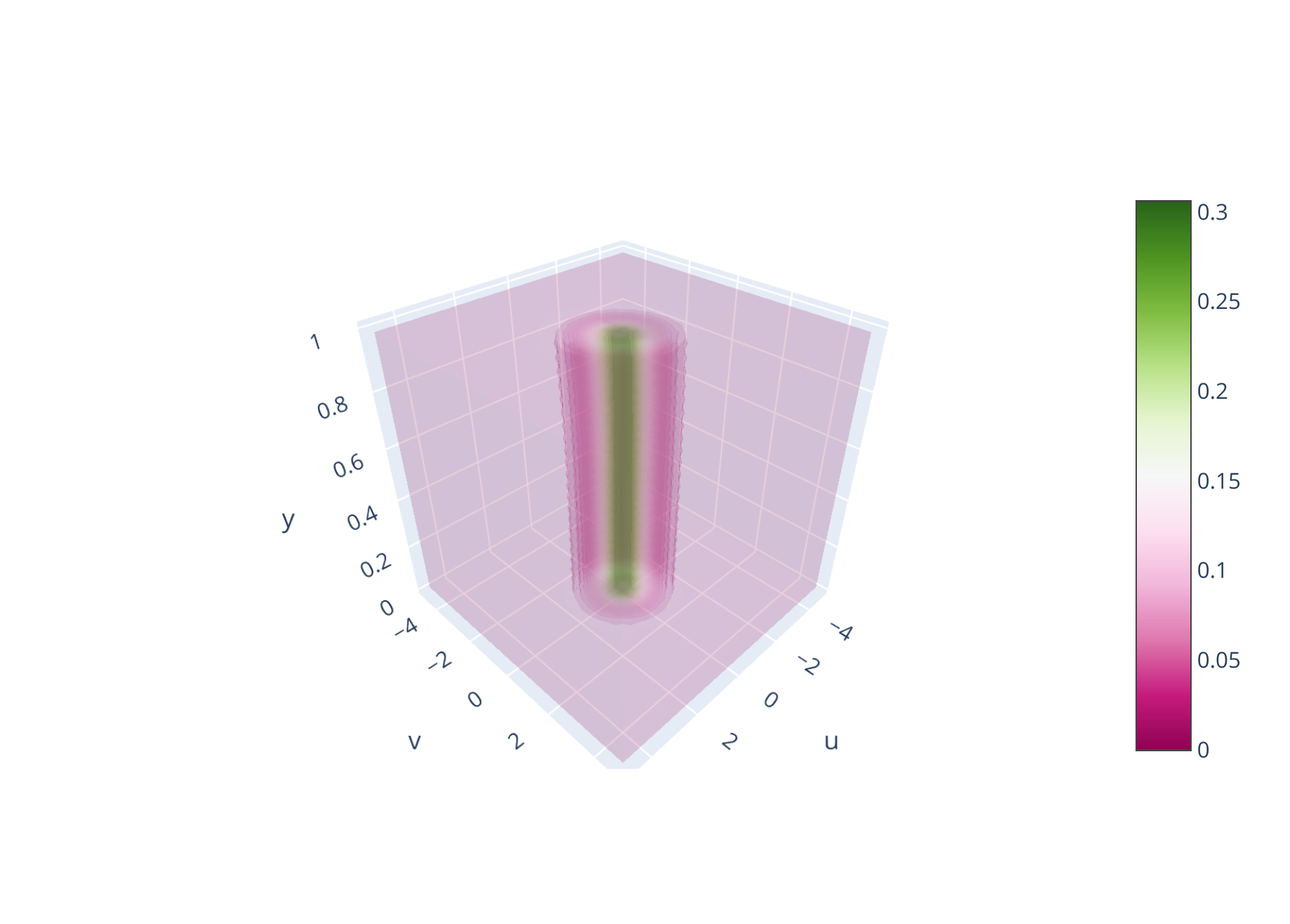}
	}
	\subfigure[BGK]{
		\includegraphics[width=0.47\textwidth]{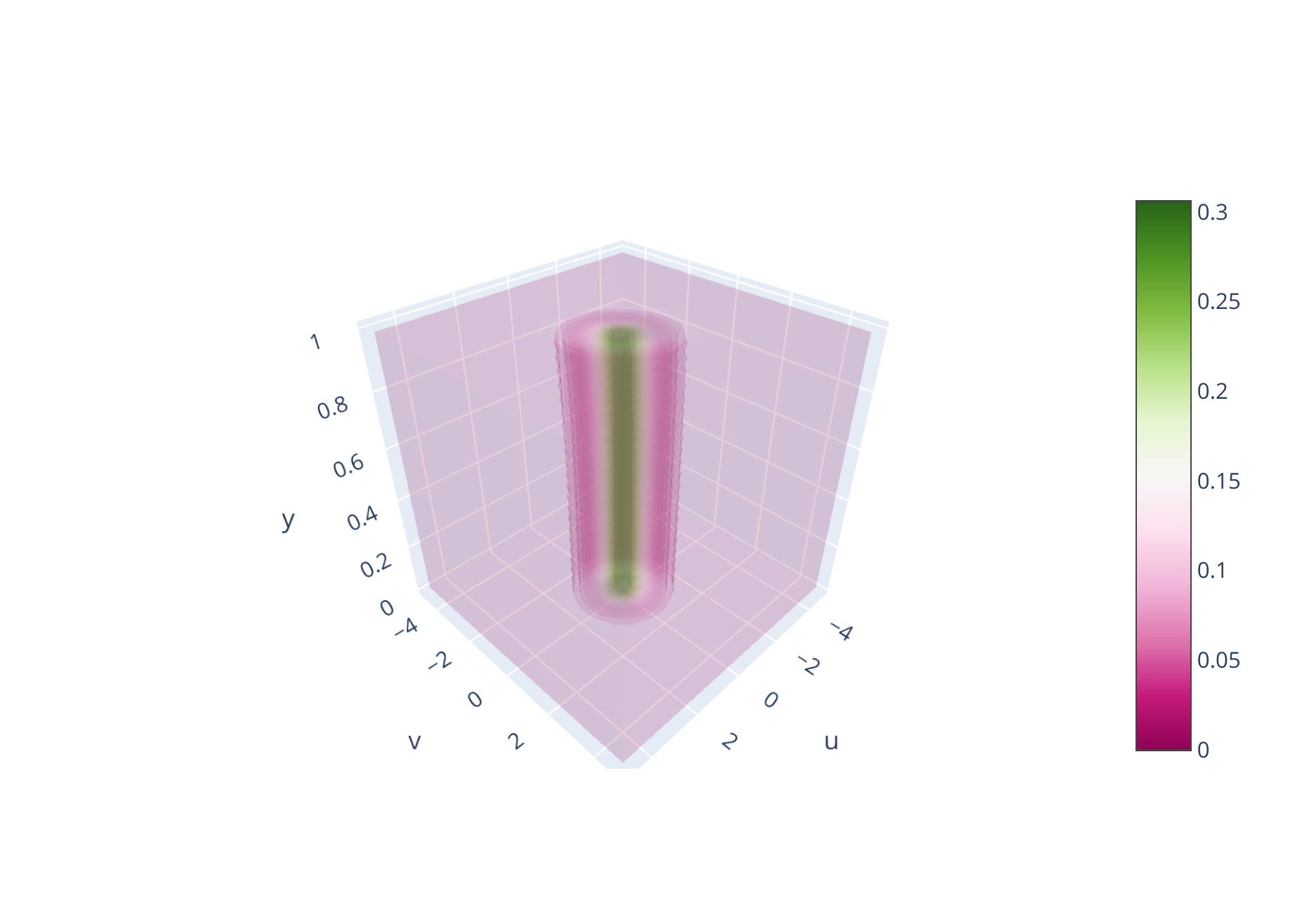}
	}
	\subfigure[UBE]{
		\includegraphics[width=0.47\textwidth]{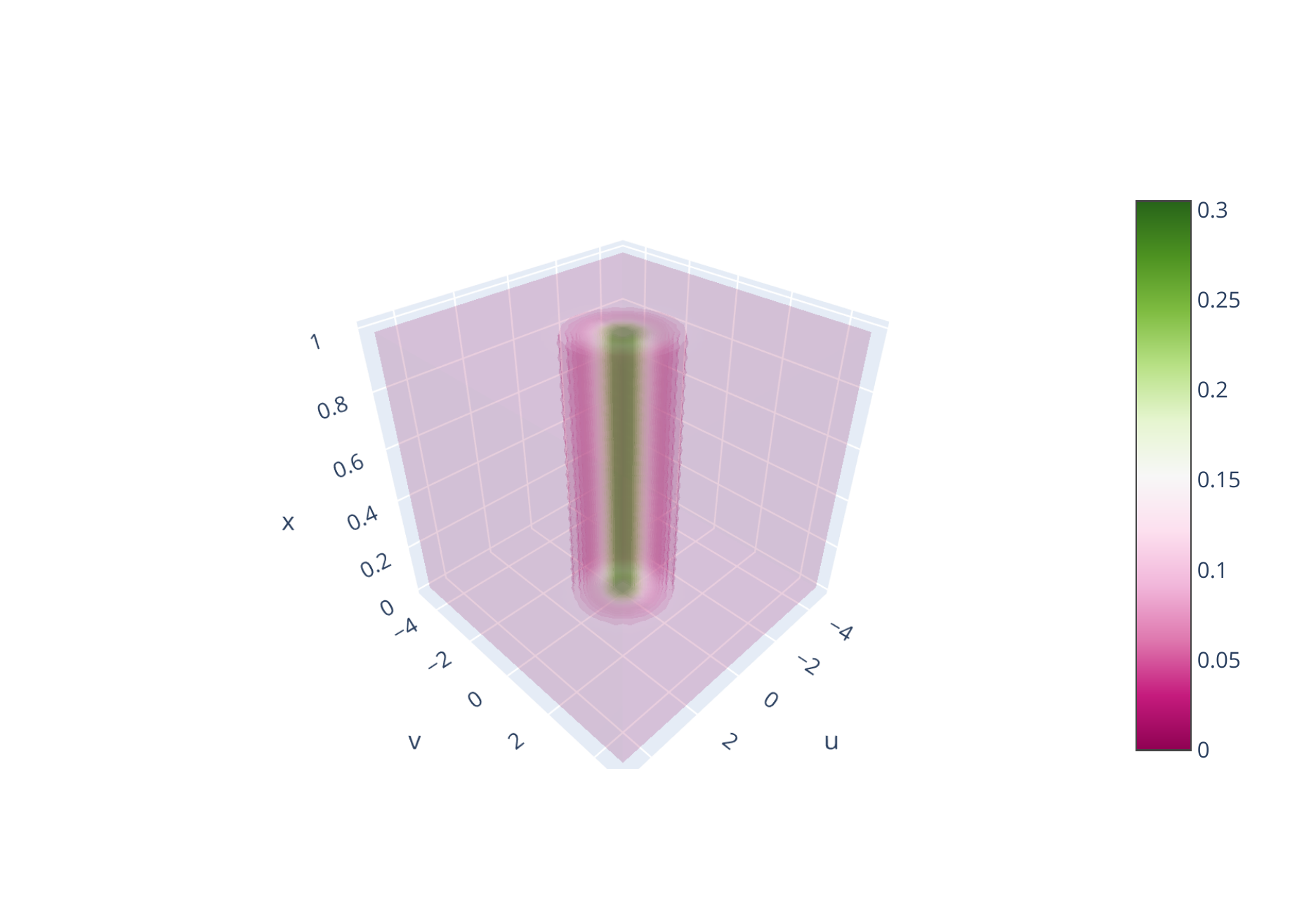}
	}
	\subfigure[BGK]{
		\includegraphics[width=0.47\textwidth]{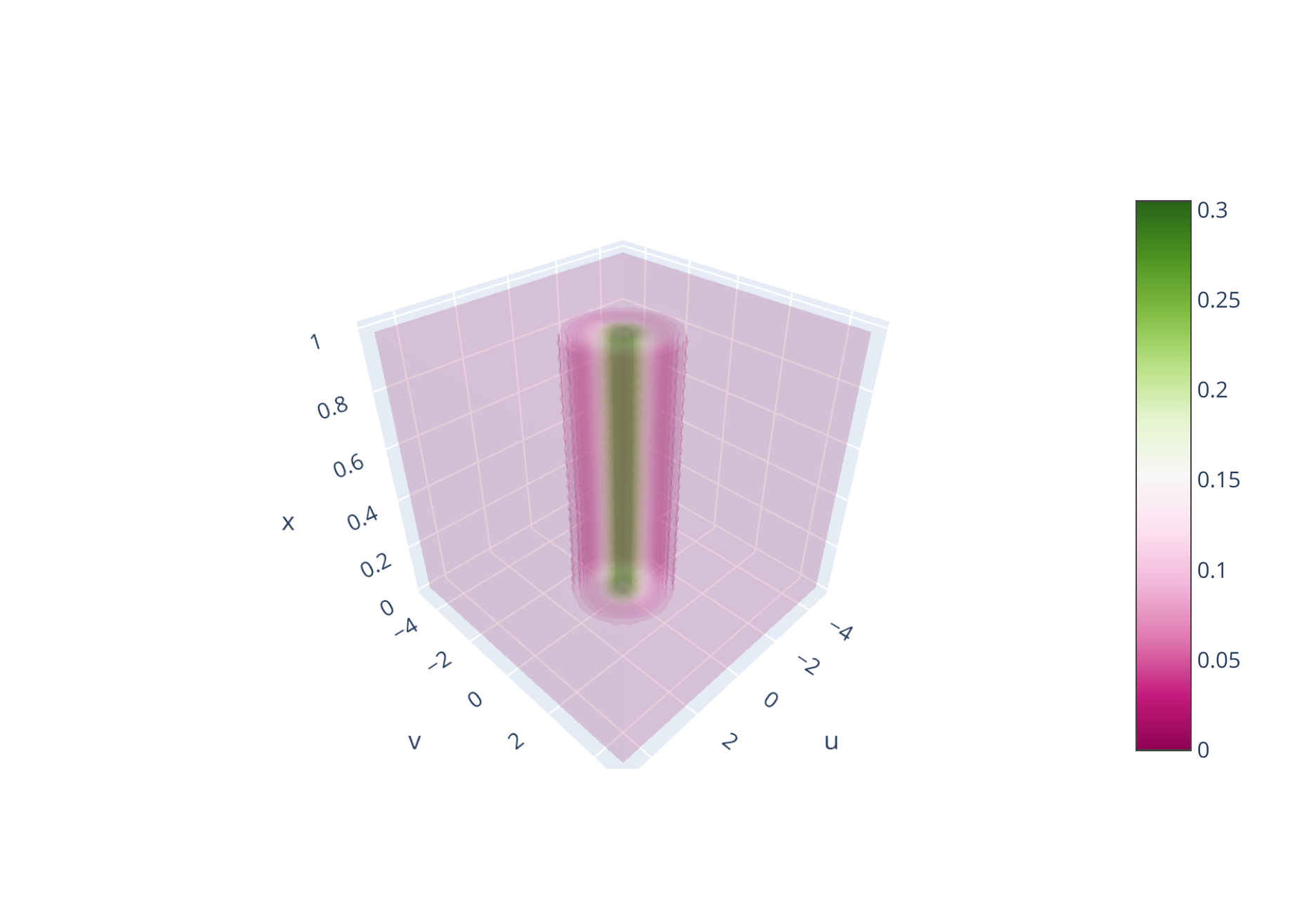}
	}
	\caption{Particle distribution functions along vertical (first row) and horizontal (second row) central lines of the cavity.}
    \label{fig:cavity pdf}
\end{figure}

\begin{figure}[htb!]
	\centering
	\subfigure[UBE]{
		\includegraphics[width=0.47\textwidth]{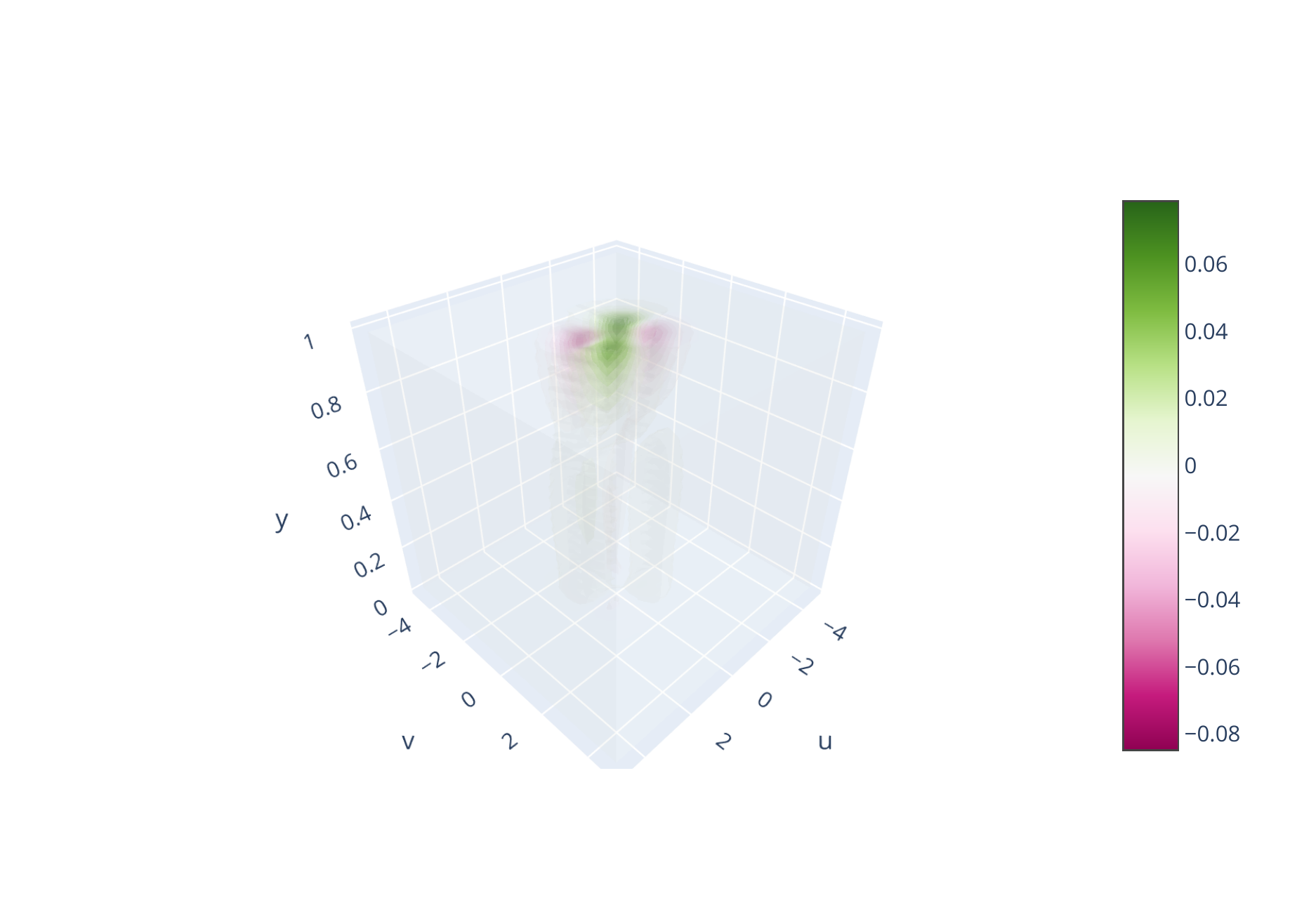}
	}
	\subfigure[BGK]{
		\includegraphics[width=0.47\textwidth]{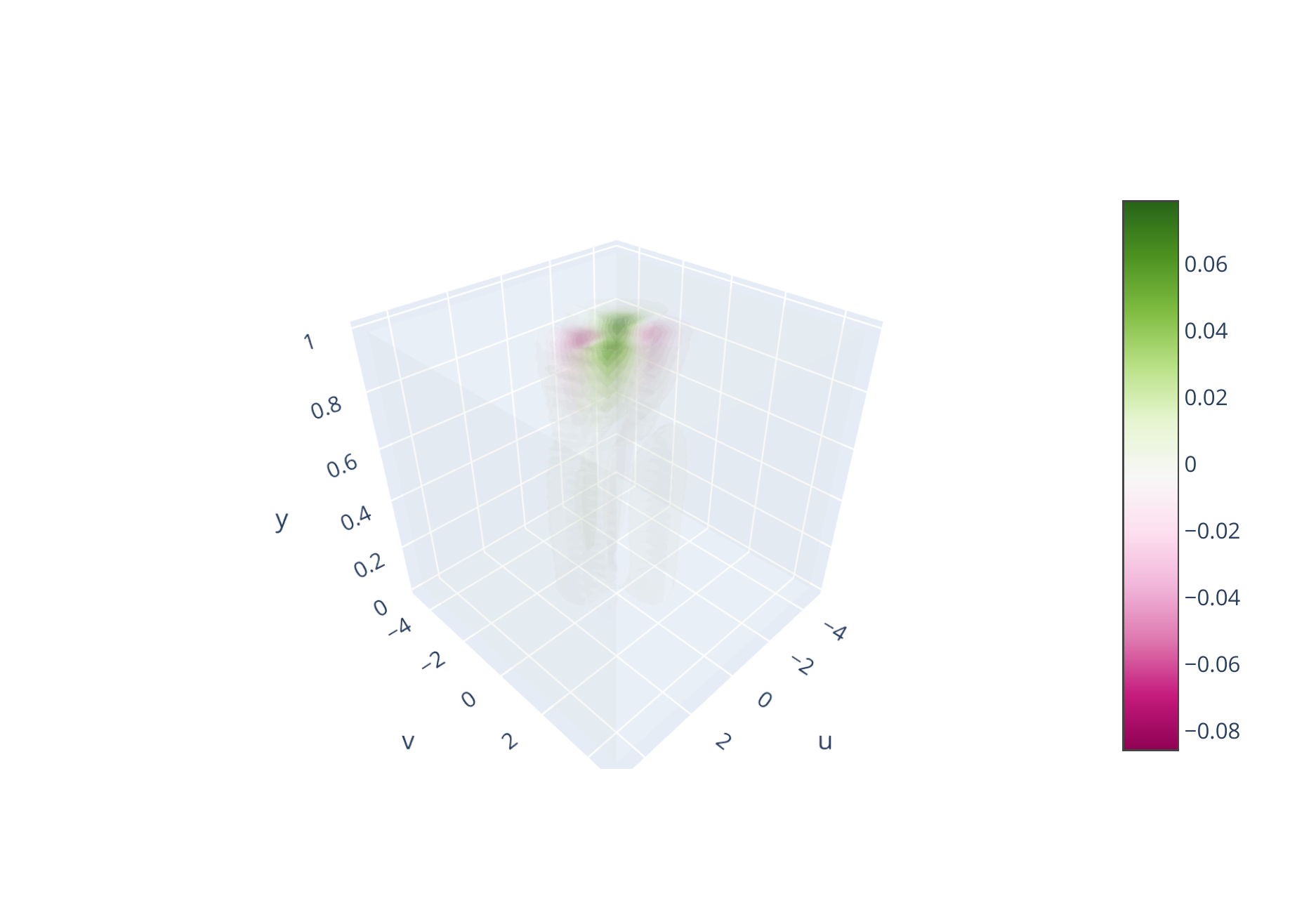}
	}
	\subfigure[UBE]{
		\includegraphics[width=0.47\textwidth]{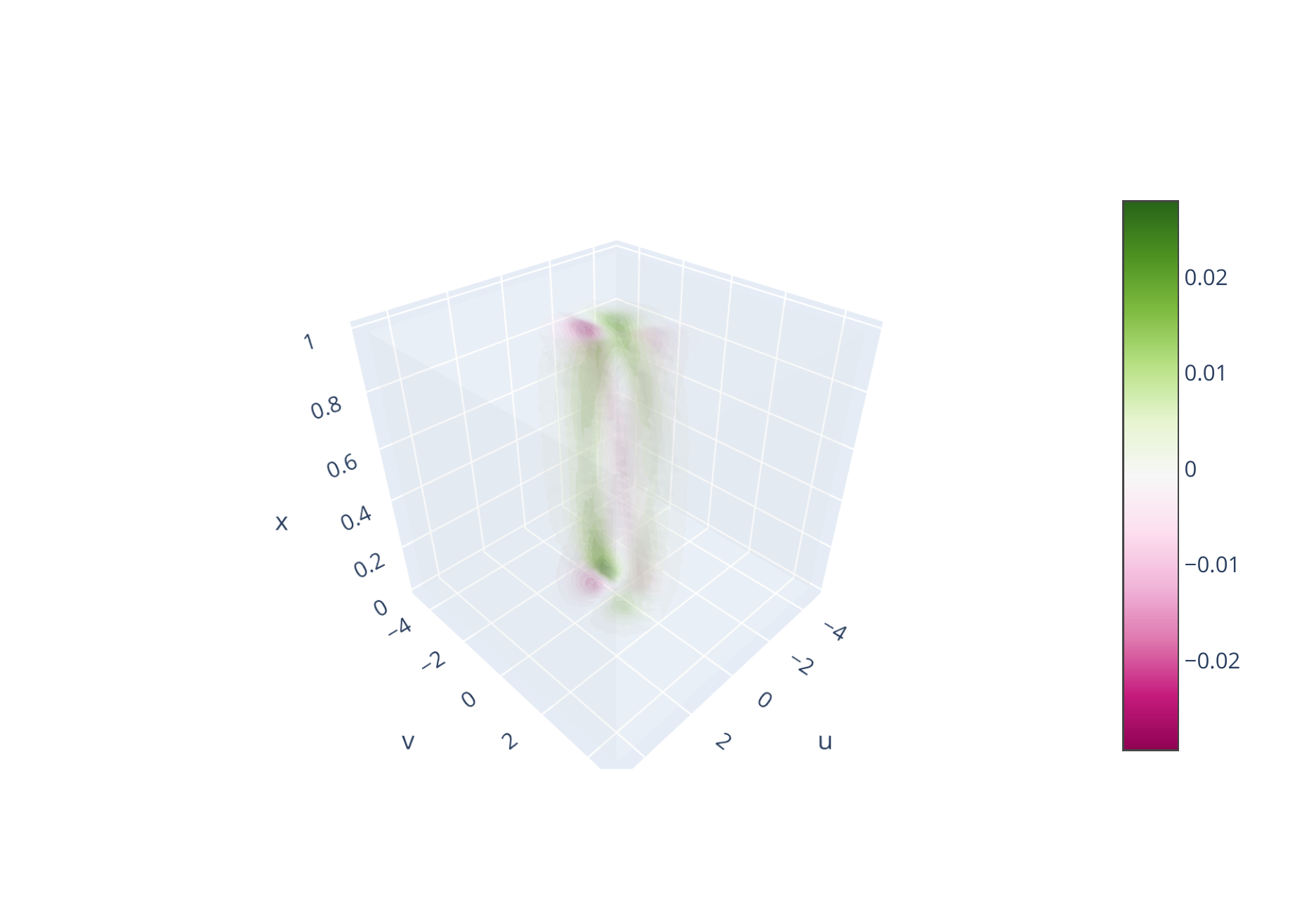}
	}
	\subfigure[BGK]{
		\includegraphics[width=0.47\textwidth]{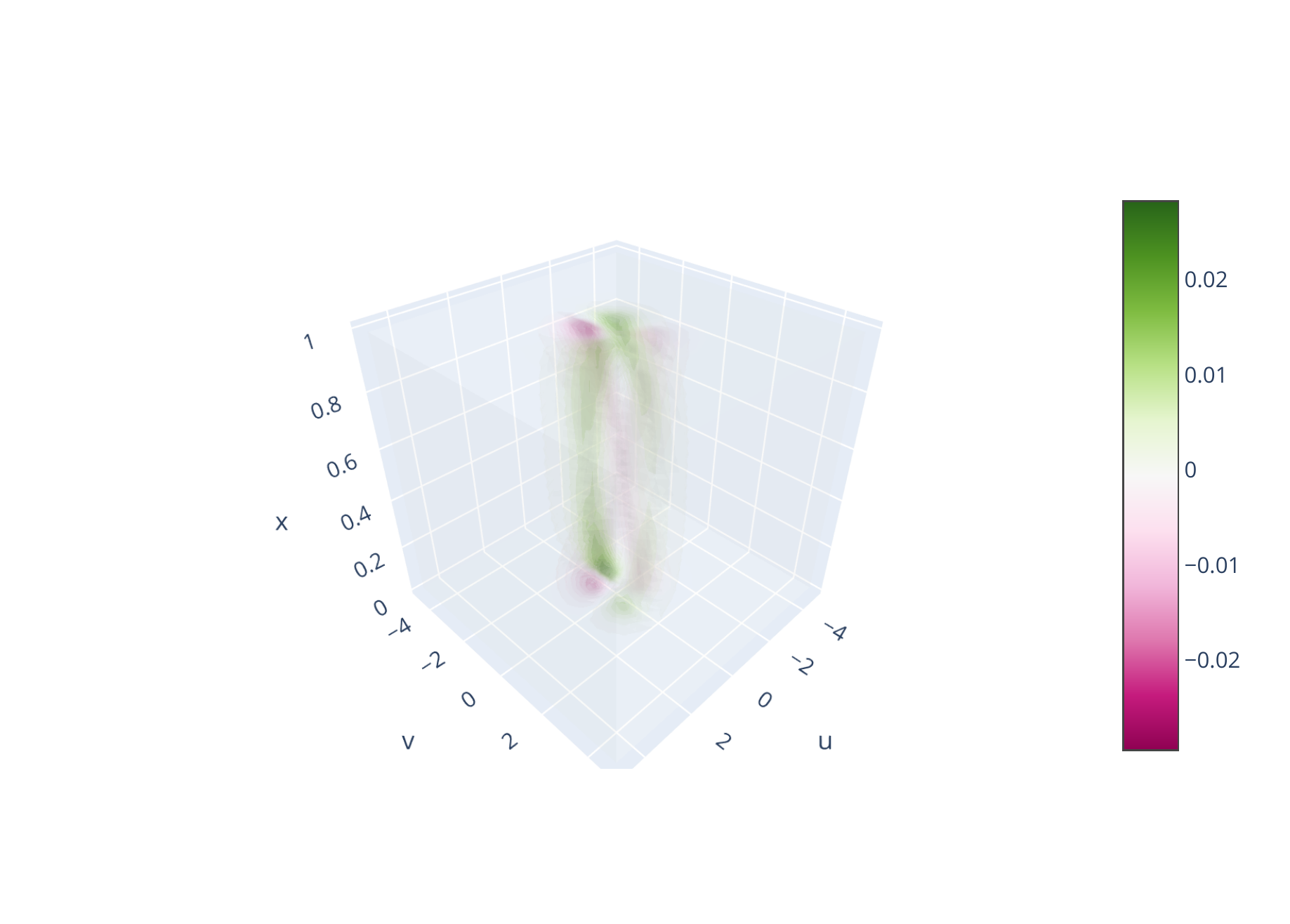}
	}
	\caption{Collision terms along vertical (first row) and horizontal (second row) central lines of the cavity.}
    \label{fig:cavity collision}
\end{figure}

\end{document}